\theoremstyle{definition}
\newtheorem{example}{Example}
\newtheorem{definition}{Definition}
\theoremstyle{plain}
\newtheorem{theorem}{Theorem}
\newtheorem{lemma}{Lemma}
\newtheorem{proposition}{Proposition}
\newtheorem{corollary}{Corollary}
\theoremstyle{remark}
\newtheorem{remark}{Remark}
\def\laweq{\buildrel \mathrm{d} \over =}
\theoremstyle{definition}
\def\N{\mathbb{N}}
\def\p{\mathbb{P}}
\def\E{\mathbb{E}}
\def\R{\mathbb{R}}
\def\d{\mathrm{d}}
\def\id{\mathds{1}}
\pgfplotsset{compat=1.18}
\title{Risk--insurance parity}
\author{Benjamin C\^ot\'e\thanks{Department of Statistics and Actuarial Science, University of Waterloo, Canada.   \texttt{b3cote@uwaterloo.ca}}\and  Ruodu Wang\thanks{Department of Statistics and Actuarial Science, University of Waterloo, Canada.   \texttt{wang@uwaterloo.ca}} \and  Qinyu Wu\thanks{Center for Algorithms, Data, and Market Design at Yale (CADMY), Yale University, CT.   \texttt{qinyu.wu@yale.edu}}}
\date{\today}
\begin{document}
	\maketitle
	\begin{abstract}
	Risk aversion and insurance are two prominent and interconnected concepts in economics and finance.
    To explore their fundamental connection, 
    we introduce risk--insurance parity, which associates various classes of insurance contracts with different notions of risk aversion. 
   We show that the classic notions---both weak  and strong---of risk aversion can be characterized by propensity to different classes of insurance contracts,  generalizing   recent results on   propensity to full, proportional, and deductible-limit contracts in the literature. We obtain full characterizations of the classes of insurance indemnity functions that correspond to weak and strong risk aversion. 
   Risk--insurance parity allows us to define two new notions of risk aversion, between weak and strong, characterized by insurance propensity to deductible-only and limit-only contracts respectively.  

\textbf{Keywords}:   Risk aversion, preference relations, insurance propensity,  proportional insurance, deductible-limit insurance.  
	\end{abstract}

 \noindent\rule{\textwidth}{0.5pt}

\section{Introduction}
Weak risk aversion, introduced by \cite{A63} and \cite{P64}, and strong risk aversion, introduced by \cite{RS70}, are classical concepts at the cornerstone of the theory of choice under risk. While both concepts coincide in the commonly employed expected utility (EU) decision theory framework, they are not always equivalent, for example in the dual utility model of \cite{Y87} and the more general rank-dependent utility (RDU) model; see \cite{C95} for a summary.
  Properly understanding risk aversion and its implications is of high relevance to economics, as it is the compulsion propelling an agent facing risk to buy insurance.  A risk-averse agent will indeed, by definition, seek to reduce their risk and prefer actions contributing to this. Such actions include the purchase of insurance, which by design transfers (a part of) uncertainty to the insurer. Risk aversion thus underlies the very purpose of insurance. 

To be precise, however, both definitions of risk aversion suppose individuals are compelled to buy insurance \textit{when sold at the pure premium}.  
In practice, insurance is never sold at actuarially fair premium. To infer risk aversion of an agent from behaviors in an insurance context is therefore not as straightforward as it may seem. An individual who accepts to pay a higher premium would definitely be categorized as being risk averse, but one who refuses to buy insurance may also be risk averse; only, the overpriced premium occults their behavior towards fairly priced insurance. To examine whether an individual participates in an insurance contract or abstains is thus not a reliable approach in investigating their risk aversion.  

Recently, \cite{MMWW25} took the approach of examining whether an individual prefers to participate in an insurance contract over any other equally distributed contracts; they qualify such an attitude as being \textit{insurance propense}. Proceeding as such thus dispels considerations on the fairness of premium and singles out the risk-reduction preferences. One of their results indicates that weak risk aversion is equivalent to insurance propensity to full-indemnity contracts, observed for every risk and every premium. 
Note the equivalence: Insurance propensity is no longer simply a behavioral consequence of risk aversion, it entirely characterizes it. 
\cite{MMWW25} also proved that strong risk aversion 
is equivalent to insurance propensity to all proportional-indemnity contracts, when observed for all risks, all proportions and all premiums. Strong risk aversion is also equivalent to insurance propensity to all deductible-limit contracts, when observed for all risks, all deductibles, all limits and all premiums.

Proportional-indemnity and deductible-limit contracts are quite different types of contracts, and it is perhaps surprising that insurance propensity to either is equivalent. This begs the question as to whether there are other sets of contracts under which insurance propensity is equivalent to weak or strong risk aversion. 
One also wonders whether the multiple layers of exhaustiveness (risks, premiums, proportions, deductibles, limits) are necessary for the risk aversion behaviors to emerge,
as such exhaustiveness makes experimental studies on the elicitation of preferences much more difficult or even infeasible. 

In Section~\ref{sec:first-second}, we answer both questions by characterizing the sets of contracts with respect to which insurance propensity is equivalent to risk aversion. We call this property \textit{risk--insurance parity}: \textit{first }risk--insurance parity when the underlying concept of risk is weak risk aversion, \textit{second }risk--insurance parity when it is strong risk aversion. 
We work with a general framework for risk preferences, with no assumptions on the preferences other than a form of continuity and choice under risk.
Risk--insurance parity is   defined to hold for \emph{all} continuous risk preferences.
 We show that full-indemnity contracts are the only types of contracts satisfying first risk--insurance parity, but the exhaustiveness of the premiums can be substantially relaxed. We also provide a complete characterization of contract sets satisfying second risk--insurance parity.
This characterization complements the results of \cite{MMWW25} by showing that the scope of contracts required to characterize risk aversion can be narrowed substantially, notably by fixing a premium or a premium principle.
Our narrowing results offer a deeper understanding of how insurance and risk aversion are connected,  and they  bring the results of \cite{MMWW25} closer to the behavioral situation that is observed in real life, where a policyholder is only presented one premium (and not \textit{all} premiums simultaneously) when purchasing insurance.

Beyond providing another perspective on classical concepts of risk aversion, risk--insurance parity offers a way to define new notions of risk aversion, by letting them arise from insurance propensity to specific classes of contracts. In Section~\ref{sect:Others}, we proceed to investigate insurance propensity to deductible-only and limit-only contracts. Each leads us, through risk--insurance parity, to define new concepts of risk aversion: respectively, right-handle risk aversion and left-handle risk aversion. These new concepts fall strictly between weak and strong risk aversion, and they are quite customary compared to these classical notions: Both have an interpretation in terms of mean-preserving spreads and, in the dual utility model, can be characterized by a criterion on the shape of the weighting function. 



The paper is organized as follows. In Section~\ref{sect:prelim}, we collect some preliminaries on the risk preferences.  Section~\ref{sect:contracts} introduces insurance contracts and insurance propensity. Sections~\ref{sec:first-second} and \ref{sect:Others} contain our main results mentioned above.  Section~\ref{sect:concluding-remarks} provides some additional technical remarks; for instance, we show that the notion of monotone risk aversion of \cite{Q92} is not equivalent to insurance propensity for any Lipschitz-continuous contract set.
Section~\ref{sec:concl} concludes. 
 

\section{Preliminaries on the theory of choice under risk}
\label{sect:prelim}

We fix an atomless probability  space $(\Omega, \mathcal{F}, \mathbb{P})$. Let $\mathcal{S}$ be the set of all simple random variables.\footnote{A random variable is simple if it only takes a finite number of values. } We will refer to $X\in\mathcal{S}$ as a risk random variable, where positive values represent actual losses and  negative values represent surpluses. Its distribution is given by the pushforward measure $\p\circ X^{-1}$.
We use $\E[X]$ to represent the expectation of $X$ under $\p$. For simplicity, we regard any real number $x\in\R$ as a constant (or degenerate) random variable. Let $\id_A$ denote the indicator function of a set $A$, that is, $\id_A(\omega) = 1$ if $\omega \in A$ and $\id_A(\omega) = 0$ otherwise. For any $x, y \in \R$, we use the notation: $x \wedge y = \min\{x, y\}$, $x \vee y = \max\{x, y\}$, and $x_+ = \max\{x, 0\}$.  For any $k\in\mathbb{N}$, let $[k] = \{1,\ldots,k\}$, and $[0] = \varnothing$. 
Throughout, terms such as  ``increasing'' and ``convex'' are always understood in the non-strict sense.

\begin{definition}[Risk preference]
A binary relation $\succsim$ on $\mathcal{S}$  is \emph{law invariant} if for any $X,Y\in \mathcal{S}$ with $X\laweq Y$, both $X\succsim Y$ and $Y \succsim X$ hold, where $\laweq$ signifies equality in distribution.
  A \emph{risk preference} is a law-invariant and transitive binary relation on $\mathcal{S}$.  
\end{definition}

Law invariance 
is also called probabilistic sophistication (\citealp{MS92}). In what follows, $\succsim$ will always be a risk preference.
We say a risk preference is \emph{continuous} if it is continuous under bounded convergence. Specifically, a sequence $\{X_n: n\in\mathbb{N}\} \subseteq \mathcal{S}$ is said to converge to $X \in \mathcal{S}$ in terms of bounded convergence, denoted by $X_n \stackrel{\rm B}{\to} X$, if $X_n \to X$ pointwise and $\sup_{n \in \mathbb{N}} \|X_n\| < \infty$, where $\|\cdot\|$ denotes the $L^\infty$ norm. A preference $\succsim$ is continuous if for any such sequences $\{X_n:n\in\N\}$ and $\{Y_n:n\in\N\}$ in $\mathcal{S}$ with $X_n \stackrel{\rm B}{\to} X$ and $Y_n \stackrel{\rm B}{\to} Y$, it holds that
  \begin{equation}
  X_n \succsim  Y_n  \quad \text{for all }n\in\mathbb{N} \quad \implies \quad X\succsim Y. 
  \label{eq:continuity}
  \end{equation}

To select bounded convergence as mode of convergence for the continuity of risk preferences is upheld by \cite{CM95}, notably.
 For complete risk preferences,\footnote{A risk preference is complete if for any $X,Y\in \mathcal S$, at least one of $X\succsim Y$ and $Y\succsim X$ holds.}  continuity ensures the existence of a continuous representing function (\citealp[Proposition 3.4]{H89}) and is equivalent to the sometimes preferred, generally weaker, assumption of hemicontinuity (\citealp[Lemma 3]{W54}). Continuity plays a part in being able to characterize risk aversion wholly from the insurance propensity to a restricted set of contracts. 
 In what follows, we always assume risk preferences are continuous; we continue to state it in formal results.   

Besides continuity, we do not make any further assumption on risk preferences; and thus, we remain in arguably a very general framework for choice under risk. In particular, we do not assume risk preferences are complete, meaning a decision-maker could be indecisive between two risks. This is the case, e.g., for mean-variance preferences (\citealp{M52}) and other incomplete decision models (e.g., \citealp{DMO04}).

Next, we present two classical notions of risk aversion. Below is the definition of weak risk aversion, introduced by \cite{A63} and \cite{P64}.  
\begin{definition}[Weak risk aversion]
\label{def:wra}
    A risk preference $\succsim$ exhibits \textit{weak risk aversion} if $\mathbb{E}[X]\succsim X$ for all $X\in\mathcal{S}$. 
\end{definition}

 Weak risk aversion means preferring sure payoffs over uncertain payoffs with the same expected value. 
One may however argue that riskiness is more complex a notion than a certainty--uncertainty dichotomy, and thus the definition of risk aversion should account for such nuances. 
This leads to the concept of strong risk aversion, as put forth in the seminal paper by \cite{RS70}. 
The definition of strong risk aversion is given in Definition~\ref{def:sra} below; it requires to introduce the convex order first. 

The convex order $\leq_{\rm cx}$ is a partial order on $\mathcal{S}$ defined by stating that 
\begin{align}\label{eq-defcxorder}
\mbox{$X\leq_{\rm cx}Y$ if $\mathbb{E}[\psi(X)]\leq \mathbb{E}[\psi(Y)]$ holds for every convex function $\psi:\mathbb{R}\mapsto\mathbb{R}$}.
\end{align}
One may consult \citet[Chapter~1]{MS02} and \citet[Chapter~3]{SS07} for an overview of the convex order and its properties. 
Note that we treat  $X$ and $Y$ as losses instead of surpluses, and hence a larger element in convex order means higher risk. 
Up to a sign flip,
	 convex order (for losses) is economically equivalent to concave order (for gains) in the sense that  $X \le_{\rm cx} Y$ 
	if and only if $-Y \le_{\rm cv} -X$ (and equivalently $Y \leq_{\rm cv} X$), 
	where $\le_{\rm cv}$ is defined via \eqref{eq-defcxorder} by changing convex functions to concave functions. Concave order is closely related to the concept of mean-preserving spread introduced by  \cite{RS70} in the context of increasing risk for gain distributions. 

\begin{definition}[Strong risk aversion]
    \label{def:sra}
    A risk preference exhibits \textit{strong risk aversion} if, for any $X, Y \in \mathcal{S}$ such that $X\leq_{\rm cx}Y$, we have $X\succsim Y$. 
\end{definition}

As signified by their names, strong risk aversion implies weak risk aversion; this is evident from their respective definitions, since $\mathbb{E}[X]\leq_{\rm cx} X$ for any $X\in \mathcal{S}$.

\section{Contracts and insurance propensity}
\label{sect:contracts}

We present the insurance setting upon which our results on insurance propensity are designed. We introduce indemnity functions, premiums and premium principles, building blocks of (insurance) contracts. A formal definition of insurance propensity closes the section, along with some additional comments on the setting assumptions. 

\subsection{Indemnity functions}
\label{sect:indemnity}
An indemnity function is a nonconstant increasing function $I:\mathbb{R}\to\mathbb{R}$ and represents the amount paid to the agent after risks have concretized. 
The random variable $I(X)$ is understood as $\omega\mapsto I (X(\omega))$. 
The elementary requirement for $I$ to be increasing ensures that indemnities are not reduced as losses aggravate.
Although typical insurance losses are nonnegative, we allow $I $ to be defined on and take values in $\R$, which should be seen as a natural extension from the practical range. Similarly, all our parameters, including the premiums, the deductibles, and the limits, are allowed to take values in $\R$.  
The reason of this extension is that concepts of risk aversion 
are defined for both positive and negative random variables, and therefore we cannot only restrict to positive losses; this is consistent with the setting of
 \cite{MMWW25}.  Conceptually, such an extension does not create troubles, as we work with bounded random variables, and we can shift negative values to positive values by adding a constant background wealth. To be precise, when the risk random variable $X$ takes negative values, we should interpret it as $X=Z-w$, where $Z\ge 0$ is the actual loss, and $w$ is the constant surplus of the decision maker.
 The   payment $I(X)$ should then be interpreted also as an indemnity on $Z$ with surplus $w$.  
 This interpretation fits well with our study, as all notions of risk aversion that we consider are  invariant under translation, allowing us to make constant shifts. 

Some examples of indemnity functions are presented below. 
\begin{enumerate}[nosep]
\item \textbf{Full.} When $I$ is the identity function, $I(x) = x$ for $x\in\mathbb{R}$, we call $I$  a \textit{full-indemnity function}; it pays the totality of a loss.  
\item \textbf{Proportional.} For  $\alpha\in[0,1]$, 
we call $I:x\mapsto\alpha x$ a \textit{proportional-indemnity function} and $\alpha$ is called the coinsurance proportion. 
Proportional indemnities are common for health insurance policies, which often let insureds pay a fraction of their expenses to deter overuse of health services.  
\item \textbf{Deductible.}  The function $I: x\mapsto (x-d)_{+}$ for $x\in\mathbb{R}$, with $d\in\mathbb{R}$, is called a \textit{deductible-indemnity function}, with $d$ being the deductible. Deductibles are common for insurance policies as they allow insurers to screen out small claims and avoid the operational cost of processing these. 
\item \textbf{Limit.} The function $I:x \mapsto x\wedge \lambda$, where $\lambda\in\mathbb{R}$, is referred to as a \textit{limit-indemnity function}, with $\lambda$ being the limit. Limits are included in insurance policies to protect the insurer from having to pay for very large claims which could compromise their solvency. 
\item \textbf{Deductible-limit.} Some indemnity schedules combine both deductible and limit: $I: x \mapsto (x-d)_+\wedge \lambda$, where $d\in\mathbb{R}$ and $\lambda>0$; we say it is a \textit{deductible-limit indemnity function}. 
Property insurance policies usually include both a deductible and a limit.  
\item \textbf{Fixed.}  For the function $I: x \mapsto \kappa\id_{\{x\geq \tau\}}$, where $\kappa, \tau\in\mathbb{R}$, we call $I$ a \textit{fixed-indemnity function}, $\kappa$ is the amount, and $\tau$ is the trigger.
Fixed-indemnity functions do not satisfy the so-called ``no-sabotage" requirement $|I(x)-I(y)|\leq |x-y|$ for all $x,y\in\mathbb{R}$. An agent facing a loss just below trigger $\tau$ is incentivized to fraudulently worsen it to receive payment $\kappa$ rather than nothing.
Fixed indemnities are seen for governmental aid during crises, when concern is less the ethics of payees than promptness and simplicity of procedures. Fixed indemnities are also common for critical illness insurance, where the policyholder has strong non-financial incentive for its risk not to worsen. 
\end{enumerate}

\subsection{Premium principles and contract sets }

A premium is a real number $\pi\in\mathbb{R}$ representing the amount paid by the agent to participate in a contract. Premiums are usually nonnegative because they are a loss for the agent; however, as for indemnities, we do not make this restriction. (An agent could \textit{be paid} to participate in a contract.) 

A premium principle is a functional $\rho: \mathcal{S}\mapsto \mathbb{R}$ that is law invariant.\footnote{We say a functional $\rho:\mathcal S\to\R$ is law invariant if $\rho(X)=\rho(Y)$ whenever $X\laweq Y$.}  It serves the same role as a premium; only, it accounts for the distribution of the risk underlying to the contract. 
Law invariance is a widely undisputed axiom for insurance pricing (\citealp{WYP97}), and all common insurance premium principles are law invariant (see e.g., the textbook \citealp{D17}). 
Although the literature abounds in recommendations for additional, desirable properties of premium principles (e.g., \citealp{D90b}, \citealp{WYP97}), we do not require any  for our analysis besides law invariance. We do not even suppose monotonicity. 
The absence of such requirements comes from the fact that we will be comparing identically distributed risks, and premium principles will only serve to bridge elements of $\mathcal{S}$ to one another.


A \emph{contract (function)}  is a mapping $C:\mathcal{S}\to \mathcal{S}$ with one of the following two forms, both common in insurance practice:
\begin{enumerate}
    \item[(a)]   $C(X) = \pi - I(X)$, with   an indemnity $I$ and  a premium $\pi$; such a contract will be called a  \emph{standard contract};
    \item[(b)]   $C(X) = \rho(I(X)) - I(X)$, with an indemnity $I$ and premium principle $\rho$; such a contract will be called a \emph{$\rho$-priced contract}. 
\end{enumerate}

Standard contracts are a special case of $\rho$-priced contracts where $\rho$ is simply a constant. Note that $\rho$-priced contracts use both the distribution of $X$ via the functional $\rho\circ I$ and its outcome via the function $I$. 

Contract functions induce certain dependence structures. 
 Two random variables $X,Y\in \mathcal{S}$ are \emph{comonotonic} when $(X(\omega_1)-X(\omega_2))(Y(\omega_1)-Y(\omega_2))\geq 0$ for all $\omega_1,\omega_2\in \Omega$.  Two random variables $X,Y\in \mathcal{S}$ are \emph{counter-monotonic} when $X$ and $-Y$ are comonotonic; see \cite{PW15} for a summary on these concepts.
Since indemnity functions are increasing,  $X$ and $C(X)$ are always counter-monotonic for any contract $C$.


A \emph{contract set}, which will be commonly denoted by 
 $\mathcal{C}$, is  a set of   contracts. It is called a \textit{standard contract set} if 
 all contracts in $\mathcal{C}$  are standard, and it is called a \textit{$\rho$-priced contract set} if
all contracts  in $\mathcal{C}$ are $\rho$-priced for the same $\rho$.   To be specific, a standard contract set has the form:
\begin{align*}
\mathcal C_{\rm std}=\left\{C: X\mapsto \pi-I(X)\mid\pi\in A,\, I\in\mathcal I\right\},
\end{align*}
and a $\rho$-priced contract set has the form:
\begin{align*}
\mathcal C_{\rho}=\left\{C: X\mapsto \rho(I(X))-I(X)\mid I\in\mathcal I\right\},
\end{align*}
where $A\subseteq \R$, and $\mathcal I$ is a set of nonconstant increasing functions from $\R$ to $\R$.

\subsection{Insurance propensity}

We say that a contract is an \textit{insurance} contract if its underlying random variable is precisely the risk faced by the agent who purchases it. Insurance propensity of an agent with risk preference $\succsim$ means that, with consideration of the risk they are currently facing, they always prefer participating in an insurance contract over any other identically distributed contract. 
 
\begin{definition}[Insurance propensity]
\label{def:inspropensity}
    A risk preference $\succsim$ exhibits \textit{insurance propensity} to a contract set $\mathcal{C}$ if 
    \begin{equation}
    X+C(X) \succsim X + C(Y)\quad \text{ for all }X,Y \in \mathcal{S} \text{ and } C\in  \mathcal{C} \text{ with }X\laweq Y.
    \label{eq:insprop-def}
    \end{equation}
\end{definition}

Our definition of insurance propensity is  slightly different to the formulation in \cite{MMWW25}, but they are shown to be mathematically equivalent;  see Section~\ref{sect:def-inspropensity}. Let us give a simple example to put insurance propensity in context.

\begin{example}
\label{ex:house}
 An agent owns a house, whose fire-hazard loss is represented by $X\in \mathcal{S}$. An insurance representative comes for a survey and, for various contracts, presents them the choice to insure their own home against fire or to choose anyone else's house and receive the indemnity amount if it burns down. Suppose all houses are identical.   
Suppose that we observe the actions of the agent when presented with different contracts:
\begin{itemize}
    \item $C_{\rm A} : X \mapsto 750 - X$, they chose to insure their house; 
    \item $C_{\rm B} : X \mapsto 1000 - 0.5 X$, they chose to insure their house;
    \item $C_{\rm C} : X \mapsto 60 - 500\id_{\{X \geq 2000\}}$, they chose to bet on their neighbor's house burning down. 
\end{itemize}
The agent tells the representative they would make the same choices no matter the distribution of $X$. With this information, the representative concludes that the agent is insurance propense to $\mathcal{C} = \{C_{\rm A},  C_{\rm B}\}$, but not to $\mathcal{C}^{\prime} = \{C_{\rm A},  C_{\rm B},  C_{\rm C}\}$.
\end{example}

Insurance propensity should not be understood as whether an individual wants to participate in a contract or not. It really is whether an individual prefers an insurance contract over identically distributed contracts. 
In Example~\ref{ex:house}, the house owner is not given the choice of abstaining in participating to any contract, an option they could prefer despite being risk averse. This decision rather depends on the degree of their risk aversion and whether the premium mirrors it adequately. Also, since $X$ is nonnegative in the example, contract $C_A$ is clearly more advantageous than $C_B$: however, let us insist that insurance propensity does not describe choice between different contracts but rather between different underlying risks applied to the same contract.  

The situation in Example~\ref{ex:house} is obviously fictional as, for legal reasons, one is prohibited to buy insurance for another person's property in most countries; this restriction helps deter arson. 

Recall that the random variables $X$ and $C(X)$ are counter-monotonic.  Insurance propensity can therefore be understood as preference for counter-monotonicity in a specific set of situations.
Our objective is to examine what this set must comprise for risk aversion to emerge. Eventually, what influences preferences is the dependence relation between $X$ and $C(Y)$, and how it compares with the dependence relation between  $X$ and $C(X)$. 
This approach notably circumvents the considerations on whether the premium in $C$ is actuarially fair, which is a highly non-trivial question; see, e.g., \cite{B80} and \cite{B85}. Such considerations rather relate to the marginal distribution of $C(Y)$, which is the same as $C(X)$. Because the suitability of a premium principle usually depends on the context of its application (for instance, see \citealp{C88}, \citealp{W00} and \citealp{GL11}),  to circumvent these considerations thus also contributes to let our analysis be applicable to a very general context of choice under risk. 


\section{Analysis of first and second risk--insurance parity}
\label{sec:first-second}

This section contains the first part of our main results, wherein we examine first and second risk--insurance parity. The full characterization of each, in Theorems~\ref{th:1st-r-i-p-standard}--\ref{th:2nd-r-i-p} below, allows for a 
deeper understanding of the close connection, put forth in \cite{MMWW25}, between insurance propensity and classical notions of risk aversion.

\subsection{First risk--insurance parity}
\label{sect:weak-premium}

Weak risk aversion is the focus of this subsection. We examine under which contract sets insurance propensity coincides with weak risk aversion.

\begin{definition}[First risk--insurance parity]
\label{def:1st-r-i-p}
We say that a contract set $\mathcal C$ satisfies \emph{first risk--insurance parity}
when for all continuous risk preferences, insurance propensity to $\mathcal{C}$ is equivalent to weak risk aversion.
\end{definition}

We characterize first risk--insurance parity for standard contract sets in the following theorem. 
\begin{theorem}
\label{th:1st-r-i-p-standard}
    A standard contract set satisfies first risk--insurance parity if and only if it comprises  full-indemnity contracts whose premiums form a dense set in $\mathbb{R}$.
\end{theorem}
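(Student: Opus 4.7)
The plan is to prove the two directions separately. For the sufficient direction, suppose $\mathcal{C}=\{X\mapsto\pi-X:\pi\in A\}$ with $A$ dense in $\R$. The implication from weak risk aversion to propensity is immediate: for $X\laweq Y$ and $C\in\mathcal{C}$ with premium $\pi\in A$, $X+C(X)=\pi$ is a constant equal to $\E[X+C(Y)]$, and weak risk aversion applied to $X+C(Y)$ yields $\pi\succsim X+C(Y)$. For the converse, I would first use continuity to upgrade propensity on $A$ to propensity on all of $\R$: given $\pi\in\R$, pick $\pi_n\in A$ converging to $\pi$, and apply \eqref{eq:continuity} to the bounded-convergent sequences $\pi_n\stackrel{\rm B}{\to}\pi$ and $X+\pi_n-Y\stackrel{\rm B}{\to}X+\pi-Y$. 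I would then invoke the characterization of \cite{MMWW25} that propensity to all full-indemnity contracts (over all premiums) is equivalent to weak risk aversion.

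For the necessary direction, I would argue the contrapositive by constructing, in each failing case, a continuous risk preference witnessing a violation of parity. In Case~A, where $\mathcal{C}$ contains some contract $C_0:X\mapsto\pi_0-I_0(X)$ with $x\mapsto I_0(x)-x$ non-constant, I would define $\succsim$ by $X\succsim Y$ iff $X\laweq Y$, or $X$ is constant with $\E[X]\leq\E[Y]$. Transitivity and law-invariance are direct; continuity follows by partitioning any witnessing sequence $X_n\succsim Y_n$ according to which disjunct holds at each index, since the first disjunct passes to the limit via bounded convergence of characteristic functions while the second passes via bounded convergence of constants and expectations. Weak risk aversion holds by the second disjunct applied to the constant $\E[X]$. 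To exhibit a propensity failure at $C_0$, I would locate $x_1<x_2$ with $I_0(x_1)\neq I_0(x_2)$ and $I_0(x_1)-x_1\neq I_0(x_2)-x_2$ (which exist because $I_0$ and $I_0(x)-x$ are both non-constant), take $X$ uniform on $\{x_1,x_2\}$ and $Y\laweq X$ independent of $X$; then $X+C_0(X)$ is non-constant, and a variance comparison exploiting the comonotonicity of $X$ and $I_0(X)$ shows that $X+C_0(X)$ and $X+C_0(Y)$ have distinct variances and hence distinct distributions, so both disjuncts of propensity fail.

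In Case~B, where $\mathcal{C}$ consists only of full-indemnity contracts but the premium set $A$ is not dense, I would pick an open interval $(a,b)\subseteq\R\setminus\overline{A}$ and modify the Case~A preference by additionally requiring $X\notin(a,b)$ in the second disjunct. Transitivity, law-invariance, and continuity (now using closedness of $\R\setminus(a,b)$) carry over. Propensity to $\mathcal{C}$ holds because every $\pi\in A$ is a constant lying outside $(a,b)$. Yet any non-constant $X\in\mathcal{S}$ with $\E[X]\in(a,b)$ witnesses failure of weak risk aversion, since both disjuncts of $\E[X]\succsim X$ are violated. The main obstacle throughout will be verifying continuity of these bespoke preferences, which hinges on a disjunct-by-disjunct subsequence argument along any bounded-convergent sequence; a secondary delicate point in Case~A is confirming the existence of suitable $x_1,x_2$ even when $I_0$ has flat regions.
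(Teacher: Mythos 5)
Your proposal is correct and follows essentially the same route as the paper: the trivial implication plus the decomposition machinery of \cite{MMWW25} (with continuity of $\succsim$ absorbing the density of the premium set) for sufficiency, and bespoke continuous preferences of the form ``$X\laweq Y$ or $X$ is a suitable constant'' for necessity. As a minor bonus, your independent-copy variance argument explicitly rules out the $\laweq$ disjunct when defeating propensity in Case~A, a step the paper's own Case-1 argument leaves implicit.
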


The main reason why one needs to consider all $\pi$ in a $\mathbb{R}$-dense set is that full-indemnity contracts offer no flexibility in relying on the full space of $\mathcal{S}$: all randomness is removed after one participates in an insurance contract. Hence, to recover the weak risk aversion behavior, one needs to fish out the specific contract whose premium is the expectation of the risk underlying the contract. A solution to this issue is to consider a $\rho$-priced contract set, rather than a standard contract set, to be able to exploit the exhaustiveness of random variables in reducing the amount of contracts needed for first risk--insurance parity.  

We introduce the \emph{range property} of a premium principle $\rho$:
\begin{equation}
     \tag{RP} \label{eq:propertyA} 
     \{\rho(X+a) : a\in\mathbb{R}\} \mbox{ is dense in $\R$ for each }X\in \mathcal{S}.
\end{equation} 
This property is very weak. In particular,  \eqref{eq:propertyA} is weaker than cash-additivity,\footnote{A premium principle $\rho$ is cash-additive if $\rho(X+b) =\rho(X)+b$  for all $b\in\mathbb{R}$ and all $X\in \mathcal S$.} a property often argued to be desirable for premium principles, see e.g.~\citet[Section 2.2.3]{DDGK05}. All monetary risk measures introduced by \cite{FS16} satisfy cash additivity and hence \eqref{eq:propertyA};  indeed they satisfy the stronger property that $\{\rho(X+a):a\in \R\}=\R$ for each $X\in \mathcal{S}$.  

\begin{theorem}
\label{th:1st-r-i-p-priced}
    A $\rho$-priced contract set $\mathcal{C}$ satisfies first risk--insurance parity if it comprises solely a full-indemnity contract and the premium principle $\rho$  satisfies the range property \eqref{eq:propertyA}. 
\end{theorem}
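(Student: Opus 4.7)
The plan is to prove both directions of the equivalence. The forward direction (weak risk aversion implies insurance propensity) is immediate: with $C(X) = \rho(X) - X$ the unique contract in $\mathcal{C}$ and $X \laweq Y$ in $\mathcal{S}$, the law invariance of $\rho$ yields $\rho(X) = \rho(Y)$, so $X + C(X) = \rho(X)$ is deterministic while $X + C(Y) = X - Y + \rho(X)$ has mean $\rho(X)$; weak risk aversion applied to $X + C(Y)$ then gives $\rho(X) \succsim X + C(Y)$, which is exactly the insurance-propensity inequality.

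For the converse, fix $Z \in \mathcal{S}$ and set $W = Z - \E[Z]$. The strategy is to construct sequences $X_n, Y_n \in \mathcal{S}$ with $X_n \laweq Y_n$, $\rho(X_n) \to \E[Z]$, and $X_n - Y_n \stackrel{\rm B}{\to} \tilde W$ for some $\tilde W \laweq W$. Insurance propensity then yields $\rho(X_n) \succsim (X_n - Y_n) + \rho(X_n)$ for each $n$; continuity of $\succsim$ passes to the limit, giving $\E[Z] \succsim \tilde W + \E[Z]$; and since $\tilde W + \E[Z] \laweq Z$, law invariance delivers $\E[Z] \succsim Z$, i.e., weak risk aversion. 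The construction goes in three steps: (a) approximate the distribution of $W$ by simple $W^{(n)}$ with rational values $w_i^{(n)} \to w_i$ and rational probabilities $p_i^{(n)} \to p_i$ satisfying $\sum p_i^{(n)} = 1$ and $\sum p_i^{(n)} w_i^{(n)} = 0$, using density of rational points in the variety $\{\sum p_i = 1,\ \sum p_i w_i = 0\}$ (parameterize $p_k, w_k$ from the others); (b) for each $W^{(n)}$ apply the cyclic construction: partition $\Omega$ into $N_n$ equal-measure pieces (where $N_n$ is the common denominator of the $p_i^{(n)}$), arrange the values of $W^{(n)}$ in a sorted cyclic sequence around the pieces, and set $X_0^{(n)}$ equal to the running partial sum $s_j$ on the $j$-th piece and $Y_0^{(n)}$ to the cyclic shift of $X_0^{(n)}$, with the mean-zero condition closing the cycle and yielding $X_0^{(n)} \laweq Y_0^{(n)}$ together with $X_0^{(n)} - Y_0^{(n)} = W^{(n)}$; (c) invoke the range property to pick $a_n \in \R$ with $\rho(X_0^{(n)} + a_n) \to \E[Z]$ and set $X_n = X_0^{(n)} + a_n$, $Y_n = Y_0^{(n)} + a_n$, which preserves $X_n \laweq Y_n$ and $X_n - Y_n = W^{(n)}$.

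The main obstacle is obtaining pointwise (and not merely distributional) convergence of $X_n - Y_n = W^{(n)}$, as demanded by the continuity of $\succsim$. This is arranged by coupling all of the constructions to a single uniform random variable $U$ on $\Omega$ (available by atomlessness): take the cyclic partition at stage $n$ to consist of the sub-intervals $\{U \in [(j-1)/N_n, j/N_n)\}$ and sort the values of $W^{(n)}$, so that $W^{(n)} = F_{W^{(n)}}^{-1}(U)$ holds pointwise on $\Omega$. Since $F_{W^{(n)}} \to F_W$ weakly and $F_W$ has only finitely many jumps, $F_{W^{(n)}}^{-1}(U) \to F_W^{-1}(U) =: \tilde W$ almost surely, with $\tilde W \laweq W$, providing the required bounded convergence and completing the argument.
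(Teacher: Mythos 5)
Your proof is correct and follows essentially the same route as the paper's: approximate the risk, decompose it in law as $Z-W+\E[X]$ with $Z\laweq W$, use the range property to align the premium with the mean, and conclude via insurance propensity, law invariance, and continuity; your cyclic partial-sum construction is exactly the content of the lemma of Maccheroni et al.\ that the paper cites for this decomposition, so you have merely made the cited steps self-contained. One small point in your favor: the paper asserts an exact solution $\rho(Z_n+b_n)=\E[X_n]$, which the range property (a density statement) does not literally provide, whereas your choice of $a_n$ with $\rho(X_0^{(n)}+a_n)\to\E[Z]$ followed by a continuity pass is the more careful reading.
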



 The range property~\eqref{eq:propertyA} allows for the vastness of considered random variables to be translated to the premiums and hence allows to get rid of the exhaustiveness of premiums of Theorem~\ref{th:1st-r-i-p-standard}. The situation in Theorem~\ref{th:1st-r-i-p-priced} is closer to the reality of a policyholder than Theorem~\ref{th:1st-r-i-p-standard} since for any risk a policyholder insures, they will only be proposed one premium by the company (often a function of the distribution of risk --- i.e., the contract is $\rho$-priced). The policyholder makes their decision in consideration of that premium and not of all premiums which could have been proposed.

\begin{example}
   An agent is given the choice to fully insure their own house against fire or to bet on anyone's house burning down. Assume that all houses are identical and that the agent is not an arsonist. The insurance company offers the same price for each contract: the 95\%-quantile of the loss. If no matter the value of the houses, the probability of fire or the possible damages, the agent always prefers to insure their own home, we deduce that they are weakly risk averse, since $\mathcal{C} = \{C: X \mapsto  \mathrm{VaR}_{0.95}(X)-X\}$ satisfies first risk--insurance parity by Theorem~\ref{th:1st-r-i-p-priced}. A similar deduction would not have been possible if, say, participating in the contracts was free in all situations (e.g., due to a government program). Indeed, $\mathcal{C} = \{C: X\mapsto -X \}$ does not satisfy first risk--insurance parity by Theorem~\ref{th:1st-r-i-p-standard}.   
\end{example}

 \begin{remark} We are close to an ``only if" statement in Theorem~\ref{th:1st-r-i-p-priced}. As we have established in the proof of Theorem~\ref{th:1st-r-i-p-standard}, weak risk aversion may only effectively induce insurance propensity to full-indemnity contracts, no matter whether standard or $\rho$-priced. The issue rather comes from requiring \eqref{eq:propertyA} to hold for all $X\in \mathcal{S}$, since, in the proof of Theorem~\ref{th:1st-r-i-p-priced}, we only need it to hold for the set $\mathcal{Z}$ of random variables where $\mathcal{Z}$ is the smallest set satisfying
 \begin{equation}
 \text{for every }X\in \mathcal{S}\text{, there exist }Z,W\in\mathcal{Z}\text{ such that }X\laweq Z - W + \mathbb{E}[X]\text{ and }Z\laweq W.
   \label{eq:ZZZ}
 \end{equation} 
 Adding an ``only if" statement would require a complete analysis of the set $\mathcal{Z}$, which is beyond the scope of this paper. We leave it as an open question. In particular, we remark the complexity of such a task by noting that $\mathcal{Z}$ is strictly smaller than $\mathcal{S}$:  degenerate random variables, for example, are not required to be in $\mathcal{Z}$ since any $Z,W \in \mathcal{Z}$, $Z\laweq W$, can yield $Z-W = 0$ by taking $Z$ and $W$ to be comonotonic. There may also not be a single smallest set $\mathcal{Z}$, and thus one would be required to rather examine the set $\mathfrak{Z}$ of all possible sets $\mathcal{Z}$ satisfying \eqref{eq:ZZZ}.
 It is then easy to see that, if \eqref{eq:propertyA} does not hold for any $\mathcal{Z}\in\mathfrak{Z}$ (with $\mathcal{Z}$ replacing $\mathcal{S}$), then there exists some $b\in\mathbb{R}$ which may not be approached by a sequence of full-indemnity insurance contracts. As a result,  for some $X\in\mathcal{S}$ with $\mathbb{E}[X]=b$, the relation $\mathbb{E}[X]\not\succsim X$ will be true for some insurance propense $\succsim$, by following a reasoning similar to the one in the proof of Theorem~\ref{th:1st-r-i-p-priced}.
 \end{remark}

Insurance propensity to full-indemnity contracts does not suffice to characterize strong risk aversion. This does not come as a surprise: it is well-known in the economics literature that weak and strong risk aversion coincide only under some specific settings, see, e.g., \cite{C95}.
Let us present an example of a risk preference that is insurance propense to full-indemnity contracts but not strongly risk averse. 
 \begin{example} 
     Let $X\succsim Y$ if and only if $\mathrm{Var}(X)=0$ or $X\laweq Y$. It is easy to show that $\succsim$ satisfies reflexivity, transitivity, law invariance and continuity. Moreover, $\succsim$ is insurance propense to full-indemnity contracts since any premium has zero variance. Taking any $X,Y\in \mathcal{S}$ with $X\leq_{\rm cx}Y$ that are non-degenerate and not identically distributed, we have $X\not\succsim Y$, and hence $\succsim$ is not strongly risk averse.   
 \end{example}


\subsection{Second risk--insurance parity}
\label{sect:strong-premium}

We now turn to strong risk aversion. We examine under which contract sets insurance propensity coincides with strong risk aversion.

\begin{definition}[Second risk--insurance parity]
We say that a contract set $\mathcal C$ satisfies \emph{second risk--insurance parity}
when  for all continuous risk preferences, insurance propensity to  $\mathcal{C}$ is equivalent to strong risk aversion.
\end{definition}

A characterization of all contract sets satisfying second risk--insurance parity is provided in the next theorem. 
For a contract set $\mathcal C$
and $X,Y\in \mathcal S$, 
we say that $\mathcal C$ \emph{connects} $(X,Y)$
if for some $k\in\N\cup\{0\}$
there exist a sequence of contracts $\{C_{i}: i\in [k]\}\subseteq \mathcal{C}$  and two sequences of random variables $\{Z_{i} : i\in [k]\}\subseteq \mathcal{S}$ and $\{W_{i} : i\in[k]\}\subseteq \mathcal{S}$ such that $W_i \laweq Z_i$ for every $i\in[k]$, and $X = Z_1 + C_1(Z_1)$, $Y=Z_k + C_k(W_k)$, and 
\begin{equation}
    Z_{i+1} + C_{i+1}(Z_{i+1}) =  Z_{i} + C_i(W_i)\quad \text{for every }i\in[k-1]. 
    \label{eq:cond-2-A}
\end{equation}
Intuitively, the above conditions mean that $X$ and $Y$ are linked via a sequence of random variables expressible in the form ``risk plus contract". 
Equation~\eqref{eq:cond-2-A} states that each non-insurance position is equal to the next insurance position in this sequence. 


Define 
$\succsim^{\mathcal C}$
    by 
   \begin{equation*}
    X\succsim^{\mathcal C} Y \iff \text{$\mathcal{C}$ connects $(X,Y)$}
    \end{equation*}  
    and let 
$\succsim^{\mathcal C}_*$ 
be the smallest transitive and closed superset of $\succsim^{\mathcal C}$ (that is, it is the intersection of all transitive and closed supersets of $\succsim^{\mathcal C}$).\footnote{Recall that any relation on $\mathcal{S}$ can be viewed as a set on $\mathcal{S}\times \mathcal{S}$. A superset of $A \in \mathcal{S}\times\mathcal{S}$ is any set containing all elements of $A$. As we work with bounded convergence for our definition of continuity, the closure of $A$ is understood in terms of the topology induced by bounded convergence. Note that the the intersection of transitive relations yields a transitive relation; the intersection of closed sets is a closed set.}

\begin{theorem}
\label{th:2nd-r-i-p}  A contract set $\mathcal{C}$ satisfies second risk--insurance parity if and only if, for all $X,Y\in\mathcal{S}$, $X\leq_{\rm cx}Y$ implies
$X\succsim^{\mathcal C}_*Y$.
\end{theorem}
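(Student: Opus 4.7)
I would treat the two directions of the equivalence separately.

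\textbf{Sufficiency.} Assuming $X \leq_{\rm cx} Y \Rightarrow X \succsim_*^{\mathcal{C}} Y$ for all $X,Y\in\mathcal{S}$, I would derive second risk--insurance parity in two steps. First, strong risk aversion implies insurance propensity to $\mathcal{C}$ by the familiar convex-order comparison: for any $C \in \mathcal{C}$ and $X \laweq Y$, the pair $(X, C(X))$ is counter-monotonic and $C(X) \laweq C(Y)$ (using law invariance of $\rho$ for $\rho$-priced contracts), so the classical result that counter-monotonic coupling minimises convex order gives $X + C(X) \leq_{\rm cx} X + C(Y)$, and strong risk aversion yields $X + C(X) \succsim X + C(Y)$. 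Second, for the reverse implication let $\succsim$ be continuous and insurance propense to $\mathcal{C}$. Unrolling the connection definition shows $\succsim \supseteq \succsim^{\mathcal{C}}$: each step $Z_i + C_i(Z_i) \succsim Z_i + C_i(W_i)$ follows from insurance propensity (using $W_i \laweq Z_i$), and the compatibility condition \eqref{eq:cond-2-A} together with transitivity stitch these into $X \succsim Y$. Since $\succsim$ is transitive and closed under bounded convergence, minimality gives $\succsim_*^{\mathcal{C}} \subseteq \succsim$, and the hypothesis delivers $X \leq_{\rm cx} Y \Rightarrow X \succsim Y$.

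\textbf{Necessity.} Assuming second risk--insurance parity, my plan is to verify that $\succsim_*^{\mathcal{C}}$ is itself a continuous risk preference insurance propense to $\mathcal{C}$; the hypothesis will then force it to be strongly risk averse, which is precisely the desired conclusion. Insurance propensity of $\succsim_*^{\mathcal{C}}$ is immediate by selecting $k = 1$, $C_1 = C$, $Z_1 = X$, $W_1 = Y$ in the connection definition, so $(X + C(X), X + C(Y))$ lies in $\succsim^{\mathcal{C}} \subseteq \succsim_*^{\mathcal{C}}$ whenever $X \laweq Y$, and transitivity together with closedness under bounded convergence are built into the definition of $\succsim_*^{\mathcal{C}}$ as the smallest transitive closed superset of $\succsim^{\mathcal{C}}$.

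The main obstacle is establishing law invariance of $\succsim_*^{\mathcal{C}}$, i.e., $X \laweq Y \Rightarrow X \succsim_*^{\mathcal{C}} Y$. I would use atomlessness of $(\Omega, \mathcal{F}, \P)$: for simple $X \laweq Y$ there is a measure-preserving bijection $\sigma$ of $\Omega$ with $Y = X \circ \sigma$, and the connection relation $\succsim^{\mathcal{C}}$ is invariant under the pointwise action of $\sigma$ on both coordinates (both standard and $\rho$-priced contracts commute with $\sigma$, the latter via the law invariance of $\rho$). The non-trivial task is to bootstrap this automorphism invariance, combined with the closure and transitivity built into $\succsim_*^{\mathcal{C}}$ and the freedom on an atomless probability space to realize arbitrary joint distributions, into a genuine chain (or bounded-convergence limit of chains) connecting $X$ to $X \circ \sigma$. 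A backup strategy, should direct construction prove elusive, is to consider the smallest continuous risk preference $\succsim_0$ insurance propense to $\mathcal{C}$---the intersection of all such, which inherits law invariance, transitivity, closedness and insurance propensity, hence is strongly risk averse by hypothesis---observe that $\succsim_*^{\mathcal{C}} \subseteq \succsim_0$, and argue via an atomless-space coupling argument that the two must coincide. Once law invariance is in hand, $\succsim_*^{\mathcal{C}}$ is a continuous risk preference insurance propense to $\mathcal{C}$, and second risk--insurance parity closes the proof.
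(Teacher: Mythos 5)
Your overall strategy is the same as the paper's: strong risk aversion gives insurance propensity via the counter-monotonicity/convex-order comparison; insurance propensity gives strong risk aversion because any continuous insurance-propense $\succsim$ is a transitive, closed superset of $\succsim^{\mathcal C}$ and hence contains $\succsim^{\mathcal C}_*$; and the necessity direction uses $\succsim^{\mathcal C}_*$ itself as the witness preference. The sufficiency direction as you wrote it is complete and matches the paper's argument essentially line by line.

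The necessity direction, however, is left with an acknowledged hole: you never establish that $\succsim^{\mathcal C}_*$ is law invariant, and without that it is not a risk preference in the paper's sense, so second risk--insurance parity cannot be applied to it. Neither of your two strategies closes this. The automorphism-invariance route only shows that $\succsim^{\mathcal C}$ is stable under measure-preserving bijections acting on both coordinates; it does not produce the pair $(X, X\circ\sigma)$ inside the relation, and transitive and topological closure do not manufacture such pairs either (for a full-indemnity contract set, every pair in $\succsim^{\mathcal C}$ coming from a nonempty chain has a degenerate first coordinate, and this persists under both closures). The backup via $\succsim_0$, the intersection of all continuous insurance-propense risk preferences, is circular: one always has $\succsim^{\mathcal C}_*\subseteq\succsim_0$, but the reverse inclusion holds precisely when $\succsim^{\mathcal C}_*$ is itself law invariant, which is the statement you are trying to prove; and strong risk aversion of the larger relation $\succsim_0$ does not transfer down to $\succsim^{\mathcal C}_*$. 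The intended resolution is definitional rather than constructive: the $k=0$ case in the definition of ``$\mathcal C$ connects $(X,Y)$'' is meant to cover the empty chain, i.e., all pairs with $X\laweq Y$, so that $\succsim^{\mathcal C}$ already contains every law-equivalence pair; law invariance then passes automatically to every superset, in particular to $\succsim^{\mathcal C}_*$, and no bootstrapping is needed. (The paper itself is terse on this point, asserting only that $\succsim^{\mathcal C}_*$ is a continuous insurance-propense risk preference ``by definition''.) With that reading in place, your necessity argument goes through exactly as the paper's does.
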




  Theorem~\ref{th:2nd-r-i-p} applies to both standard and $\rho$-priced contract sets.
Although Theorem~\ref{th:2nd-r-i-p} provides a complete characterization of second risk–insurance parity, the condition therein is abstract and may be difficult to verify for a given contract set, since one has no guidance in how to construct the sequence of insurance/non-insurance positions from \eqref{eq:cond-2-A}. To address this, we present sufficient conditions in Lemma \ref{lem:technical-conditions} below that are more transparent and practically verifiable. This condition traces the construction of mean-preserving spreads, which underlies the characterization of convex order as established in \cite{RS70}. The definition of mean-preserving spreads follows.

\begin{definition}[Mean preserving spreads]
\label{def:mps-def}
    We say that a random variable $Y\in \mathcal{S}$ is a \emph{mean-preserving spread} of $X\in \mathcal{S}$ when
\begin{equation}
\label{eq:mps-def}
Y \laweq X - \delta_1 \id_{A_1} + \delta_2 \id_{A_2}
\end{equation}
for some $\delta_1, \delta_2 > 0$ and $A_1,A_2\in\mathcal{F}$ such that $ \delta_1\p(A_1)=\delta_2 \p(A_2)$ and $X(\omega_1) \leq X(\omega_2)$ for all $\omega_1 \in A_1$ and $\omega_2\in A_2$. A mean preserving spread is \textit{simple} if $\delta_1 = \delta_2$. 
\end{definition}

For  $X,Y\in\mathcal{S}$, we have that $X\le_{\rm cx}Y$ if and only if $Y$ can be obtained from $X$ through a finite number of (simple) mean-preserving spreads \citep[Lemma~1]{RS70}. This fact enables a focused analysis on pairs of random variables where one is obtained from the other via a simple mean-preserving spread when examining second risk--rinsurance parity.
Note that one may transform a mean preserving spread into multiple ones of smaller magnitude. Applying this idea for constructing the intermediary positions in \eqref{eq:cond-2-A}, we obtain the following lemma, which provides sufficient conditions for risk--insurance parity in combination with Theorem \ref{th:2nd-r-i-p}.  

\begin{lemma}
\label{lem:technical-conditions}
A contract set $\mathcal{C}$ satisfies second risk--insurance parity if 
for every $X\in \mathcal{S}$, $\delta>0$ and $x_1,x_2 \in \mathrm{Supp}(X)$ with $x_1 < x_2$, $\mathbb{P}(X=x_1) = \mathbb{P}(X=x_2)$, there are sequences $\{C_i: i\in\mathbb{N}\}\subseteq \mathcal{C}$, $\{Z_{i,n}: i,n\in\mathbb{N}\}\subseteq \mathcal{S}$, and $\{\delta_i : i\in\mathbb{N}\}\subseteq \mathbb{R}_+$, such that $\sum_{i=1}^{\infty}\delta_i = {\delta}$, and, for every $i\in\mathbb{N}$,
 \begin{align}
    & C_i(Z_{i,n})(\omega_1) - C_i(Z_{i,n})(\omega_2) = \delta_i\quad \text{for all } \omega_1\in\{X=x_1\}, \omega_2\in\{X=x_2\}, n\in\mathbb{N} \label{eq:2nd-COND1}
\end{align}
and
\begin{align}
 Z_{i,n} + C_i(Z_{i,n}) \stackrel{\rm B}{\to} X - \sum_{j=1}^{i-1}\delta_j\id_{\{X = x_1\}} + \sum_{j=1}^{i-1}\delta_j\id_{\{X = x_2\}} ~~\text{as}~n\to\infty. \label{eq:2nd-COND2} 
\end{align}
\end{lemma}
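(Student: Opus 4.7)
The strategy is to invoke Theorem~\ref{th:2nd-r-i-p} and establish that $X\leq_{\rm cx}Y$ implies $X\succsim^{\mathcal C}_* Y$ for every $X,Y\in\mathcal S$. By the representation of convex order in \cite{RS70}, $Y$ is reached from $X$ via finitely many simple mean-preserving spreads, so transitivity of $\succsim^{\mathcal C}_*$ reduces the task to a single simple MPS $Y=X-\delta\id_{A_1}+\delta\id_{A_2}$ with $\mathbb P(A_1)=\mathbb P(A_2)$ and $X|_{A_1}\le X|_{A_2}$. Since $X$ is simple and $(\Omega,\mathcal F,\mathbb P)$ is atomless, I would then partition $A_1$ and $A_2$ along the values of $X$ and greedily pair sub-pieces of equal probability across them. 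Each pairing realizes an \emph{atomic} simple MPS between a specific $u\in\mathrm{Supp}(X)$ and a specific $v\in\mathrm{Supp}(X)$ with $u<v$, and a second use of transitivity further reduces the problem to this atomic case: showing $X\succsim^{\mathcal C}_* X-\delta\id_{A_1}+\delta\id_{A_2}$ when $A_1\subseteq\{X=u\}$, $A_2\subseteq\{X=v\}$, and $\mathbb P(A_1)=\mathbb P(A_2)$.

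The hypothesis needs $\mathbb P(X=u)=\mathbb P(X=v)$, which may fail. I would therefore perturb $X$: for small $\epsilon>0$, let $\tilde X_\epsilon$ equal $X$ everywhere except on $\{X=u\}\setminus A_1$ (where $\tilde X_\epsilon:=u+\epsilon$) and on $\{X=v\}\setminus A_2$ (where $\tilde X_\epsilon:=v-\epsilon$), taking $\epsilon$ small enough that $u+\epsilon<v-\epsilon$ and neither new value collides with $\mathrm{Supp}(X)$. Then $\{\tilde X_\epsilon=u\}=A_1$ and $\{\tilde X_\epsilon=v\}=A_2$ have equal probability, so the hypothesis applied to $(\tilde X_\epsilon,\delta,u,v)$ yields sequences $(C_i),(Z_{i,n}),(\delta_i)$ with $\sum_i\delta_i=\delta$ satisfying \eqref{eq:2nd-COND1}--\eqref{eq:2nd-COND2}. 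For each $(i,n)$, by atomlessness I would pick a measure-preserving bijection between $A_1$ and $A_2$ and define $W_{i,n}\laweq Z_{i,n}$ by swapping $Z_{i,n}$ across this bijection and leaving it unchanged elsewhere. Condition \eqref{eq:2nd-COND1} forces $C_i(Z_{i,n})$ to be constant on each of $A_1,A_2$ with gap $\delta_i$, and a direct calculation --- using law invariance of $\rho$ in the $\rho$-priced case --- then gives $C_i(W_{i,n})=C_i(Z_{i,n})-\delta_i\id_{A_1}+\delta_i\id_{A_2}$.

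Thus the length-one connection $Z_{i,n}+C_i(Z_{i,n})\succsim^{\mathcal C}Z_{i,n}+C_i(W_{i,n})$ holds, and its two sides converge in bounded convergence, as $n\to\infty$, to $\tilde X_\epsilon^{(i)}$ and $\tilde X_\epsilon^{(i+1)}:=\tilde X_\epsilon^{(i)}-\delta_i\id_{A_1}+\delta_i\id_{A_2}$ respectively, so closure of $\succsim^{\mathcal C}_*$ yields $\tilde X_\epsilon^{(i)}\succsim^{\mathcal C}_*\tilde X_\epsilon^{(i+1)}$. Transitivity over $i=1,\dots,N$ and a second closure argument as $N\to\infty$ --- justified because $\sum_i\delta_i=\delta$ ensures bounded convergence of the partial sums --- give $\tilde X_\epsilon\succsim^{\mathcal C}_*\tilde X_\epsilon-\delta\id_{A_1}+\delta\id_{A_2}$. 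A final $\epsilon\to 0$ limit delivers $X\succsim^{\mathcal C}_*X-\delta\id_{A_1}+\delta\id_{A_2}$ for the atomic MPS, completing the reduction. The main obstacle is the layered invocation of closure across three limits --- $n\to\infty$ within each link, $N\to\infty$ across infinitely many $\delta_i$-increments, and $\epsilon\to 0$ for the perturbation --- each legitimate under bounded convergence thanks to the uniform boundedness of the intermediate positions; it is precisely \eqref{eq:2nd-COND1} that provides the exact jump $\delta_i$ needed to realize the swap cleanly at every step.
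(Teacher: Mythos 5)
Your proposal is correct and follows essentially the same route as the paper's proof: reduce to a single simple mean-preserving spread, then to the atomic case by transitivity, equalize the level-set probabilities by a small perturbation, construct $W_{i,n}$ by swapping $Z_{i,n}$ across a measure-preserving bijection between $A_1$ and $A_2$ so that $C_i(W_{i,n})=C_i(Z_{i,n})-\delta_i\id_{A_1}+\delta_i\id_{A_2}$, and pass through the three limits by closure and transitivity of $\succsim^{\mathcal C}_*$ (the paper phrases this directly for an arbitrary insurance-propense continuous preference, which is equivalent via Theorem~\ref{th:2nd-r-i-p}). The only point you gloss over is that an atomic pairing may have $u=v$ (possible when $\max X|_{A_1}=\min X|_{A_2}$), a case the paper treats by one more perturbation-and-continuity step of exactly the kind you already use for the unequal-probability issue.
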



Equation \eqref{eq:2nd-COND1} mirrors \eqref{eq:cond-2-A} in that it ensures that the next risk position, once insurance propensity is evoked, produces the next mean-preserving spread in the sequence guiding our construction; each may effectively be approached by an insurance contract due to \eqref{eq:2nd-COND2}.

\begin{remark} \label{rem:onlyif-lemma1} There are no ``only if" part in Lemma~\ref{lem:technical-conditions} despite mean-preserving spreads entirely characterizing the convex order, and thus, strong risk aversion. 
This is because there are other ways to decompose a mean preserving spread into smaller ones than how we proceeded in the lemma. In particular, 
one would have to account for the possibility of mean-preserving spreads on other events than the ones through which $Y$ is defined in \eqref{eq:mps-def}, successively cancelling each other.
\end{remark}

 Next, we give some examples of standard contract sets satisfying second risk--insurance parity, as a result from Theorem~\ref{th:2nd-r-i-p}. The proof of each item relies on Lemma~\ref{lem:technical-conditions}, showcasing the facilitated applicability that its more technical conditions confer. 

\begin{proposition}
\label{th:2nd-r-i-p-standard-contracts}
Let $\Pi$ be a dense subset of $\R$.
    The following standard contract sets satisfy second risk--insurance parity:
\begin{enumerate}[label={\rm (\roman*)}, ref={\rm (\roman*)}]
    \item \label{item:Cra-proportional} letting $\alpha_0\in(0,1)$ and $\pi_0\in\mathbb{R}$, the set $ \mathcal{C}$ comprising $C_{\alpha}:X\mapsto\pi_0 - \alpha X$ for all $\alpha\in(0,\alpha_0]$;
     \item \label{item:Cra-dl-d} letting $d_0\in\mathbb{R}$ and $\lambda_0>0$, the set $ \mathcal{C}$ comprising $C_{\lambda,\pi}: X\mapsto \pi - (X-d_0)_+ \wedge \lambda$ for all $\lambda\in(0,\lambda_0]$ and all $\pi\in\Pi$;
    \item \label{item:Cra-dl-pi} letting $\pi_0\in\mathbb{R}$ and $\lambda_0>0$, the set $ \mathcal{C}$ comprising $C_{\lambda,d}: X\mapsto \pi_0 - (X-d)_+\wedge \lambda$ for all $\lambda\in(0,\lambda_0]$ and all $d\in\Pi$;  
    \item \label{item:Cra-fix-trigg} letting $\tau_0\in\mathbb{R}$ and $\kappa_0>0$, the set $ \mathcal{C}$ comprising $C_{\kappa,\pi}: X\mapsto \pi - \kappa\id_{\{X\geq \tau_0\}}$ for all $\kappa\in(0,\kappa_0]$ and all $\pi \in\Pi$;  
    \item \label{item:Cra-fix-pi} letting $\pi_0\in\mathbb{R}$ and $\kappa_0>0$, the set $ \mathcal{C}$ comprising $C_{\kappa,\tau}: X\mapsto \pi_0 - \kappa\id_{\{X\geq \tau\}}$ for all $\kappa\in(0,\kappa_0]$ and all $\tau \in\Pi$.

\end{enumerate}    
\end{proposition}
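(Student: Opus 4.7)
The plan is to verify the sufficient conditions of Lemma~\ref{lem:technical-conditions} in each of the five cases. Fix $X\in\mathcal{S}$, $\delta>0$, and $x_1<x_2$ in $\mathrm{Supp}(X)$ with $\p(X=x_1)=\p(X=x_2)$, and write $X_i := X - \sum_{j<i}\delta_j\id_{\{X=x_1\}} + \sum_{j<i}\delta_j\id_{\{X=x_2\}}$ for the target in \eqref{eq:2nd-COND2}. My aim is, in each case, to exhibit contracts $C_i\in\mathcal{C}$, scalars $\delta_i>0$ summing to $\delta$, and simple random variables $Z_{i,n}$ (in fact taken independent of $n$) with $Z_{i,n}+C_i(Z_{i,n})=X_i$ pointwise on $\Omega$; then \eqref{eq:2nd-COND2} is automatic and only the exact difference condition \eqref{eq:2nd-COND1} needs to be checked.

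For (i), set $r=\alpha_0/(1-\alpha_0)$ and $D_i := x_2-x_1+2\sum_{j<i}\delta_j\ge x_2-x_1$. I choose any $\delta_i>0$ summing to $\delta$ with $\delta_i\le r(x_2-x_1)$, which is always possible via a geometric series with ratio close enough to $1$. Then $\alpha_i := \delta_i/(D_i+\delta_i)\in(0,\alpha_0]$, and with $C_i=C_{\alpha_i}$ and $Z_{i,n}=(X_i-\pi_0)/(1-\alpha_i)$, direct algebra yields $Z_{i,n}+C_i(Z_{i,n})=X_i$ and $C_i(Z_{i,n})(\omega_1)-C_i(Z_{i,n})(\omega_2)=\alpha_i D_i/(1-\alpha_i)=\delta_i$. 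For (ii)--(v), I take $\delta_i>0$ summing to $\delta$ with $\delta_i$ below the relevant threshold ($\lambda_0$ in (ii)--(iii), $\kappa_0$ in (iv)--(v)) and set the contract's scale parameter equal to $\delta_i$ itself (so $\lambda_i=\delta_i$ or $\kappa_i=\delta_i$). The remaining free parameter is chosen from the dense set $\Pi$ inside the nonempty interval forced by $X_i(\omega_1)<X_i(\omega_2)$: e.g.\ $\pi_i\in\Pi\cap(X_i(\omega_1)-d_0,\,X_i(\omega_2)-d_0)$ for (ii), $d_i\in\Pi\cap(X_i(\omega_1)-\pi_0,\,X_i(\omega_2)-\pi_0)$ for (iii), and analogously $\pi_i$ or $\tau_i$ for (iv)--(v). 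Then $Z_{i,n}(\omega)$ is defined piecewise: $X_i(\omega)-\pi$ when this value lies at or below the deductible or trigger, and $X_i(\omega)-\pi+\delta_i$ when it lies strictly above. By construction $\omega_1$ falls in the first regime (indemnity $0$) and $\omega_2$ in the second (indemnity $\delta_i$), delivering \eqref{eq:2nd-COND1}.

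The main piece of bookkeeping will be checking that the piecewise recipe for $Z_{i,n}$ is consistent on all of $\Omega$, not merely at $\omega_1,\omega_2$. Because each relevant map $z\mapsto z-I_i(z)$ is increasing and onto --- piecewise identity with a flat plateau in the deductible-limit case, and identity with a jump in the fixed case --- every value $X_i(\omega)-\pi$ admits at least one preimage, and at boundary values (where the map is flat or jumps) any valid preimage yields the same point-wise contribution to $Z_{i,n}+C_i(Z_{i,n})$. Ensuring that $\delta_i$ simultaneously sums to $\delta$ and meets the per-case upper bound is routine via a geometric series with ratio close to $1$. I expect no obstacle beyond this case-by-case verification.
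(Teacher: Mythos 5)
Your proposal is correct and follows essentially the same route as the paper: both verify the sufficient conditions of Lemma~\ref{lem:technical-conditions} case by case, with the same choice $\alpha_i=\delta_i/(\delta_i+x_2-x_1+2\sum_{j<i}\delta_j)$ in (i) and the same piecewise inverse $Y\mapsto Y-\pi+\lambda\id_{\{Y-\pi\ge d\}}$ in (ii)--(v), so that $Z_{i,n}+C_i(Z_{i,n})$ equals the target $X_i$ exactly and \eqref{eq:2nd-COND2} holds trivially. The only slip is your tie-breaking convention at the exact trigger in the fixed-indemnity cases (there the lower branch $X_i(\omega)-\pi$ is \emph{not} a valid preimage because the indemnity jumps at the trigger; one must take the upper branch, as the paper's $g_{\pi,\lambda,d}$ does via its ``$\ge$''), but this is immaterial since, as you note, any valid preimage suffices and the dense choice of the free parameter lets you avoid the boundary on the finitely many atoms of $X_i$ altogether.
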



From their respective equivalence with strong risk aversion, insurance propensity for any set of contracts $\mathcal{C}$ in Proposition~\ref{th:2nd-r-i-p-standard-contracts} implies insurance propensity for all the others, and for all other contract set satisfying second risk--insurance parity. 
Note that if a contract set $\mathcal C$ satisfies the conditions in Lemma~\ref{lem:technical-conditions}, then any contract set that is larger than $\mathcal C$ also satisfies those conditions. This directly leads to the following result.


\begin{proposition}
\label{prop:bigger-contract-sets}
    For two contract sets $\mathcal{C}_1$ and $\mathcal{C}_2$, if $\mathcal{C}_1$ satisfies second risk--insurance parity and $\mathcal{C}_1\subseteq \mathcal{C}_2$, then $\mathcal{C}_2$ also satisfies second risk--insurance parity.
\end{proposition}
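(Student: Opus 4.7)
The plan is to verify both directions of the biconditional that defines second risk--insurance parity for $\mathcal{C}_2$: for an arbitrary continuous risk preference $\succsim$, insurance propensity to $\mathcal{C}_2$ must be equivalent to strong risk aversion. Once a continuous $\succsim$ is fixed, the two directions can be handled separately, and neither requires any fresh structural analysis of $\mathcal C_2$.

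For the forward direction, the inclusion $\mathcal{C}_1\subseteq\mathcal{C}_2$ does all the work. The defining condition \eqref{eq:insprop-def} for insurance propensity to $\mathcal{C}_2$ asks that $X+C(X)\succsim X+C(Y)$ hold for every $C\in\mathcal{C}_2$ and every pair $X,Y\in\mathcal{S}$ with $X\laweq Y$; restricting the quantifier to $C\in\mathcal{C}_1$ gives exactly insurance propensity to $\mathcal{C}_1$. By the hypothesis that $\mathcal{C}_1$ satisfies second risk--insurance parity, $\succsim$ is then strongly risk averse.

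For the backward direction, I would argue from scratch the stronger statement that strong risk aversion implies insurance propensity to \emph{any} contract set. Fix $C\in\mathcal{C}_2$, written as $C=\pi-I$ or $C=\rho(I(\cdot))-I$, and let $X,Y\in\mathcal S$ with $X\laweq Y$. As noted in Section~\ref{sect:contracts}, the fact that $I$ is increasing makes $X$ and $C(X)$ counter-monotonic, and law invariance of $I$ together with law invariance of $\rho$ (in the $\rho$-priced case) yields $C(X)\laweq C(Y)$. The classical extremality of the counter-monotonic coupling in the convex order (see, e.g., \citealp[Chapter~3]{SS07}) then gives
\[
X+C(X)\leq_{\rm cx} X+C(Y),
\]
and strong risk aversion immediately yields $X+C(X)\succsim X+C(Y)$. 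Hence $\succsim$ is insurance propense to $\mathcal{C}_2$.

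There is essentially no obstacle: the proposition is a short formal consequence of the monotonicity of the property ``insurance propensity to $\mathcal{C}$'' with respect to $\mathcal{C}$, together with the universal implication ``strong risk aversion $\Rightarrow$ insurance propensity to every contract set.'' The only non-tautological ingredient is the counter-monotonic/convex-order extremality invoked above, which is standard and already tacitly used in the paper's other characterizations; no appeal to Lemma~\ref{lem:technical-conditions} or to the connecting relation $\succsim^{\mathcal C}_*$ is needed.
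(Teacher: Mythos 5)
Your proof is correct, but it takes a different route from the paper's. The paper justifies Proposition~\ref{prop:bigger-contract-sets} in one line just before its statement, by observing that the \emph{sufficient conditions of Lemma~\ref{lem:technical-conditions}} are monotone in the contract set: if $\mathcal C_1$ satisfies them, so does any superset. You instead argue directly from the definitions: (i) insurance propensity to $\mathcal C_2$ restricts to insurance propensity to $\mathcal C_1$, which by the parity hypothesis forces strong risk aversion; and (ii) strong risk aversion implies insurance propensity to \emph{every} contract set, via counter-monotonicity of $X$ and $C(X)$ and the convex-order minimality of the counter-monotonic coupling --- exactly the argument the paper itself uses for the ``if'' direction of Theorem~\ref{th:2nd-r-i-p}. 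Your route is arguably the more faithful one to the statement as written: the proposition quantifies over all $\mathcal C_1$ satisfying second risk--insurance parity, whereas the paper's Lemma-based justification literally covers only those $\mathcal C_1$ whose parity has been \emph{verified through} Lemma~\ref{lem:technical-conditions}, which (as Remark~\ref{rem:onlyif-lemma1} notes) is only a sufficient criterion. Your decomposition into ``propensity is antitone in $\mathcal C$ for the forward direction'' plus ``strong risk aversion gives propensity universally for the backward direction'' closes that gap and requires no appeal to Lemma~\ref{lem:technical-conditions} or to $\succsim^{\mathcal C}_*$ at all. The only external ingredient, the convex-order extremality of counter-monotonic sums, is standard and is already cited in the paper (\citealp[Corollary~3.28]{R13}).
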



 Combining Proposition~\ref{th:2nd-r-i-p-standard-contracts} with Proposition~\ref{prop:bigger-contract-sets}, one directly recovers the results from Theorem~3 (i)--(iv) of \cite{MMWW25}. The arguably small contract sets in Proposition~\ref{th:2nd-r-i-p-standard-contracts} are practical in that they allow to establish strong risk aversion after verifying insurance propensity in a limited set of situations. Meanwhile, the larger contract sets provided by Proposition~\ref{prop:bigger-contract-sets} extends this insurance propensity to a greater set of situations, and are hence practical for applications thereafter. Let us provide an example. 

\begin{example}
We consider again the situation in Example~\ref{ex:house} where a house owner is surveyed by an insurance representative. After answering that they prefer to insure their house when presented contract $C_{\rm B} : X\mapsto 1000 - 0.5X$, the house owner specifies to the representative that their decision was not driven by the premium of the contract ($1000$) nor the coinsurance proportion ($0.5$); they would have made the same choice no matter these numbers. In addition to the information that the distribution of $X$ has no incidence either, the representative now knows that the house owner would be lying if they said they would not choose to insure their house when presented contract $C_{\rm C}$.
In fact, the representative should leave immediately because they already know the house owner's answers for every other contract in their survey. 
\end{example}



First and second risk--insurance parity are distinct properties; none implies the other. More precisely, from their respective characterizations in Theorems~\ref{th:1st-r-i-p-standard}--\ref{th:2nd-r-i-p}, we see that they are mutually exclusive. This is obvious from the fact that weak and strong risk aversion are not equivalent in general. For instance, the standard contract set of all full-indemnity contracts does not satisfy second risk--insurance parity: for any $Z\in \mathcal{S}$ and full-indemnity contract $C$, the random variable $Z+C(Z)$ is degenerate, and therefore any non-degenerate $X\in\mathcal{S}$ cannot be connected to any other random variables.
 On the other hand, the standard contract sets comprising fixed-indemnity contracts described in Proposition~\ref{th:2nd-r-i-p-standard-contracts} parts \ref{item:Cra-fix-trigg} or \ref{item:Cra-fix-pi} satisfy second risk–-insurance parity. As demonstrated in Theorem~\ref{th:1st-r-i-p-standard}, they fail to satisfy first risk–-insurance parity.

A subtlety in Lemma~\ref{lem:technical-conditions} is that it is trickier to satisfy \eqref{eq:2nd-COND2} for $\rho$-priced contract sets. Indeed, if $\{C_i:i\in\N\}$ is a sequence of $\rho$-priced contracts, then  $C_i(Z_{i,n})=\rho(I_i(Z_{i,n}))-I_i(Z_{i,n})$ for some sequence of indemnity function $\{I_i:i\in\N\}$. When analyzing the convergence property of the sequence ${C_i(Z_{i,n}) : n \in \mathbb{N}}$, both the distribution of the indemnity $I_i(Z_{i,n})$ and the corresponding contract price $\rho(I_i(Z_{i,n}))$ vary with the underlying risk $Z_{i,n}$. This introduces complexity into the convergence analysis, as both components are subject to variation across the sequence. In the next result, we present two examples of $\rho$-priced contract sets satisfying second risk--insurance parity. 

\begin{proposition}
\label{th:2nd-r-i-p-priced-contracts-dl}
    The following $\rho$-priced contract sets satisfy second risk--insurance parity:
    \begin{enumerate}[label={\rm (\roman*)}, ref={\rm (\roman*)}]
    \item \label{item:Dra-dl-pi}  for any premium principle $\rho$ and $\lambda_0>0$, the set $\mathcal{C}$ comprising $C_{\lambda,d}: X\mapsto \rho((X-d)_+\wedge \lambda) - (X-d)_+\wedge \lambda$ for all $\lambda\in(0,\lambda_0]$ and all $d\in\mathbb{R}$;   
    \item \label{item:Dra-fix-pi} for any premium principle $\rho$ and $\kappa_0>0$, the set $\mathcal{C}$ comprising $C_{\kappa,\tau}: X\mapsto \rho(\kappa\id_{\{X\geq \tau\}}) - \kappa\id_{\{X\geq \tau\}}$ for all $\kappa\in(0,\kappa_0]$ and all $\tau \in\mathbb{R}$.
    \end{enumerate}
\end{proposition}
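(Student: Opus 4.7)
The plan is to verify the sufficient conditions of Lemma~\ref{lem:technical-conditions} for both parts simultaneously. Fix $X \in \mathcal{S}$, $\delta > 0$, and $x_1 < x_2$ in $\mathrm{Supp}(X)$ with $\p(X=x_1)=\p(X=x_2)$. The crux is that although $\rho$ admits no regularity beyond law invariance, its value $\rho(I_i(Z_{i,n}))$ is locked in as soon as one fixes the \emph{distribution} of $I_i(Z_{i,n})$; my construction will in fact make $I_i(Z_{i,n})$ pointwise deterministic and equal to a fixed target, so the premium reduces to a constant $c_i$ and the whole problem becomes purely algebraic.

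First I would pick a positive sequence $\{\delta_i\}_{i\in\N}$ with $\sum_i \delta_i=\delta$ and $\delta_i \leq \lambda_0$ (respectively $\kappa_0$) for every $i$; e.g.\ $\delta_i = \delta\varepsilon(1-\varepsilon)^{i-1}$ with $\varepsilon = \min\{1,\lambda_0/\delta\}$. I set $B := \{X\geq x_2\}$ and $c_i := \rho(\delta_i \id_B)$, and define the simple random variable $Z_i$ by
\begin{equation*}
Z_i =
\begin{cases}
x_1 - \sum_{j<i}\delta_j - c_i & \text{on } \{X = x_1\},\\
x_2 + \sum_{j<i}\delta_j + \delta_i - c_i & \text{on } \{X = x_2\},\\
z + \delta_i - c_i & \text{on } \{X=z\},\ z > x_2,\\
z - c_i & \text{on } \{X=z\},\ z < x_2,\ z \neq x_1.
\end{cases}
\end{equation*}
A direct comparison yields $\max_{\omega\in B^c} Z_i(\omega) < x_2 - c_i \leq \min_{\omega\in B}Z_i(\omega) - \delta_i$, so for (i) I can choose $d_i$ anywhere in the nonempty interval $[\max_{B^c}Z_i,\min_B Z_i - \delta_i]$ and set $C_i := C_{\delta_i,d_i}$, and for (ii) I can pick $\tau_i$ in $(\max_{B^c}Z_i,\min_B Z_i]$ and set $C_i := C_{\delta_i,\tau_i}$. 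In either case the indemnity satisfies $I_i(Z_i) = \delta_i\id_B$ pointwise, hence $\rho(I_i(Z_i)) = \rho(\delta_i\id_B) = c_i$ by law invariance and $C_i(Z_i) = c_i - \delta_i\id_B$. An event-by-event computation then yields $Z_i + C_i(Z_i) = X - \sum_{j<i}\delta_j\id_{\{X=x_1\}} + \sum_{j<i}\delta_j\id_{\{X=x_2\}}$. Taking $Z_{i,n} := Z_i$ for all $n$, \eqref{eq:2nd-COND1} reduces to the identity $c_i - (c_i - \delta_i) = \delta_i$ and \eqref{eq:2nd-COND2} becomes a trivial equality, so Lemma~\ref{lem:technical-conditions} applies.

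The hard part will be the choice of $B$. A naive attempt $B = \{X = x_2\}$ still pins down the indemnity distribution, but it fails the separation $\max_{B^c} Z_i < x_2 - c_i$ whenever $X$ has support values strictly larger than $x_2$: those $Z_i$ values would sit in $B^c$ and exceed $x_2 - c_i$, making the threshold parameter $d_i$ (or $\tau_i$) infeasible. Enlarging $B$ to $\{X \geq x_2\}$ shifts every atom with $X > x_2$ upward by the same $\delta_i - c_i$ correction as the $\{X = x_2\}$ atom, relocating them into $B$ and restoring the separation without disturbing the final position identity $Z_i + C_i(Z_i) = X - \sum_{j<i}\delta_j\id_{\{X=x_1\}} + \sum_{j<i}\delta_j\id_{\{X=x_2\}}$; this is the crucial design choice that handles an arbitrary premium principle $\rho$.
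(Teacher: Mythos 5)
Your proposal is correct and follows essentially the same route as the paper's proof: both verify the sufficient conditions of Lemma~\ref{lem:technical-conditions} by translating the intermediate positions $W_i = X - \sum_{j<i}\delta_j\id_{\{X=x_1\}}+\sum_{j<i}\delta_j\id_{\{X=x_2\}}$ so that the indemnity applied to $Z_i$ is exactly the deterministic layer $\delta_i\id_{\{X\ge t\}}$ for a threshold $t\in(x_1,x_2]$ (the paper takes $t=d^*\in(x_1,x_2)$ and sets $d_i=d^*-b_i$ explicitly, you take $t=x_2$ and any feasible $d_i$ or $\tau_i$), which forces $\rho(I_i(Z_i))$ to be a constant that cancels out. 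The only blemish is your example sequence $\delta_i=\delta\varepsilon(1-\varepsilon)^{i-1}$ with $\varepsilon=\min\{1,\lambda_0/\delta\}$, which degenerates to $\delta_i=0\notin(0,\lambda_0]$ for $i\ge 2$ when $\lambda_0\ge\delta$; any strictly positive summable choice (e.g.\ capping $\varepsilon$ at $1/2$) repairs this trivially.
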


We now introduce the \emph{matching property}:  
\begin{equation}
    \tag{MP}\label{eq:propertyB}
    \{ b\in \R: \rho(X-b) = \gamma b\} \mbox{ is nonempty for each $X\in \mathcal{S}$ and $\gamma>0$.} 
\end{equation}
We call \eqref{eq:propertyB} the matching property because typically, $\rho(X-b)$ is decreasing in $b$, and $\gamma b$ is increasing in $b$, so we naturally expect that they match at some  point $b\in \R$ under mild regularity conditions. For instance, if $b\mapsto \rho(X-b)$ is decreasing and continuous (satisfied by   all common premium principles), then \eqref{eq:propertyB} holds, and this includes the case where $\rho$ is a constant (as in a standard contract).  
Properties \eqref{eq:propertyA} and \eqref{eq:propertyB}
may  not  be familiar, but they are implied by standard properties in the literature.  
The next result  gives  another example of a $\rho$-priced contract set satisfying second risk--insurance parity.

\begin{proposition}
\label{th:2nd-r-i-p-priced-contracts-prop}
    Let $\rho$ be a premium principle satisfying the matching property \eqref{eq:propertyB}. For any $\alpha_0\in(0,1)$, the $\rho$-priced contract set $\mathcal{C}$ comprising $C_{\alpha}:X\mapsto\rho(\alpha X) - \alpha X$ for all $\alpha\in(0,\alpha_0]$ satisfies second risk--insurance parity.
\end{proposition}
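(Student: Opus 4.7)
The plan is to invoke the sufficient condition of Lemma~\ref{lem:technical-conditions}. Fix $X\in\mathcal{S}$, $\delta>0$, and $x_1<x_2$ in $\mathrm{Supp}(X)$ with $\mathbb{P}(X=x_1)=\mathbb{P}(X=x_2)$, and write $s_{i-1}:=\sum_{j=1}^{i-1}\delta_j$ together with $X_i':=X-s_{i-1}\id_{\{X=x_1\}}+s_{i-1}\id_{\{X=x_2\}}$. Condition \eqref{eq:2nd-COND2} then reads $(1-\alpha_i)Z_{i,n}+\rho(\alpha_iZ_{i,n})\stackrel{\rm B}{\to}X_i'$, which I would aim to realize as an exact equality by choosing an $n$-independent $Z_i$. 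Enforcing $(1-\alpha_i)Z_i+\rho(\alpha_iZ_i)=X_i'$ forces $Z_i=(X_i'-b_i)/(1-\alpha_i)$ with $b_i=\rho(\alpha_iZ_i)$; setting $u_i=\alpha_ib_i/(1-\alpha_i)$ and $\gamma_i=(1-\alpha_i)/\alpha_i>0$, the self-consistency equation becomes $\rho(\alpha_iX_i'/(1-\alpha_i)-u_i)=\gamma_iu_i$, and such a $u_i$ is provided exactly by the matching property \eqref{eq:propertyB} applied to $\alpha_iX_i'/(1-\alpha_i)\in\mathcal{S}$ and $\gamma_i>0$. Hence any admissible $\alpha_i\in(0,\alpha_0]$ yields a valid $Z_i\in\mathcal{S}$, and \eqref{eq:2nd-COND2} holds by construction (trivially as a limit).

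I would next verify \eqref{eq:2nd-COND1}. Since $Z_i$ takes the constant values $(x_1-s_{i-1}-b_i)/(1-\alpha_i)$ on $\{X=x_1\}$ and $(x_2+s_{i-1}-b_i)/(1-\alpha_i)$ on $\{X=x_2\}$, a direct computation gives, for any $\omega_1\in\{X=x_1\}$ and $\omega_2\in\{X=x_2\}$,
\begin{equation*}
C_{\alpha_i}(Z_i)(\omega_1)-C_{\alpha_i}(Z_i)(\omega_2)=\alpha_i\bigl(Z_i(\omega_2)-Z_i(\omega_1)\bigr)=\frac{\alpha_i(x_2-x_1+2s_{i-1})}{1-\alpha_i}.
\end{equation*}
Thus \eqref{eq:2nd-COND1} is equivalent to the calibration $\delta_i=\alpha_i(x_2-x_1+2s_{i-1})/(1-\alpha_i)$, which inverts to $\alpha_i=\delta_i/(x_2-x_1+2s_{i-1}+\delta_i)$.

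The main obstacle will be selecting $\{\alpha_i\}\subseteq(0,\alpha_0]$ and $\{\delta_i\}\subseteq(0,\infty)$ so as to satisfy this calibration and $\sum_i\delta_i=\delta$ simultaneously; a constant $\alpha_i\equiv\alpha$ causes $s_i$ to grow geometrically and $\sum\delta_i$ to diverge, so the $\alpha_i$ must decay. I would resolve this with a geometric ansatz $\delta_i=\delta(1-q)q^{i-1}$ for $q\in[0,1)$, which gives $s_{i-1}=\delta(1-q^{i-1})$ and $\sum_i\delta_i=\delta$. A short monotonicity argument (the numerator decreases and the denominator increases in $i$) shows that the resulting $\alpha_i$ is decreasing in $i$, hence maximized at $\alpha_1=\delta(1-q)/(x_2-x_1+\delta(1-q))$; choosing any $q\in[0,1)$ with $q\geq 1-\alpha_0(x_2-x_1)/((1-\alpha_0)\delta)$ (a vacuous constraint if the right-hand side is nonpositive) ensures $\alpha_i\in(0,\alpha_0]$ for every $i$. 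With these schedules and the $Z_i$ obtained via the matching property, both \eqref{eq:2nd-COND1} and \eqref{eq:2nd-COND2} are satisfied, and Lemma~\ref{lem:technical-conditions} concludes that $\mathcal{C}$ satisfies second risk--insurance parity.
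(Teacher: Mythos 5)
Your proof is correct and follows essentially the same route as the paper: both invoke Lemma~\ref{lem:technical-conditions}, use the matching property \eqref{eq:propertyB} to solve the self-consistency (fixed-point) equation defining $Z_i$ so that $Z_i+C_{\alpha_i}(Z_i)$ equals the $i$-th intermediate spread exactly, and calibrate $\alpha_i=\delta_i/(\delta_i+x_2-x_1+2\sum_{j=1}^{i-1}\delta_j)\le\alpha_0$. Your explicit geometric schedule for $\{\delta_i\}$ is in fact more detailed than the paper's bare existence claim; the only nitpick is that you should take $q\in(0,1)$ strictly, since $q=0$ would force $\delta_i=0$ and $\alpha_i=0\notin(0,\alpha_0]$ for all $i\ge 2$.
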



For proportional-indemnity contracts, the choice of premium principle cannot reduce the range of coinsurance proportions required to be in the contract set, compared to standard contracts, Proposition~\ref{th:2nd-r-i-p-standard-contracts}\ref{item:Cra-proportional}, because premium principles do not help in achieving the exhaustiveness of contracts to satisfy \eqref{eq:2nd-COND1} for all $X\in \mathcal{S}$: they get subtracted out.  
A similar observation can be made for the deductible-limit and fixed-indemnity contract sets in Proposition~\ref{th:2nd-r-i-p-priced-contracts-dl}. 
Thus, 
comparing Propositions~\ref{th:2nd-r-i-p-standard-contracts} and~\ref{th:2nd-r-i-p-priced-contracts-dl}--\ref{th:2nd-r-i-p-priced-contracts-prop}, we note that the use of premium principles does not allow for a sensible reduction of the number of contracts in each sets, as much as it did for first risk--insurance propensity. 

The premium principle does not need to satisfy the matching property in the $\rho$-priced contract sets of Proposition~\ref{th:2nd-r-i-p-priced-contracts-dl} because deductible-limit and fixed-indemnity contracts are insensible to translations in the risk when the same translation is also applied to the deductible or the trigger. This is not the case for proportional-indemnity contracts.  


\section{Risk--insurance parity with deductible and limit contracts}
\label{sect:Others}

This section contains the second part of our main results. Whereas in the previous section risk--insurance parity served to highlight the connection between insurance propensity and existing notions of risk aversion, we now employ risk--insurance parity as a way of defining new notions of risk aversion, by letting this behavior arise from insurance propensity to a given  contract set that is practically relevant.

In  Propositions~\ref{th:2nd-r-i-p-standard-contracts} and~\ref{th:2nd-r-i-p-priced-contracts-dl}, as well as \cite{MMWW25}, the deductibles and limits are always present together for contracts inducing strong risk aversion. This leads us naturally to  the question of what types of risk aversion would be equivalent to insurance propensity to deductible-only contracts and limit-only contracts, respectively. 
These types of contracts  and their approximations  are common in health and property insurance practice. 
It turns out that the induced notions of risk aversion are strictly between weak and strong, and both are, to the best of our knowledge, new to the literature. 
Each new notion has an intuitive formulation based on mean-preserving spreads and a concise and explicit characterization within the dual utility model  (as all considered notions of risk aversion are equivalent for the expected utility model).   

We first define the new notions of risk aversion through mean-preserving spreads, and then discuss risk--insurance parity pertaining to each. 
This  reverse-route organization helps to develop intuition about the notions of risk aversion before providing the keystone results characterizing their equivalence to insurance propensity to deductible and limit contract sets. 
The specific design of these notions of risk aversion was, in fact, driven by the  risk--insurance parity results.

\subsection{Right- and left-handle mean-preserving spreads}

Before introducing the new notions of risk aversion, we first define two types of mean-preserving spreads. 

\begin{definition}[Left-/right-handle mean preserving spreads]
\label{def:leftright-handle}
    Consider a mean preserving spread as in \eqref{eq:mps-def}. 
We say that it is a \emph{left-handle mean-preserving spread} if $X(\omega_1) = \min X$ for all $\omega_1\in A_1$. 
It is a \emph{right-handle mean-preserving spread} if  $X(\omega_2) = \max X$ for all $\omega_2\in A_2$. 
\end{definition}

The new mean-preserving spreads in Definition \ref{def:leftright-handle} are called left-handle and right-handle as one may imagine a handle at each of the extremities of the distribution: Pulling on the left handle to produce a mean-preserving spread may only decrease the value of outcomes that are minimal for $X$; pulling on the right handle may only increase the values of outcomes that are maximal for $X$. 
An illustration of these types of mean-preserving spreads is provided in Figure~\ref{fig:mean-preserving-spreads} and makes the analogy more clear.

\begin{figure}
\centering
\begin{minipage}{0.3 \textwidth}
\centering
\centering
Initial distribution\\
\bigskip
\begin{tikzpicture}[xscale=0.6, yscale=0.5]
\draw[Black!40, thick] (0.75,0.5) -- (8.75, 0.5);
\foreach \x in {1,...,8}
\draw[Black!40,  thick] (\x+0.25,0.25) -- (\x+0.25,0.75);

\foreach \x/\y in {2/1, 2/2, 3/1, 5/1, 5/2, 5/3, 7/1, 7/2, 7/3}
\filldraw[CornflowerBlue, thick] (\x,\y ) rectangle (\x+0.5,\y+1);

\foreach \x/\y in {2/1, 2/2, 3/1, 5/1, 5/2, 5/3, 7/1, 7/2, 7/3}
\draw[Blue, thick] (\x,\y ) rectangle (\x+0.5,\y+1);

\end{tikzpicture}
\end{minipage}

\bigskip
\begin{minipage}{0.3 \textwidth}
\centering
Left-handle MPS\\
\bigskip
\begin{tikzpicture}[xscale=0.6, yscale=0.5]
\draw[Black!40, thick] (0.75,0.5) -- (8.75, 0.5);
\foreach \x in {1,...,8}
\draw[Black!40,  thick] (\x+0.25,0.25) -- (\x+0.25,0.75);

\foreach \x/\y in {3/1, 5/3, 7/1, 7/2, 7/3}
\filldraw[CornflowerBlue, thick] (\x,\y ) rectangle (\x+0.5,\y+1);

\filldraw[Red, thick] (6,1 ) rectangle 
(6.5,3);
\filldraw[BrickRed, thick] (1,1 ) rectangle 
(1.5,3);

\foreach \x/\y in {3/1, 5/3, 7/1, 7/2, 7/3}
\draw[Blue, thick] (\x,\y ) rectangle (\x+0.5,\y+1);
\draw[Blue, thick] (6,1 ) rectangle (6.5,3);
\draw[Blue, thick] (1,1 ) rectangle (1.5,3);

\draw[Red!50, thick, dotted] (5,1 ) rectangle (5.5,3);
\draw[BrickRed!50, thick, dotted] (2,1 ) rectangle (2.5,3);

\draw[BrickRed, very thick] (1.25,0.5) -- node[below]{\tiny $\delta_1$} (2.25,0.5);
\draw[Red, very thick] (5.25,0.5) -- node[below]{\tiny $\delta_2$} (6.25,0.5);

\end{tikzpicture}
\end{minipage}
\hfill
\begin{minipage}{0.3 \textwidth}
\centering
Generic MPS\\
\bigskip
\begin{tikzpicture}[xscale=0.6, yscale=0.5]
\draw[Black!40, thick] (0.75,0.5) -- (8.75, 0.5);
\foreach \x in {1,...,8}
\draw[Black!40,  thick] (\x+0.25,0.25) -- (\x+0.25,0.75);

\foreach \x/\y in {2/1, 2/2, 5/3, 7/1, 7/2, 7/3}
\filldraw[CornflowerBlue, thick] (\x,\y ) rectangle (\x+0.5,\y+1);

\filldraw[Red, thick] (6,1 ) rectangle 
(6.5,3);
\filldraw[BrickRed, thick] (1,1 ) rectangle 
(1.5,2);

\foreach \x/\y in {2/1, 2/2, 5/3, 7/1, 7/2, 7/3}
\draw[Blue, thick] (\x,\y ) rectangle (\x+0.5,\y+1);
\draw[Blue, thick] (6,1 ) rectangle (6.5,3);
\draw[Blue, thick] (1,1 ) rectangle (1.5,2);

\draw[Red!50, thick, dotted] (5,1 ) rectangle (5.5,3);
\draw[BrickRed!50, thick, dotted] (3,1 ) rectangle (3.5,2);

\draw[BrickRed, very thick] (1.25,0.5) -- node[below]{\tiny $\delta_1$} (3.25,0.5);
\draw[Red, very thick] (5.25,0.5) -- node[below]{\tiny $\delta_2$} (6.25,0.5);

\end{tikzpicture}
\end{minipage}
\hfill
\begin{minipage}{0.3 \textwidth}
\centering
Right-handle MPS\\
\bigskip
\begin{tikzpicture}[xscale=0.6, yscale=0.5]
\draw[Black!40, thick] (0.75,0.5) -- (8.75, 0.5);
\foreach \x in {1,...,8}
\draw[Black!40,  thick] (\x+0.25,0.25) -- (\x+0.25,0.75);

\foreach \x/\y in {2/1, 2/2, 3/1, 5/1,  5/3, 7/3}
\filldraw[CornflowerBlue, thick] (\x,\y ) rectangle (\x+0.5,\y+1);

\filldraw[Red, thick] (8,1 ) rectangle 
(8.5,3);
\filldraw[BrickRed, thick] (3,2 ) rectangle 
(3.5,3);

\foreach \x/\y in {2/1, 2/2, 3/1, 5/1, 5/3,  7/3}
\draw[Blue, thick] (\x,\y ) rectangle (\x+0.5,\y+1);
\draw[Blue, thick] (8,1 ) rectangle (8.5,3);
\draw[Blue, thick] (3,2 ) rectangle (3.5,3);

\draw[Red!50, thick, dotted] (7,1 ) rectangle (7.5,3);
\draw[BrickRed!50, thick, dotted] (5,2 ) rectangle (5.5,3);

\draw[BrickRed, very thick] (5.25,0.5) -- node[below]{\tiny $\delta_1$} (3.25,0.5);
\draw[Red, very thick] (7.25,0.5) -- node[below]{\tiny $\delta_2$} (8.25,0.5);

\end{tikzpicture}
\end{minipage}
    \caption{(Center top.) The initial distribution on which the mean-preserving spreads (MPS) are performed. (Left.) A left-handle mean-preserving spread.  (Center bottom.) A mean-preserving spread that is neither left- or right-handle. (Right.) A right-handle mean-preserving spread. Each block represents an atom of probability mass.}
\label{fig:mean-preserving-spreads}
\end{figure}
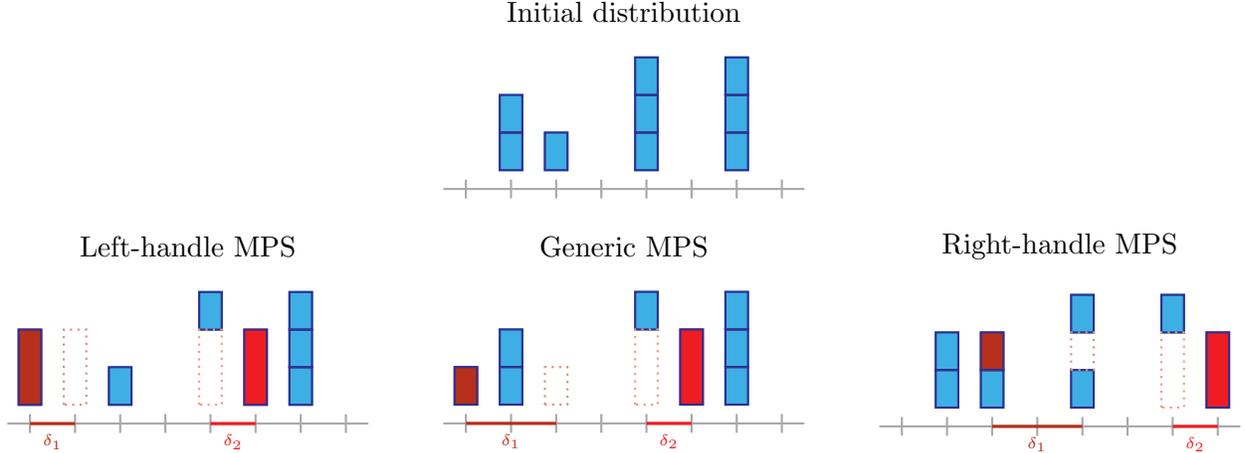

\begin{remark}
\label{rem:EMPS}
  Every mean-preserving spread is expressible as the result of a sequence of simple ones. Thus, due to the transitivity and continuity of risk preferences, there is no discrepancy between aversion to mean-preserving spreads and to simple mean-preserving spreads.
    However, the possibility of $\delta_1$ and $\delta_2$ taking distinct values, and thus $A_1$ and $A_2$ not being equiprobable, has an incidence for left- or right-handle mean-preserving spreads, as it truly adds a layer of flexibility in their respective definitions. A left-handle (resp. right-handle) mean-preserving spread cannot be turned into many simple ones if $\delta_1<\delta_2$ (resp. $\delta_1 >\delta_2$). 
\end{remark}

As we already stated above, for  $X,Y\in\mathcal{S}$, we have that $X\le_{\rm cx}Y$ if and only if $Y$ can be obtained from $X$ through a finite number of mean-preserving spreads. The finite number can be zero,  in which case $X\laweq Y$. In a similar vein, we define stochastic orders pertaining to left- or right-handle mean-preserving spreads as follows: For $X,Y\in\mathcal{S}$,
\begin{itemize}
\item We say that $X$ is dominated by $Y$ in left-handle convex order, denoted by $X\le_{\rm lhcx} Y$ if $Y$ can be obtained from $X$ through  a finite number of left-handle mean-preserving spreads.

\item We say that $X$ is dominated by $Y$ in right-handle convex order, denoted by $X\le_{\rm rhcx} Y$ if $Y$ can be obtained from $X$ through  a finite number of right-handle mean-preserving spreads.
\end{itemize}
It is straightforward to verify that both orders satisfy reflexivity, transitivity and antisymmetry. 
In the following result, we provide a characterization of left- and right-handle convex orders in terms of the compared random variables' distributions.


\begin{proposition}\label{prop:charac-handles}
For $X,Y\in\mathcal S$, let $(X^{\rm co},Y^{\rm co})$ be comonotonic with $X^{\rm co}\laweq X$ and $Y^{\rm co} \laweq Y$. Suppose $X\not \laweq Y$. The following two statements hold:
\begin{enumerate}[label={\rm (\roman*)}, ref={\rm (\roman*)}]
\item\label{item:charac-left-handle} $X\le_{\rm lhcx} Y$ if and only if $\E[X]=\E[Y]$, $\min X > \min Y$, and 
$ X^{\rm co} \le Y^{\rm co} $ a.s.~on the event $\{X^{\rm co}>\min X\}$. 

\item\label{item:charac-right-handle} $X\le_{\rm rhcx}Y$ if and only if $\E[X]=\E[Y]$, $\max X < \max Y$, and $X^{\rm co} \ge Y^{\rm co}$ a.s.~on the event $\{X^{\rm co}<\max X\}$.
\end{enumerate}
\end{proposition}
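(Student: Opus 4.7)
The plan is to prove statement (i) directly and deduce (ii) via a sign-flip symmetry. If $Y$ is obtained from $X$ by a single left-handle MPS with parameters $(A_1,A_2,\delta_1,\delta_2)$, then $-Y$ is obtained from $-X$ by a single right-handle MPS with parameters $(A_2,A_1,\delta_2,\delta_1)$, since $\{X=\min X\}=\{-X=\max(-X)\}$ and the ordering on $A_1, A_2$ reverses under negation. Iterating, $X\le_{\rm rhcx} Y$ is equivalent to $-X\le_{\rm lhcx}-Y$, and applying (i) to $(-X,-Y)$ yields (ii) once we use $\min(-X)=-\max X$ and the fact that the comonotonic coupling commutes with joint negation.

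For necessity in (i), proceed by induction on the length of the LH-MPS chain. For a single LH-MPS with $A_1\subseteq\{X=\min X\}$, properties (a) $\E[X]=\E[Y]$ and (b) $\min Y\le\min X-\delta_1<\min X$ are immediate. For (c), note that at any $t\ge\min X$ the mass $\p(A_1)$ relocated from $\min X$ to $\min X-\delta_1$ stays in $(-\infty,t]$ and contributes nothing to $F_Y(t)-F_X(t)$; only the upward displacement from $A_2$ contributes, giving
\[
F_Y(t)-F_X(t)=-\p\!\left(X\in A_2,\,t-\delta_2<X\le t\right)\le 0.
\]
A standard quantile-function argument converts this to $Y^{\rm co}\ge X^{\rm co}$ on $\{X^{\rm co}>\min X\}$. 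The inductive step is clean because the minimum strictly decreases along the chain, so the relevant comparison regions nest.

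For sufficiency in (i), I construct an explicit finite sequence of LH-MPS from $X$ to $Y$. The strategy is layer-by-layer: let $y_1<y_2<\cdots<y_k<\min X$ enumerate the values of $Y$'s support strictly below $\min X$, and process these from $y_k$ down to $y_1$. In phase $j$ (beginning at $j=k$), apply LH-MPS at the current minimum that transfer mass down to $y_j$, each paired with a compensating upward movement that contributes one incremental piece of $Y$'s distribution above $\min X$. A key enabling observation is that (a) and (c) jointly imply $X\le_{\rm cx} Y$, proved by splitting $\E[(X-t)_+]\le\E[(Y-t)_+]$ at the cases $t\ge\min X$ (direct from quantile dominance) and $t<\min X$ (from $(Y-t)_+\ge Y-t$ and the mean equality); this guarantees, by the Rothschild--Stiglitz decomposition \cite{RS70}, that enough ``upward room'' exists to pair every downward movement with a compensating upward one.

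The main obstacle is that a single LH-MPS uses one upward parameter $\delta_2$, so it cannot in one stroke distribute mass to multiple target locations above $\min X$; each ``conceptual'' exchange generally requires several LH-MPS at the same minimum level. Maintaining condition (c) with respect to $Y$ across all intermediate distributions is the technical heart of the argument, requiring careful selection of the $A_2$ sets and the $\delta_2$ values in a controlled order---for instance, processing upward targets from highest to lowest---so that no intermediate quantile-dominance violation ever arises. The finite supports of $X$ and $Y$ then guarantee termination in finitely many LH-MPS.
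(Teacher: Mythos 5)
Your necessity argument for \ref{item:charac-left-handle} and your reduction of \ref{item:charac-right-handle} to \ref{item:charac-left-handle} by the sign flip $X\mapsto -X$ are correct and essentially match the paper, which proves \ref{item:charac-left-handle} on a common comonotonic partition and dismisses \ref{item:charac-right-handle} as symmetric (your sign-flip observation is exactly Corollary~\ref{co:r&lhcx}, which the paper derives \emph{from} this proposition but which, as you note, holds already at the level of a single spread, so there is no circularity). The single-step computation $F_Y(t)-F_X(t)\le 0$ for $t\ge\min X$, the passage to quantile functions, and the observation that the running minimum is non-increasing along the chain so that $q_{X_{i-1}}(s)>\min X_0\ge \min X_{i-1}$ lets the one-step estimate be iterated, together give a complete induction; this is arguably more explicit than the paper's one-line contradiction.

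The gap is in sufficiency, and you have flagged it yourself: you describe a layer-by-layer strategy but do not execute the pairing of downward with upward moves, deferring it as ``the technical heart.'' Two points. First, the invariant you propose to protect---condition (c) relative to $Y$ at every intermediate distribution---is not what you need, and worrying about it is what makes the step look hard. Nothing in the definition of $\le_{\rm lhcx}$ requires intermediate laws to be comparable with $Y$; the only constraints on a single spread are that the descending set sits at the \emph{current} minimum and that $\delta_1\p(A_1)=\delta_2\p(A_2)$ with $\delta_1,\delta_2>0$, while the ascending set is otherwise unconstrained. So run two ``clocks'' in mass-times-distance units: the descending block (always the full set of current minimizers, peeling off the atom at $y_j$ when level $y_j$ is reached, exactly your layer ordering) spends a total budget $K$, the ascending cells, processed one at a time and each moved monotonically toward its target quantile value without overshooting, spend the same $K$ by the mean identity; each interval of the common refinement of the two clocks' breakpoints is one valid left-handle MPS, and finiteness follows from the finite supports. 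Writing out this matching is what is missing. Second, your layer ordering is not cosmetic but essential: moving the low cells one at a time to their final positions (which is how the paper's own greedy matching of cumulative deficits and excesses reads) leaves the second descending cell strictly above the current minimum, so that step would not be a left-handle MPS; your block-descent is the correct way to respect the left-handle constraint, and you should state that explicitly rather than leave the construction implicit.
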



Directly from Proposition~\ref{prop:charac-handles},
we have that $X\le_{\rm lhcx}Y$ or $X\le_{\rm rhcx}Y$ implies that the distribution functions of 
 $X$ and $Y$ possess the single-crossing property. This is consistent with the fact that the left- and right-handle convex orders are strictly stronger than the convex order.



The following corollary is a direct consequence of Proposition~\ref{prop:charac-handles}, which states that two loss random variables can be compared under the left-handle convex order if and only if their surplus counterparts can be compared under the right-handle convex order. 

\begin{corollary}\label{co:r&lhcx}
For $X,Y\in\mathcal S$, $X\le_{\rm lhcx} Y$ if and only if $-X\le_{\rm rhcx} -Y$.
\end{corollary}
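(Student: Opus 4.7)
My plan is to establish the equivalence directly at the level of a single mean-preserving spread, showing that a left-handle mean-preserving spread from $X$ to $Y$ is, after negation, exactly a right-handle mean-preserving spread from $-X$ to $-Y$. Iterating this bijective correspondence across finite sequences of spreads yields the conclusion. Alternatively, one could invoke Proposition~\ref{prop:charac-handles} and translate each of the three conditions (equality of means, comparison of minima versus maxima, and pointwise inequality along a comonotonic coupling) under the map $Z \mapsto -Z$; since $(X^{\rm co}, Y^{\rm co})$ comonotonic implies $(-X^{\rm co}, -Y^{\rm co})$ is also comonotonic, this gives the equivalence immediately.

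For the direct route, I would first suppose $Y \laweq X - \delta_1 \id_{A_1} + \delta_2 \id_{A_2}$ is a left-handle mean-preserving spread, so $\delta_1 \P(A_1) = \delta_2 \P(A_2)$, $X(\omega_1) \leq X(\omega_2)$ for all $\omega_1 \in A_1$, $\omega_2 \in A_2$, and $X(\omega_1) = \min X$ for $\omega_1 \in A_1$. Negating both sides gives
\begin{equation*}
-Y \laweq -X - \delta_2 \id_{A_2} + \delta_1 \id_{A_1}.
\end{equation*}
Relabeling $\delta_1' = \delta_2$, $A_1' = A_2$, $\delta_2' = \delta_1$, $A_2' = A_1$ puts this in the form \eqref{eq:mps-def} for $-X \to -Y$. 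The equal-mass condition $\delta_1' \P(A_1') = \delta_2' \P(A_2')$ is immediate; the monotonicity requirement $(-X)(\omega_1') \leq (-X)(\omega_2')$ on $A_1' \times A_2'$ becomes $X(\omega_2') \leq X(\omega_1')$ for $\omega_2' \in A_1$, $\omega_1' \in A_2$, which is the original monotonicity condition rewritten. Finally, the right-handle condition $(-X)(\omega_2') = \max(-X)$ on $A_2' = A_1$ follows from $X(\omega_1) = \min X$ on $A_1$. Thus $-Y$ is a right-handle mean-preserving spread of $-X$, and the reverse direction is identical by symmetry (applying the same argument starting from $-X$, $-Y$).

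Composing this correspondence across a finite sequence of spreads, $Y$ is reached from $X$ by $n$ left-handle mean-preserving spreads if and only if $-Y$ is reached from $-X$ by the same number of right-handle mean-preserving spreads; the trivial case $n=0$ corresponds to $X \laweq Y$ (equivalently $-X \laweq -Y$) and is handled by reflexivity of both orders. This gives $X \leq_{\rm lhcx} Y \iff -X \leq_{\rm rhcx} -Y$. I do not anticipate any real obstacle beyond careful bookkeeping when swapping the roles of $A_1$ and $A_2$ and verifying that the extremum condition transfers correctly from "minimum of $X$" to "maximum of $-X$".
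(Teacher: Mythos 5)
Your proof is correct, and your primary route differs from the paper's. The paper obtains this corollary as a one-line consequence of Proposition~\ref{prop:charac-handles}: negation swaps minima with maxima, reverses the pointwise inequality along the comonotonic coupling, and preserves comonotonicity of $(X^{\rm co},Y^{\rm co})$, so the two characterizations map onto each other (this is exactly your ``alternative'' route). Your main argument instead works directly from Definition~\ref{def:leftright-handle}: negating $Y \laweq X - \delta_1 \id_{A_1} + \delta_2 \id_{A_2}$ and relabeling $(\delta_1,A_1)\leftrightarrow(\delta_2,A_2)$ exhibits $-Y$ as a mean-preserving spread of $-X$ whose ``raised'' set is the old $A_1$, so the left-handle condition $X=\min X$ on $A_1$ becomes precisely the right-handle condition $-X=\max(-X)$ there; iterating over the finite chain of spreads and handling $n=0$ by reflexivity finishes the argument. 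Your bookkeeping checks out (equal mass, the reversed monotonicity requirement, and positivity of the relabeled $\delta$'s all transfer). What the direct route buys is self-containment: it needs neither the characterization of Proposition~\ref{prop:charac-handles} nor its standing hypothesis $X\not\laweq Y$, and it makes the one-to-one correspondence between left- and right-handle spreads explicit. What the paper's route buys is brevity once the characterization is already in hand. Either is acceptable.
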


We give an additional closedness property of left- and right-handle convex orders, which will be convenient in the proofs of some results. 

\begin{proposition}\label{prop:handlecontinuity}
The partial orders $\le_{\rm rhcx}$ and $\le_{\rm lhcx}$ are closed with respect to bounded convergence. 
\end{proposition}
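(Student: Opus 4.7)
The plan is to reduce the claim to closedness of $\le_{\rm rhcx}$ alone, then verify the three conditions in Proposition~\ref{prop:charac-handles}\ref{item:charac-right-handle} in the limit, with the main work going into a reverse-Fatou argument that controls $\P(X_n=\max X_n)$. Since $X\mapsto -X$ preserves bounded convergence in $\mathcal S$, Corollary~\ref{co:r&lhcx} immediately collapses the two statements to one: show that if $X_n\stackrel{\rm B}\to X$ and $Y_n\stackrel{\rm B}\to Y$ with $X_n\le_{\rm rhcx}Y_n$ for each $n$, then $X\le_{\rm rhcx}Y$. The trivial case $X\laweq Y$ can be dismissed at once, so assume $X\not\laweq Y$ and aim at verifying the three hypotheses of Proposition~\ref{prop:charac-handles}\ref{item:charac-right-handle}.

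Two of the three conditions follow routinely. Because $\le_{\rm rhcx}$ is stronger than $\le_{\rm cx}$, applying $x$ and $-x$ as convex functions gives $\E[X_n]=\E[Y_n]$, and bounded convergence of expectations yields $\E[X]=\E[Y]$. For the comonotonic inequality I would switch to the quantile representation: fix $U\sim\mathrm{Uniform}(0,1)$ and put $X_n^{\rm co}=F_{X_n}^{-1}(U)$, $Y_n^{\rm co}=F_{Y_n}^{-1}(U)$, and likewise for $X,Y$. Bounded convergence implies convergence in distribution, so $F_{X_n}^{-1}\to F_X^{-1}$ Lebesgue-a.e. on $(0,1)$, and analogously for $Y_n$. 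From Proposition~\ref{prop:charac-handles}\ref{item:charac-right-handle} (or trivially when $X_n\laweq Y_n$) I would record the ``pre-limit'' inequality $F_{X_n}^{-1}(u)\ge F_{Y_n}^{-1}(u)$ for a.e.\ $u\in[0,1-p_n^*)$, with $p_n^*=\P(X_n=\max X_n)$.

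The main obstacle is that neither $\max X_n$ nor $p_n^*$ behaves continuously under pointwise convergence, so I cannot send the endpoint of this interval directly to $1-\P(X=\max X)$. The key step I plan to establish is therefore $\limsup_n p_n^*\le p^*:=\P(X=\max X)$. Passing to a subsequence on which $p_n^*\to\beta:=\limsup_n p_n^*$ and $\max X_n\to M_*$ (possible by uniform boundedness), pointwise convergence already forces $M_*\ge \max X$. Setting $E_n=\{X_n=\max X_n\}$, any $\omega\in\limsup_n E_n$ satisfies $X_n(\omega)\to M_*$ along a subsequence and also $X_n(\omega)\to X(\omega)$, so $X(\omega)=M_*$. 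Reverse Fatou then gives $\beta\le \P(\limsup_n E_n)\le \P(X=M_*)$, which is zero if $M_*>\max X$ and at most $p^*$ if $M_*=\max X$; in either case $\beta\le p^*$. Hence for every $u<1-p^*$ the pre-limit inequality holds for all large $n$, and passing to the limit at continuity points of $F_X^{-1}$ and $F_Y^{-1}$ yields $F_X^{-1}(u)\ge F_Y^{-1}(u)$ for a.e.\ $u\in[0,1-p^*)$, i.e., $X^{\rm co}\ge Y^{\rm co}$ a.s.\ on $\{X^{\rm co}<\max X\}$.

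The remaining condition $\max X<\max Y$ should then come for free: $X\not\laweq Y$ combined with $\E[X]=\E[Y]$ forces $\P(X^{\rm co}<Y^{\rm co})>0$, and the comonotonic inequality just established confines this event to $\{X^{\rm co}=\max X\}$, so $Y^{\rm co}>\max X$ with positive probability. All three hypotheses of Proposition~\ref{prop:charac-handles}\ref{item:charac-right-handle} are then met, giving $X\le_{\rm rhcx}Y$. The only delicate point in this scheme is the reverse-Fatou bound $\limsup_n p_n^*\le p^*$; everything else is routine quantile bookkeeping together with the convex-order closedness that bounded convergence grants.
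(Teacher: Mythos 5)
Your proposal is correct and follows essentially the same route as the paper's proof: reduce to a single handle via Corollary~\ref{co:r&lhcx}, obtain the mean/extremum conditions from closedness of the convex order, and pass the comonotonic quantile inequality of Proposition~\ref{prop:charac-handles} to the limit after controlling the probability mass at the extremum. Your reverse-Fatou bound $\limsup_n \P(X_n=\max X_n)\le\P(X=\max X)$ is precisely the careful justification of the step the paper dismisses as ``straightforward to verify'' (there stated for the minimum), and your explicit derivation of $\max X<\max Y$ from the other two conditions is a detail the paper leaves implicit.
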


\subsection{Right- and left-handle risk aversion and risk--insurance parity}
\label{sect:right-handle}

We introduce below right-handle risk aversion. To the best of our knowledge, this notion is new to the literature. 

\begin{definition}[Right-handle risk aversion]
\label{def:right-handle-r-a}
    A risk preference $\succsim$ exhibits \textit{right-handle risk aversion} if, for every $X,Y\in \mathcal{S}$ where $X\leq_{\rm rhcx} Y$, it holds that $X\succsim Y$. 
\end{definition}

Right-handle risk aversion represents fear of the worst-case scenario: a right-handle risk averse individual will prefer to reduce the variability of their loss when it also means it reduces their worst possible outcome. 
This leaves some room for risk-loving behaviors. For example, the individual may like to gamble small amounts of money, while still preferring to insure their house. 

A direct consequence of 
Proposition \ref{prop:charac-handles} is that $\E[X]\le_{\rm rhcx} X$ for all $X\in\mathcal S$. Hence, right-handle risk aversion is stronger than weak risk aversion. 
 On the other hand, because right-handle risk aversion is defined using a  specific case of mean-preserving spreads, it is weaker than strong risk aversion. The dual-utility characterization in Section~\ref{sect:yaari} shows that these inclusions hold strictly. 

\begin{remark}
    Under the expected utility (EU) framework, weak risk aversion and strong risk aversion are equivalent in that they are both characterized by a concave utility function. Since right-handle risk aversion lies conceptually between weak and strong risk aversion, it is also 
    characterized by concavity of utility functions in the EU framework. 
\end{remark}


In line with the main considerations of this article, we now examine under which set of contracts insurance parity and right-handle risk aversion are equivalent.  

\begin{definition}[Right-handle risk--insurance parity]
    A contract set $\mathcal{C}$ satisfies \textit{right-handle risk--insurance parity} when, for all continuous risk preferences, insurance propensity to $\mathcal{C}$ is equivalent to right-handle risk aversion.  
\end{definition}

 For a deductible-indemnity standard contract with premium $\pi$ and deductible $d$, let $\theta = \pi+d$. We will refer to $\theta$ as the \textit{price parameter} of the deductible contract.

\begin{theorem}
\label{th:right-handle-r-i-p-if}
A standard (resp.~$\rho$-priced, with any premium principle $\rho$)  contract set $\mathcal{C}$ satisfies right-handle risk--insurance parity if it comprises precisely 
\begin{enumerate}
\item[(a)]  deductible-indemnity contracts whose price parameters (resp.~deductibles) are dense  in $\mathbb{R}$; 
\item[(b)] any, possibly none, full-indemnity contracts.
\end{enumerate}
 \end{theorem}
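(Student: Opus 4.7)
The plan is to establish both directions of the equivalence ``insurance propensity to $\mathcal{C}$ $\iff$ right-handle risk aversion.'' For the direction right-handle $\Rightarrow$ insurance propensity, I verify for each contract $C \in \mathcal{C}$ and each $X \laweq Y$ in $\mathcal{S}$ that $X + C(X) \leq_{\rm rhcx} X + C(Y)$ via Proposition~\ref{prop:charac-handles}\ref{item:charac-right-handle}, and right-handle risk aversion then delivers $X + C(X) \succsim X + C(Y)$. For a deductible contract $C(Z) = \pi - (Z-d)_+$, writing $U = \pi + \min(X,d)$ and $V = X + \pi - (Y-d)_+$, the equality of means follows from $X \laweq Y$; the pointwise bound $V \leq X + \pi$ yields the comonotonic-quantile inequality $F_V^{-1} \leq \pi + F_X^{-1} = F_U^{-1}$ on $\{F_X^{-1}(\cdot) < d\}$; and either $U = V$ a.s.\ (when $X, Y$ agree comonotonically above $d$) or $\max U < \max V$, completing the rhcx comparison. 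For a full-indemnity contract the insurance position is a constant, making the quantile condition vacuous and leaving only a straightforward max check.

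For the converse direction, by transitivity it suffices to show $X \succsim Y$ when $Y = X - \delta_1 \id_{A_1} + \delta_2 \id_{A_2}$ is a single right-handle mean-preserving spread with $A_2 \subseteq \{X = M\}$, $M := \max X$, and $q_1 \delta_1 = q_2 \delta_2$. The core construction is a ``ladder'' single-step insurance transition valid when $k := \delta_1/\delta_2 \in \mathbb{N}$ and $M$ lies in the dense price-parameter set $\Pi$, using the deductible contract $C(Z) = -(Z-M)_+$ (so $\theta = M$). Partition $A_2$ into $k$ subsets $A_2^{(1)}, \ldots, A_2^{(k)}$ of equal probability $q_1$; set $Z = M + i\delta_2$ on $A_2^{(i)}$ and $Z = X$ elsewhere; set $W = M + k\delta_2$ on $A_1$, $W = M + (i-1)\delta_2$ on $A_2^{(i)}$ for $i = 2, \ldots, k$, and $W$ on the remaining events distributed so as to match $Z$'s law (using atomlessness of $\Omega$ via a measure-preserving transport from $A_1$ to $A_2^{(1)}$). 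A pointwise computation shows $Z + C(Z) = X$ and $Z + C(W) = Y$, so insurance propensity yields $X \succsim Y$.

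Two approximations extend the ladder to full generality: for non-integer $k$, approximate $(\delta_1, \delta_2)$ by pairs with integer ratios, yielding $Y_n \to Y$ boundedly, and pass to the limit by continuity of $\succsim$; for $M \notin \Pi$, replace the atoms at $M$ and $M + \delta_2$ in $X$ and $Y$ by atoms at $M_n \in \Pi$ and $M_n + \delta_2$, which preserves the right-handle MPS structure by Proposition~\ref{prop:handlecontinuity}, and again pass to the limit. The $\rho$-priced variant runs in parallel: the constant $\pi$ is replaced by $\rho((Z-d)_+)$, absorbed by choosing the distribution of $Z$ on $\{Z \geq d\}$ so that the effective price parameter $d + \rho((Z-d)_+)$ matches the current maximum. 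The main obstacle is the non-simple right-handle MPS case (Remark~\ref{rem:EMPS}), which cannot be decomposed into simple spreads and so cannot be handled by a chain of simple-MPS arguments; the ladder construction circumvents this by packaging the non-simple structure into a single contract via an engineered joint distribution for $(Z, W)$. A secondary challenge is the mismatch between $\max X$ and $\Pi$, since the ladder strictly requires $\theta = M$ to produce non-trivial transitions, which is the role played by the truncation/approximation argument combined with continuity.
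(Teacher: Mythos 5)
Your overall strategy is the same as the paper's: for the forward direction you verify $X+C(X)\le_{\rm rhcx}X+C(Y)$ via Proposition~\ref{prop:charac-handles}\ref{item:charac-right-handle} (the paper does this by an explicit pointwise decomposition on three events rather than quantiles, but the content is the same), and for the converse your ``ladder'' is exactly the paper's construction: the same $Z$ with values $\max X+i\delta_2$ on a $k$-fold equiprobable partition of $A_2$, the same $W$ obtained by cycling the ladder and transporting $A_1$ onto $A_2^{(1)}$ by a measure-preserving map, the same identities $Z+C(Z)=X$ and $Z+C(W)=Y$, and continuity to handle the dense set of price parameters.

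There is, however, a genuine gap in your passage from integer $k=\delta_1/\delta_2$ to general $k$. You propose to ``approximate $(\delta_1,\delta_2)$ by pairs with integer ratios, yielding $Y_n\to Y$ boundedly.'' But the ratio $\delta_1/\delta_2=\p(A_2)/\p(A_1)$ is pinned down by mean preservation, so you cannot perturb it without also perturbing the sets $A_1,A_2$; and once you do, a \emph{single} integer-ratio ladder cannot converge to $Y$: taking $B_2=A_2$ and $\p(B_1)=\p(A_2)/\lceil r\rceil$ or $\p(A_2)/\lfloor r\rfloor$ leaves a discrepancy with $A_1$ of measure bounded away from zero, and when $\p(A_2)<\p(A_1)$ (i.e.\ $\delta_1<\delta_2$) there is no nearby integer ratio at all. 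The missing idea is an intermediate chaining step: for rational $\p(A_1)/\p(A_2)=m/n$, partition $A_1$ into $m$ equiprobable pieces and apply $m$ successive integer-ratio right-handle spreads $X_i=X_{i-1}-\delta_1\id_{A_1^{(i)}}+(\delta_2/m)\id_{A_2}$, concluding $X\succsim Y$ by transitivity of $\succsim$; only then does one pass to irrational ratios by inner-approximating $A_1$ and $A_2$ with subsets of rational measure and invoking continuity. This is how the paper closes the argument, and without some version of it your limit $Y_n\to Y$ does not exist. The gap is localized and repairable (you already invoke transitivity to reduce to a single spread), but as written the step fails.
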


Theorem~\ref{th:right-handle-r-i-p-if} shows that deductible-indemnity contracts are indeed the type of contracts inducing right-handle risk aversion through risk--insurance parity. Insurance contracts of this type provide a coverage for risks greater than the deductible, but the portion of the risk below that deductible remains uncovered. This observation provides some high-level insight to explain the asymmetry in right-handle risk aversion.




We are now interested in whether other types of contract sets may satisfy right-handle risk--insurance parity. In other terms, we aim to characterize right-handle risk--insurance parity analogously to the results on first and second risk--insurance parity in the previous section.
To proceed, we will need to restrict our attention to Lipschitz-continuous (LC) indemnity functions, that is, $I$ satisfying $|I(x) - I(y)| \leq |x-y|$ for all $x,y\in\mathbb{R}$. While we did not require it thus far, this assumption is common for indemnity functions to deter sabotage; see, e.g., \citet[Section~4]{HMS83}. By the revelation principle (\citealp{M79}, \citealp{HT81}), non-LC indemnity functions (incentivizing sabotage)  never produce an insurance contract that is Pareto optimal; thus, their relevance in insurance contexts is limited and exclusively tied to external considerations. Proportional, deductible and limit indemnity functions satisfy LC; fixed indemnity functions do not, as we discussed in Section~\ref{sect:indemnity}.  An \emph{LC contract set} is a contract set such that the indemnity functions of its elements are LC.

 It turns out that deductible-only contract sets are the only LC contract sets that satisfy right-handle risk--insurance parity.
 This is made explicit in the following theorem. 
\begin{theorem}
\label{th:right-handle-r-i-p-shape}
    A LC contract set satisfying right-handle risk--insurance parity exclusively contains deductible-indemnity or full-indemnity contracts. 
\end{theorem}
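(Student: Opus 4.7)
The plan is to prove the contrapositive. Suppose $\mathcal C$ is LC and contains a contract $C_0$ whose indemnity $I_0$ is not, up to an additive constant, a deductible- or full-indemnity function. I will construct a continuous right-handle risk averse preference $\succsim^*$ that is not insurance propense to $\mathcal C$, thereby violating right-handle risk--insurance parity. As a reduction, note that whether $C_0$ is standard or $\rho$-priced, for any $X\laweq Y$ we have $I_0(X)\laweq I_0(Y)$, so the premium or premium-principle contribution agrees in both positions and $X+C_0(X)=X-I_0(X)+\pi_0$, $X+C_0(Y)=X-I_0(Y)+\pi_0$ for a common constant $\pi_0$; this $\pi_0$ plays no role, since $\le_{\rm rhcx}$ is translation invariant.

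The core step is a structural lemma. Writing $J=I_0'$ (existing a.e.~with $0\le J\le 1$ by the Lipschitz property), $I_0$ is, up to an additive constant, a deductible- or full-indemnity function if and only if $J$ is non-decreasing with values in $\{0,1\}$. Under our assumption this fails, so either (A) $J$ takes a value in $(0,1)$ on a set of positive Lebesgue measure, or (B) $J\in\{0,1\}$ a.e.~but is not non-decreasing. In either case I will produce $a_1<a_2<a_3<a_4$ satisfying (i) $I_0(a_2)>I_0(a_1)$, (ii) $I_0(a_3)>I_0(a_2)$, and (iii) $I_0(a_4)-I_0(a_3)<a_4-a_3$. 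In Case (A), pick a Lebesgue density point of $\{J\in(0,1)\}$ and split a small neighborhood of it into four sub-intervals; each relevant sub-integral of $J$ is then strictly positive and strictly less than the corresponding length. In Case (B), choose $a_1<a_2$ inside an early region of $\{J=1\}$ with $a_2$ a right Lebesgue density point of $\{J=1\}$, and $a_3<a_4$ inside a later region where $\{J=0\}$ has positive density; taking $a_3$ sufficiently larger than $a_2$ uses the right-sided density at $a_2$ to force (ii).

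Now build the counterexample: let $X$ be uniform on $\{a_1,a_2,a_3,a_4\}$ and $Y$ be anti-monotonic to $X$ (so $Y=a_{5-i}$ on $\{X=a_i\}$). The positions $U=X+C_0(X)$ and $V=X+C_0(Y)$ take values $u_i=a_i-I_0(a_i)+\pi_0$ (non-decreasing in $i$ because $I_0$ is LC) and $v_i=a_i-I_0(a_{5-i})+\pi_0$ (strictly increasing in $i$, since $v_{i+1}-v_i\ge a_{i+1}-a_i>0$), each with probability $1/4$. Inequalities (i)--(iii) give $\max V-\max U=I_0(a_4)-I_0(a_1)>0$ (so $U\not\laweq V$), $u_3<u_4=\max U$ (by (iii)), and $u_3-v_3=I_0(a_2)-I_0(a_3)<0$ (by (ii)). Under the comonotonic coupling, the third quartile then contains $\omega$'s with $U^{\rm co}(\omega)=u_3<\max U$ and $U^{\rm co}(\omega)=u_3<v_3=V^{\rm co}(\omega)$, so Proposition~\ref{prop:charac-handles}\ref{item:charac-right-handle} yields $U\not\le_{\rm rhcx} V$.

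Finally, define $\succsim^*$ on $\mathcal S$ by $X\succsim^* Y\iff X\le_{\rm rhcx} Y$. Transitivity of $\le_{\rm rhcx}$, its dependence only on distributions (Proposition~\ref{prop:charac-handles}), and its closedness under bounded convergence (Proposition~\ref{prop:handlecontinuity}) make $\succsim^*$ a continuous risk preference, and it is right-handle risk averse by construction; yet $U\not\succsim^* V$, so $\succsim^*$ fails insurance propensity to $\mathcal C$, contradicting the assumed right-handle risk--insurance parity. The hardest part is the structural lemma, especially securing (ii) in Case (B), where $J$ may vanish on an entire interval separating the region used for (i) from the region used for (iii); the right-sided Lebesgue density point argument at $a_2$ is the key tool that enables strict growth of $I_0$ immediately past $a_2$.
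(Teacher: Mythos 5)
Your proof is correct and shares the paper's overall strategy: take $\succsim$ to be $\le_{\rm rhcx}$ itself (continuous by Proposition~\ref{prop:handlecontinuity}, right-handle risk averse by construction, law invariant and transitive), and defeat insurance propensity by exhibiting $Z\laweq W$ with $Z+C_0(Z)\not\le_{\rm rhcx} Z+C_0(W)$, the failure being detected through Proposition~\ref{prop:charac-handles}\ref{item:charac-right-handle}. Where you genuinely differ is in the counterexample construction. The paper works directly from a difference-quotient characterization: it extracts three points $x_0<y_0<z_0$ with $(I(y_0)-I(x_0))/(y_0-x_0)\in(0,1]$ and $(I(z_0)-I(y_0))/(z_0-y_0)\in[0,1)$ and uses a three-point cyclic coupling of $(Z,W)$; you instead derive a four-point configuration $a_1<a_2<a_3<a_4$ with two strict increases of $I_0$ followed by a sub-unit slope, and use the antimonotone coupling on four equiprobable atoms. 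Your verification against Proposition~\ref{prop:charac-handles}\ref{item:charac-right-handle} is sound: the $v_i$ are strictly increasing, the $u_i$ are non-decreasing with $u_3<u_4=\max U$ by (iii), $u_3<v_3$ by (ii), and $\max V-\max U=I_0(a_4)-I_0(a_1)>0$ rules out $U\laweq V$, so the comonotonic rearrangement pairs $u_3$ with $v_3$ on a probability-$1/4$ event inside $\{U^{\rm co}<\max U\}$. The only place your route is heavier than necessary is the structural lemma: the Lebesgue-density analysis of $J=I_0'$ (in particular the right-density-point device in Case (B)) can be bypassed entirely, since from the three-point negation the intermediate value theorem yields $m\in(x_0,y_0)$ with $I_0(x_0)<I_0(m)<I_0(y_0)$, and $(a_1,a_2,a_3,a_4)=(x_0,m,y_0,z_0)$ already satisfies (i)--(iii). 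One cosmetic point, shared with the paper's own statement: ``deductible- or full-indemnity'' must be read up to an additive constant in the indemnity, which for standard contracts is absorbed into the premium, as you correctly note.
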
 

We now give the following characterization of right-handle risk--insurance parity for LC standard contract sets. 

\begin{proposition}
\label{prop:right-handle-r-i-p-double-equiv}
A standard  LC contract set $\mathcal{C}$ satisfies right-handle risk--insurance parity if and only if it comprises precisely 
\begin{enumerate}
\item[(a)]  deductible-indemnity contracts whose price parameters  are dense  in $\mathbb{R}$; 
\item[(b)] any, possibly none, full-indemnity contracts.    
\end{enumerate}
\end{proposition}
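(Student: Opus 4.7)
The plan is to split the proof into the two directions. The ``if'' direction is immediate from Theorem~\ref{th:right-handle-r-i-p-if} applied to the standard-contract case. For the ``only if'' direction, Theorem~\ref{th:right-handle-r-i-p-shape} already forces every contract in $\mathcal{C}$ to be deductible- or full-indemnity, so the only remaining task is to show that the set $\Theta=\{\pi+d: X\mapsto\pi-(X-d)_+\in\mathcal{C}\}$ of deductible price parameters is dense in $\R$. I will argue this by contradiction: assuming that some open interval $(a,b)$ is disjoint from $\Theta$, I will exhibit a pair $X_0\le_{\rm rhcx}Y_0$ with $(X_0,Y_0)\notin\succsim^{\mathcal{C}}_*$, so that $\succsim^{\mathcal{C}}_*$ itself becomes a continuous, law-invariant, transitive preference insurance propense to $\mathcal{C}$ yet failing right-handle risk aversion, contradicting right-handle risk--insurance parity.

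To set up the pair, I pick $c\in(a,b)$, $x_0<c$, and $\delta>0$ small, and define $X_0=x_0\id_{A_1}+c\id_{A_2}$ and $Y_0=(x_0-\delta)\id_{A_1}+(c+\delta)\id_{A_2}$ on a partition $\{A_1,A_2\}$ of $\Omega$ with $\p(A_1)=\p(A_2)=1/2$. Proposition~\ref{prop:charac-handles}\ref{item:charac-right-handle} immediately yields $X_0\le_{\rm rhcx}Y_0$. The crux is then to prove $(X_0,Y_0)\notin\succsim^{\mathcal{C}}_*$ via a chain analysis. The first step $Z_1+C_1(Z_1)=X_0$ rules out full-indemnity contracts (which make the left side a constant, incompatible with the two-point $X_0$), so $C_1$ is a deductible with parameters $(d_1,\pi_1)$ and $\pi_1+Z_1\wedge d_1=X_0$ pointwise. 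Matching the upper support forces $d_1\ge c-\pi_1$, i.e. $\theta_1\ge c$; the boundary case $d_1=c-\pi_1$ gives $\theta_1=c\in(a,b)$ and is excluded, so in fact $d_1>c-\pi_1$ strictly and $\theta_1\ge b$. This pins down $Z_1=X_0-\pi_1$ uniquely, and every $W_1\laweq Z_1$ satisfies $W_1\le c-\pi_1<d_1$, so $C_1(W_1)=\pi_1$ is constant and $Z_1+C_1(W_1)=X_0$ pointwise. Iterating, every chain from $X_0$ is frozen at $X_0$, so the only $Y\in\mathcal{S}$ with $X_0\succsim^{\mathcal{C}}Y$ is $Y=X_0$, and transitive closure adds nothing new. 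For the bounded-convergence closure, any sequence with $X_n\succsim^{\mathcal{C}}Y_n$, $X_n\to X_0$, $Y_n\to Y_0$ eventually has $\max X_n\in(a,b)$ and $X_n$ non-constant, so the same analysis forces $Y_n=X_n$ pointwise for large $n$; but then $Y_n\to X_0\ne Y_0$, a contradiction. This exhausts both kinds of closure steps, giving $(X_0,Y_0)\notin\succsim^{\mathcal{C}}_*$.

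The main obstacle is the bookkeeping in this chain-step analysis: rigorously excluding the boundary case $\theta_1=c$ and showing that $\theta_1\ge b$ combined with the support of $X_0$ lying strictly below $d_1$ forces both $Z_1$ and $W_1$ to stay below $d_1$, making $C_1(W_1)$ a constant and freezing the chain. A secondary concern is to justify that $\succsim^{\mathcal{C}}_*$ is a genuine law-invariant continuous preference and that the analysis survives passage from the single $X_0$ to its bounded-convergence neighborhoods; the latter reduces to the observation that for $n$ large enough, $\max X_n\in(a,b)$ and $X_n$ is non-constant, whence the previous argument applies verbatim.
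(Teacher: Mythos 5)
Your ``if'' direction and the reduction of the contract shapes via Theorem~\ref{th:right-handle-r-i-p-shape} coincide with the paper, and your analysis of the base relation $\succsim^{\mathcal C}$ at the two-point risk $X_0$ is correct: since every admissible price parameter lies outside $(a,b)$ while $\max(Z_1+C_1(Z_1))$ would have to equal $\theta_1$ whenever the deductible binds, the chain indeed freezes and $X_0\succsim^{\mathcal C}Y$ forces $Y\laweq X_0$. The gap is in the closure step. First, bounded (pointwise) convergence $X_n\stackrel{\rm B}{\to}X_0$ does \emph{not} imply $\max X_n\to\max X_0$: for instance $X_n=X_0+(b+1-c)\id_{B_n}$ with $B_n\subseteq A_2$, $\p(B_n)>0$, $B_n\downarrow\varnothing$, converges boundedly to $X_0$ yet has $\max X_n=b+1\notin(a,b)$ for every $n$; so your claim that ``eventually $\max X_n\in(a,b)$'' fails, and for such $X_n$ the chain is \emph{not} frozen, since right-handle spreads at the spurious maximum become available. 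Second, $\succsim^{\mathcal C}_*$ is the smallest \emph{transitive and closed} superset, which in general is produced by interleaving transitive closure and topological closure (the closure of a transitive relation need not be transitive); checking only limits of pairs in the base relation $\succsim^{\mathcal C}$ does not certify that $(X_0,Y_0)$ is excluded, because a pair could enter as a limit of long transitive chains, then be composed further with other limit pairs, and so on.

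The standard way to rule out $(X_0,Y_0)\in\succsim^{\mathcal C}_*$ is to exhibit one explicit transitive, closed, law-invariant superset of $\succsim^{\mathcal C}$ that omits the pair, and this is exactly what the paper does: it defines $X\succsim Y$ iff ($X\le_{\rm rhcx}Y$ and $\max X$ can be approached by a sequence in $\Pi$), or $X=\E[Y]$, or $X\laweq Y$; it then checks that this relation is transitive, continuous (using the fact, from the proof of Proposition~\ref{prop:handlecontinuity}, that along $\le_{\rm rhcx}$-comparable convergent sequences either the maxima converge or the limits are identically distributed), and insurance propense to $\mathcal C$ (the case $\max Z<d$ is absorbed by the $X=\E[Y]$ branch, the case $\max Z\ge d$ by noting $\theta=\max(Z+C_\theta(Z))\in\Pi$), while failing right-handle risk aversion by construction. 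Your argument would be repaired by replacing the direct analysis of $\succsim^{\mathcal C}_*$ with such an explicit relation, e.g.\ requiring $\max X\notin(a,b)$ in the first branch.
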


 Disregarding the LC assumption, one may construct noncontinuous contracts that, when added to the deductible-only contract set, still preserve right-handle risk–insurance parity; see Example~\ref{ex:non-LC} in Appendix~\ref{sec:ex-non-LC}. However, including such irregular contracts would significantly complicate the analysis and obscure the interpretability of the characterization result. In particular, the necessity part of Proposition~\ref{prop:right-handle-r-i-p-double-equiv} would become more cumbersome to formulate, weakening both the elegance and the clarity of the theoretical insight.

While Theorems~\ref{th:right-handle-r-i-p-if} and~\ref{th:right-handle-r-i-p-shape} apply to both standard and $\rho$-priced contract sets, obtaining a result similar to Proposition~\ref{prop:right-handle-r-i-p-double-equiv} for $\rho$-priced contract sets is arduously feasible. The exhaustiveness of price parameters is intricate to translate to $\rho$-price contract sets, since it will vary with $\rho$ and depend on the set of indemnity random variables required when establishing that insurance propensity implies right-handle risk aversion.

We next define left-handle risk aversion, the symmetric counterpart to right-handle risk aversion. 

\begin{definition}[Left-handle risk aversion]
\label{def:left-handle-r-a}
    A risk preference $\succsim$ exhibits \textit{left-handle risk aversion} if, for every $X,Y\in \mathcal{S}$ where $X\le_{\rm lhcx}Y$, it holds that $X\succsim Y$. 
\end{definition}

Left-handle risk aversion represents an aversion to the fact that the realized loss is too low.  
A left-handle-risk-averse decision-maker prefers to reduce their best-case-scenario outcome if it means less variability in their loss. An example of such would be an individual who prefers to buy a cheaper lottery ticket, with a smaller jackpot, over a pricier lottery ticket with a bigger jackpot. In terms of insurance-related risks, left-handle risk aversion is perhaps less natural as individuals buying insurance usually worry less about their minimal loss than the catastrophic case of total loss. 
From its symmetry with right-handle risk aversion, left-handle risk aversion is strictly between weak  and strong risk aversion. 

Next, we are interested in the set of contracts under which insurance propensity and left-handle risk aversion coincide.

\begin{definition}[Left-handle risk--insurance parity]
    A contract set $\mathcal{C}$ satisfies \textit{left-handle risk--insurance parity} when, for all continuous risk preferences, insurance propensity to $\mathcal{C}$ is equivalent to left-handle risk aversion.  
\end{definition}

Let us introduce an alternate formulation of limit-indemnity contracts: 
consider the $\rho$-priced contract $C_{\zeta} : X\mapsto \rho(X\wedge\zeta - \zeta) - (X\wedge\zeta - \zeta)$ with $\zeta\in\mathbb{R}$. This adjustment represents taking a portion of the premium and considering it as a basis for the indemnity; this is to allow $\zeta$ to exert shifts in the distribution, as a deductible  does for deductible-indemnity contracts. We will refer to $\zeta$ as the \textit{level parameter} of the limit-indemnity contract.  

\begin{theorem}
\label{th:left-handle-r-i-p-if}
A standard (resp.~$\rho$-priced, with any premium principle $\rho$) contract set $\mathcal{C}$ satisfies left-handle risk--insurance parity if it comprises precisely 
\begin{enumerate}
\item[(a)] limit-indemnity contracts whose premiums (resp.~level parameters) are dense in $\mathbb{R}$;
\item[(b)] any, possibly none,  full-indemnity contracts. 
\end{enumerate}
 \end{theorem}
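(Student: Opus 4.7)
The plan is to derive this theorem from Theorem~\ref{th:right-handle-r-i-p-if} by transporting the entire statement through the negation map $X\mapsto -X$, leveraging the fact that this duality is already built into the paper via Corollary~\ref{co:r&lhcx}. Concretely, the sign-flip converts left-handle convex order into right-handle convex order and, as I will verify, converts limit-indemnity contracts into deductible-indemnity contracts.

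Given any continuous risk preference $\succsim$, I define the \emph{dual preference} $\succsim^*$ by $X\succsim^* Y \iff -X\succsim -Y$. Law invariance, transitivity, and continuity under bounded convergence all pass immediately from $\succsim$ to $\succsim^*$, so $\succsim^*$ is again a continuous risk preference; Corollary~\ref{co:r&lhcx} then yields that $\succsim$ is left-handle risk averse if and only if $\succsim^*$ is right-handle risk averse. Similarly, for every contract $C$ I set $\tilde C(Z):=-C(-Z)$ and define $\tilde{\mathcal C}:=\{\tilde C: C\in\mathcal C\}$. The identity $-(X+C(X))=(-X)+\tilde C(-X)$, together with $-X\laweq -Y \iff X\laweq Y$, shows that insurance propensity of $\succsim$ to $\mathcal C$ is equivalent to insurance propensity of $\succsim^*$ to $\tilde{\mathcal C}$.

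The decisive step is identifying $\tilde{\mathcal C}$ explicitly so that Theorem~\ref{th:right-handle-r-i-p-if} applies. Using the algebraic identity $(-z)\wedge\lambda-\lambda=-(z-(-\lambda))_+$, the dual of the standard limit-indemnity contract $x\mapsto\pi-x\wedge\lambda$ becomes the standard deductible-indemnity contract $z\mapsto(\lambda-\pi)-(z-(-\lambda))_+$, with price parameter $(\lambda-\pi)+(-\lambda)=-\pi$; the dual of a full-indemnity contract $x\mapsto\pi-x$ is the full-indemnity contract $z\mapsto-\pi-z$. In the $\rho$-priced setting, I define $\tilde\rho(X):=-\rho(-X)$, which is again a law-invariant premium principle; the same computation gives that the dual of a $\rho$-priced limit-indemnity contract with level parameter $\zeta$ is the $\tilde\rho$-priced deductible-indemnity contract $z\mapsto\tilde\rho((z-(-\zeta))_+)-(z-(-\zeta))_+$ with deductible $-\zeta$, and the dual of a $\rho$-priced full-indemnity contract is the $\tilde\rho$-priced full-indemnity contract $z\mapsto\tilde\rho(z)-z$. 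Since negation is a homeomorphism of $\R$, density of premiums (resp.~level parameters) in $\R$ is transported to density of price parameters (resp.~deductibles) in $\R$, so $\tilde{\mathcal C}$ satisfies exactly the hypotheses of the ``if'' direction of Theorem~\ref{th:right-handle-r-i-p-if} (with premium principle $\tilde\rho$ in the $\rho$-priced case).

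Chaining the equivalences then yields: $\succsim$ is insurance propense to $\mathcal C$ $\iff$ $\succsim^*$ is insurance propense to $\tilde{\mathcal C}$ $\iff$ $\succsim^*$ is right-handle risk averse $\iff$ $\succsim$ is left-handle risk averse, which is precisely left-handle risk--insurance parity for $\mathcal C$. The only genuinely delicate step is the bookkeeping in the previous paragraph: one must carefully verify the sign-flip identities for $x\wedge\lambda$ and $(x-d)_+$, and confirm that the density and premium-principle structure are preserved under the duality. A single sign error would corrupt the whole correspondence between limit and deductible indemnity families; once the identities are written out correctly, the rest of the proof follows essentially for free from Theorem~\ref{th:right-handle-r-i-p-if}.
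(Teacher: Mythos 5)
Your proposal is correct and is essentially the paper's own argument: the paper also proves this theorem by the sign-flip duality, invoking Corollary~\ref{co:r&lhcx} and the identity $\pi - Z\wedge\lambda = -\bigl((\lambda-\pi) - ((-Z)-(-\lambda))_+\bigr)$ (and its $\rho$-priced analogue with $\rho^*(X)=-\rho(-X)$) to reduce to Theorem~\ref{th:right-handle-r-i-p-if}. Your write-up is in fact slightly more careful than the paper's, since you make the dual preference $\succsim^*$ and the dual contract set $\tilde{\mathcal C}$ explicit and chain the equivalences as a black-box application of Theorem~\ref{th:right-handle-r-i-p-if}, where the paper simply appeals to ``symmetry.''
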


Limit-only insurance contracts provide full coverage below the limit, but no coverage above; this connects to the asymmetric behavior in left-handle risk aversion. 
In view of Theorem~\ref{th:left-handle-r-i-p-if}, the fact that left-handle risk aversion seems perhaps unnatural for insured risks explains why, in practice, limits are seen as benefiting the insurer more than the insured; they are usually imposed for solvency considerations of the insurance company.




LC contract sets satisfying left-handle risk--insurance parity exclusively comprise limit-only contracts, as indicated in the next theorem.

\begin{theorem}
\label{th:left-handle-r-i-p-shape}
    A LC contract set satisfying right-handle risk--insurance parity exclusively contains limit-indemnity or full-indemnity contracts. 
\end{theorem}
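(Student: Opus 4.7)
The plan is to reduce this statement to Theorem~\ref{th:right-handle-r-i-p-shape} via the symmetry between left- and right-handle convex orders recorded in Corollary~\ref{co:r&lhcx}. For any risk preference $\succsim$ I would define the \emph{reflected preference} $\succsim^{\rm d}$ by setting $A\succsim^{\rm d} B \iff -A\succsim -B$; law invariance, transitivity, and continuity under bounded convergence are all preserved because negation is a homeomorphism of $\mathcal S$ that preserves equality in distribution. For any contract $C$ I set $\tilde C(Z):=-C(-Z)$. A quick check shows that if $C$ is a standard contract with LC indemnity $I$ and premium $\pi$, then $\tilde C$ is standard with premium $-\pi$ and LC indemnity $J(z):=-I(-z)$, while if $C$ is $\rho$-priced, then $\tilde C$ is $\tilde\rho$-priced for the law-invariant functional $\tilde\rho(H):=-\rho(-H)$. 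In either case, $\tilde{\mathcal C}:=\{\tilde C:C\in\mathcal C\}$ remains an LC contract set.

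Next I would establish two transfer principles. First, the substitution $U=-X$, $V=-Y$ combined with $\tilde C(U)=-C(-U)$ yields $-(X+C(X))=U+\tilde C(U)$ and $-(X+C(Y))=U+\tilde C(V)$, so that $\succsim$ is insurance propense to $\mathcal C$ if and only if $\succsim^{\rm d}$ is insurance propense to $\tilde{\mathcal C}$. Second, Corollary~\ref{co:r&lhcx} gives $X\le_{\rm lhcx}Y \iff -X\le_{\rm rhcx}-Y$, and by construction $X\succsim Y \iff -X\succsim^{\rm d}-Y$, so $\succsim$ is left-handle risk averse if and only if $\succsim^{\rm d}$ is right-handle risk averse. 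Combining these, $\mathcal C$ satisfies left-handle risk--insurance parity if and only if $\tilde{\mathcal C}$ satisfies right-handle risk--insurance parity.

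Applying Theorem~\ref{th:right-handle-r-i-p-shape} to the LC set $\tilde{\mathcal C}$ then forces each $\tilde C\in\tilde{\mathcal C}$ to be either a deductible-indemnity or a full-indemnity contract. A direct computation confirms that reflection inverts these shapes: from a deductible $J(z)=(z-d)_+$ one recovers $I(w)=-J(-w)=(w+d)\wedge 0=w\wedge(-d)+d$, which is a limit-indemnity function up to an additive constant that can be absorbed into the premium (or into $\tilde\rho$ in the $\rho$-priced case), while a full indemnity $J(z)=z$ is self-dual. Therefore every contract in $\mathcal C$ is of the limit- or full-indemnity type, as claimed.

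The main obstacle I foresee is not conceptual but bookkeeping: one must track additive constants carefully when matching the normalized forms $(x-d)_+$ and $x\wedge\lambda$ under reflection, and verify that, in the $\rho$-priced case, the functional $\tilde\rho$ is a valid (law-invariant) premium principle so that Theorem~\ref{th:right-handle-r-i-p-shape} indeed applies to $\tilde{\mathcal C}$.
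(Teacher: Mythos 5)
Your proposal is correct and follows essentially the same route as the paper, which proves this result by invoking the reflection $X\mapsto -X$, the duality $\tilde C(Z)=-C(-Z)$, and Corollary~\ref{co:r&lhcx} to reduce to Theorem~\ref{th:right-handle-r-i-p-shape}; your write-up simply makes explicit the transfer of insurance propensity and handle risk aversion that the paper leaves as ``the same symmetry argument.'' The bookkeeping you flag (absorbing the additive constant $d$ in $(w+d)\wedge 0 = w\wedge(-d)+d$ into the premium, and checking that $\tilde\rho(H)=-\rho(-H)$ is law invariant) works out exactly as you anticipate.
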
 

We now give the following characterization of left-handle risk--insurance parity for standard LC contract sets. 
\begin{proposition}
\label{th:left-handle-r-i-p-double-equiv}
A standard LC contract set $\mathcal{C}$ satisfies left-handle risk--insurance parity if and only if it comprises precisely 
\begin{enumerate}
\item[(a)] limit-indemnity contracts whose premiums are dense in $\mathbb{R}$;
\item[(b)] any, possibly none,  full-indemnity contracts. 
\end{enumerate}
 \end{proposition}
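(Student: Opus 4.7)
The plan is to prove the two directions of the equivalence separately. The ``if'' direction is a direct specialization of Theorem~\ref{th:left-handle-r-i-p-if} to standard contracts and requires no further argument.

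For the ``only if'' direction, let $\mathcal C$ be a standard LC contract set satisfying left-handle risk--insurance parity. A first application of Theorem~\ref{th:left-handle-r-i-p-shape} (in its left-handle version) forces every indemnity in $\mathcal C$ to be of full-indemnity or limit-indemnity type. Letting $\Pi\subseteq\mathbb R$ denote the set of premiums attached to the limit-indemnity contracts in $\mathcal C$, what remains is to show that $\Pi$ is dense in $\mathbb R$.

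My approach is to exploit the sign-flip symmetry recorded in Corollary~\ref{co:r&lhcx}, which states that $X\le_{\rm lhcx} Y \iff -X\le_{\rm rhcx} -Y$. A direct computation shows that a limit-indemnity standard contract $C:X\mapsto \pi-X\wedge\lambda$ transforms, under the relabelling $W:=-X$, into the contract $C^*(W):=-C(-W)=(\lambda-\pi)-(W-(-\lambda))_+$, which is a deductible-indemnity standard contract with deductible $-\lambda$, premium $\lambda-\pi$, and hence price parameter $\theta^*=-\pi$; similarly, a full-indemnity contract $X\mapsto \pi-X$ transforms into the full-indemnity contract $W\mapsto -\pi-W$. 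Defining $\succsim'$ on $\mathcal S$ by $W_1\succsim' W_2 \iff -W_1\succsim -W_2$, one verifies that $\succsim$ is insurance propense to $\mathcal C$ if and only if $\succsim'$ is insurance propense to $\mathcal C^*:=\{C^* \mid C\in\mathcal C\}$, and that $\succsim$ is left-handle risk averse if and only if $\succsim'$ is right-handle risk averse. Consequently, $\mathcal C$ satisfies left-handle risk--insurance parity if and only if $\mathcal C^*$ satisfies right-handle risk--insurance parity. Since $\mathcal C^*$ is itself a standard LC contract set consisting of deductible-indemnity and full-indemnity contracts, Proposition~\ref{prop:right-handle-r-i-p-double-equiv} applies and forces the price parameters of its deductible-indemnity contracts to be dense in $\mathbb R$. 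That set of price parameters is precisely $-\Pi=\{-\pi \mid \pi\in\Pi\}$, which is dense in $\mathbb R$ if and only if $\Pi$ is.

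The delicate step is the sign-flip bookkeeping: verifying that continuity and the LC property are preserved under $C\mapsto C^*$, that the law-invariance of $\succsim'$ follows from that of $\succsim$, and that the correspondence of insurance propensity is exact (matching both the insurance position $W+C^*(W)=-(X+C(X))$ and the non-insurance position $W+C^*(V)=-(X+C(Y))$ for $V=-Y\laweq W$). Once these routine verifications are in place, the density claim reduces cleanly to Proposition~\ref{prop:right-handle-r-i-p-double-equiv}, and no independent counter-example needs to be constructed.
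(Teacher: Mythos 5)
Your proposal is correct and follows essentially the same route as the paper: the paper's proof likewise reduces the statement to Proposition~\ref{prop:right-handle-r-i-p-double-equiv} via the sign-flip $X\mapsto -C(-X)$, noting that a limit contract with premium $\pi$ and limit $\lambda$ becomes a deductible contract with premium $\lambda-\pi$ and deductible $-\lambda$, so that the price parameter is $-\pi$ and the density condition on price parameters collapses to one on premiums. Your version merely spells out the bookkeeping (the correspondence of insurance propensity under $\succsim'$ and the role of Corollary~\ref{co:r&lhcx}) that the paper leaves implicit.
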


We speak of \emph{dual-handle risk aversion} when both left-handle and right-handle risk aversion hold.
A contract set $\mathcal{C}$ satisfies \textit{dual-handle risk--insurance parity} if insurance parity to $\mathcal{C}$ is equivalent to dual-handle risk aversion.  
Combining Propositions~\ref{prop:right-handle-r-i-p-double-equiv} and~\ref{th:left-handle-r-i-p-double-equiv}, we directly obtain the following result.

\begin{corollary}
\label{cor:dual-handle}
    A standard LC contract set $\mathcal{C}$ satisfies dual-handle risk--insurance parity if and only if it comprises precisely  
    \begin{enumerate}
       \item[(a)]  deductible-indemnity contracts whose price parameters  are dense  in $\mathbb{R}$;
        \item[(b)]  limit-indemnity contracts whose premiums  are dense in $\mathbb{R}$;
        \item[(c)] any, possibly none, full-indemnity contracts. 
    \end{enumerate}
\end{corollary}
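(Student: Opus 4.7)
The plan is to handle the two directions separately, leveraging Propositions~\ref{prop:right-handle-r-i-p-double-equiv} and~\ref{th:left-handle-r-i-p-double-equiv} as the principal ingredients. For the ``if'' direction, assume $\mathcal{C}$ has the prescribed form, and decompose it as $\mathcal{C} = \mathcal{C}_{\rm D}\cup \mathcal{C}_{\rm L}$, where $\mathcal{C}_{\rm D}$ gathers the deductible contracts from~(a) together with the full-indemnity contracts from~(c), and $\mathcal{C}_{\rm L}$ gathers the limit contracts from~(b) together with the same full-indemnity contracts. By Proposition~\ref{prop:right-handle-r-i-p-double-equiv}, $\mathcal{C}_{\rm D}$ satisfies right-handle risk--insurance parity; by Proposition~\ref{th:left-handle-r-i-p-double-equiv}, $\mathcal{C}_{\rm L}$ satisfies left-handle risk--insurance parity. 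Since insurance propensity to $\mathcal{C}$ amounts to simultaneous propensity to $\mathcal{C}_{\rm D}$ and $\mathcal{C}_{\rm L}$, the two propositions combine to make propensity to $\mathcal{C}$ equivalent to simultaneous right-handle and left-handle risk aversion, i.e., dual-handle risk aversion.

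For the ``only if'' direction, let $\mathcal{C}$ be a standard LC contract set satisfying dual-handle risk--insurance parity, and partition $\mathcal{C}$ into the deductible part $\mathcal{C}_d$, the limit part $\mathcal{C}_\ell$, the full-indemnity part $\mathcal{C}_f$, and any remainder $\mathcal{C}_{\rm rem}$. Two claims must be established: $\mathcal{C}_{\rm rem}=\varnothing$, and the price parameters of $\mathcal{C}_d$ (resp.~the premiums of $\mathcal{C}_\ell$) are dense in $\mathbb{R}$. For the first, any $C\in\mathcal{C}_{\rm rem}$ has an LC indemnity that is not of deductible, limit, or full type; the constructions behind Theorems~\ref{th:right-handle-r-i-p-shape} and~\ref{th:left-handle-r-i-p-shape} then furnish a dual-handle risk averse preference (for instance, drawn from the dual-utility model with a carefully chosen distortion) failing propensity to $C$, contradicting the ``dual-handle RA $\Rightarrow$ propensity to $\mathcal{C}$'' direction imposed by the parity.

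For the density claim, assume for contradiction that the price parameters of $\mathcal{C}_d$ are not dense in $\mathbb{R}$. By the ``only if'' part of Proposition~\ref{prop:right-handle-r-i-p-double-equiv}, $\mathcal{C}_d\cup \mathcal{C}_f$ does not satisfy right-handle risk--insurance parity; since right-handle risk aversion always yields propensity to deductible and full contracts, this forces the existence of a continuous preference $\succsim^\star$ propense to $\mathcal{C}_d\cup \mathcal{C}_f$ but not right-handle, hence not dual-handle, risk averse. The main obstacle is to upgrade $\succsim^\star$ so that it is additionally propense to every contract in $\mathcal{C}_\ell$; such a refined preference would then be propense to all of $\mathcal{C}$ without being dual-handle risk averse, contradicting the parity. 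My plan is to work within the dual-utility framework, choosing a weighting function whose departure from the identity is localized to the range corresponding to the gap in deductible price parameters, so that it breaks right-handle risk aversion in that region while remaining compatible with propensity to all limit contracts and with left-handle risk aversion elsewhere. A symmetric argument handles the density of limit premiums in $\mathcal{C}_\ell$.
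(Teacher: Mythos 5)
Your ``if'' direction is exactly the paper's argument: the paper derives the whole corollary in one line by combining Propositions~\ref{prop:right-handle-r-i-p-double-equiv} and~\ref{th:left-handle-r-i-p-double-equiv}, and your decomposition $\mathcal{C}=\mathcal{C}_{\rm D}\cup\mathcal{C}_{\rm L}$ with propensity to $\mathcal{C}$ equivalent to simultaneous propensity to the two pieces is the right way to make that precise. That half is fine.

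The gap is in your ``only if'' direction, and it is a real one rather than a matter of missing details. To exclude a contract $C\in\mathcal{C}_{\rm rem}$ you need a \emph{dual-handle} risk-averse preference that fails propensity to $C$; but the preferences actually used in the proofs of Theorems~\ref{th:right-handle-r-i-p-shape} and~\ref{th:left-handle-r-i-p-shape} are $X\succsim Y\iff X\le_{\rm rhcx}Y$ (resp.\ $\le_{\rm lhcx}$), which are \emph{not} dual-handle risk averse, so ``the constructions behind'' those theorems do not furnish what you claim. Worse, their three-point counterexample genuinely fails in the dual-handle setting: the smallest dual-handle risk-averse preference is the closed transitive hull of \emph{both} handle orders, and the pair produced in the proof of Theorem~\ref{th:right-handle-r-i-p-shape} can be connected inside that hull. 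Concretely, for the proportional indemnity $I(x)=x/2$ with $\pi=0$ and $x_0=0$, $y_0=1$, $z_0=2$, that proof yields $X$ uniform on $\{0,\tfrac12,1\}$ and $Y$ uniform on $\{-1,1,\tfrac32\}$; a right-handle spread takes $X$ to the law putting mass $\tfrac23$ at $0$ and $\tfrac13$ at $\tfrac32$, and a left-handle spread takes that to $Y$, so $X\succsim Y$ for every dual-handle risk-averse preference and no contradiction arises. A different configuration is needed (e.g.\ one in which $Z+C(Z)$ and $Z+C(W)$ differ in law but share the same minimum and maximum, which blocks any nontrivial chain of handle spreads), together with an argument that survives the topological closure in the definition of the hull. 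The same issue afflicts your density step: the preference $\succsim^\star$ produced by the ``only if'' part of Proposition~\ref{prop:right-handle-r-i-p-double-equiv} is built to violate right-handle risk aversion only at maxima outside $\overline{\Pi}$, and you must verify it (or your dual-utility replacement) remains propense to every limit contract in $\mathcal{C}_\ell$ --- the ``carefully chosen distortion'' whose existence is asserted is precisely the object that has to be exhibited and checked against Theorem~\ref{prop:yaari-right-handle}. As written, the plan names the obstacles but does not overcome them.
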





In the dual utility model, dual-handle risk aversion has an interpretation in terms of diversification incentives; see the next subsection.

\subsection{Right- and left-handle risk aversion in the dual utility model}
\label{sect:yaari}

Let us characterize right-, left- and dual-handle risk aversion in the dual utility model of \cite{Y87}.
In this model, risk preferences are represented as follows:
For $X,Y\in\mathcal{S}$,
\begin{equation*}
    X \succsim Y \quad \iff \quad U_h(-X) \geq U_h(-Y)
\end{equation*}
with $U_h : \mathcal{S}\mapsto \mathbb{R}$ being defined as 
\begin{equation}
U_h(Z)=\int_{-\infty}^0 \left(h(\p(Z> t))-1\right)\d t+\int_0^\infty h(\p(Z> t))\d t, \quad Z\in\mathcal{S},\label{eq:yaari}
\end{equation}
where $h:[0,1]\to[0,1]$ is a weighting function that is increasing and satisfies $h(0)=1-h(1)=0$.
A weighting function $h$ is \emph{star-shaped} at $a\in [0,1]$ when $h(\lambda a + (1-\lambda)x) \leq \lambda h(a) + (1-\lambda) h(x)$ holds true for all $\lambda, x\in[0,1]$. 
\begin{theorem}
\label{prop:yaari-right-handle}
    In the dual utility model, an agent is 
    \begin{enumerate}[label={\rm (\roman*)}, ref={\rm (\roman*)}]
        \item \label{item:star-0} right-handle risk averse if and only if the weighting function is star-shaped at $1$; 
        \item \label{item:star-1} left-handle risk averse if and only if the weighting function is star-shaped at $0$; 
        \item \label{item:star-0-1} dual-handle risk averse if and only if the weighting function is  star-shaped  at both $0$ and $1$. 
    \end{enumerate}
\end{theorem}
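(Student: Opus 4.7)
The plan is to compute the Yaari utility difference via the quantile representation of $U_h$ and to identify the resulting functional inequality on $h$ with the appropriate star-shape condition; part (iii) then follows immediately from (i) and (ii).

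I begin from the quantile form $U_h(-X)=-\int_0^1 q_X(u)\,dh(u)$, from which
\[
U_h(-X)-U_h(-Y) = \int_0^1 (q_Y-q_X)\,dh.
\]
A risk preference $\succsim_h$ is right- (resp.\ left-) handle risk averse iff this integral is nonnegative for every right- (resp.\ left-) handle mean-preserving spread from $X$ to $Y$. By transitivity of $\succsim_h$, it suffices to treat a single MPS; by splitting $A_1$ along the level sets of $X$ and then breaking a large-$\delta_1$ step into $n$ copies of a small-$\delta_1$ step (each of which is again a right-handle MPS of the preceding variable, since after each step the mass at $A_2$ still sits at the updated maximum), one further reduces to the case in which the MPS acts on a single intermediate level $x_j$ with $\delta_1$ small enough to avoid coalescing. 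In this simple regime the quantile difference takes the clean form $q_Y-q_X = -\delta_1\id_{(\alpha,\beta]} + \delta_2\id_{(1-a,1]}$, with $\alpha=\p(X<x_j)$, $\beta=\alpha+\p(A_1)$, and $a=\p(A_2)$. Integrating against $dh$ and using $\delta_1(\beta-\alpha)=\delta_2 a$, the nonnegativity condition reduces to
\[
\frac{h(\beta)-h(\alpha)}{\beta-\alpha}\le\frac{1-h(1-a)}{a}\qquad\text{for all }0\le\alpha<\beta\le 1-a<1.
\]
Setting $a=1-\beta$ extracts $\frac{h(\beta)-h(\alpha)}{\beta-\alpha}\le\frac{1-h(\beta)}{1-\beta}$, which, rewritten with $\lambda=(\beta-\alpha)/(1-\alpha)$, is precisely the star-shape-at-$1$ inequality $h(\lambda+(1-\lambda)\alpha)\le\lambda+(1-\lambda)h(\alpha)$. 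Conversely, star-shape at $1$ is equivalent to $p\mapsto(1-h(p))/(1-p)$ being increasing, and chaining this monotonicity with the pointwise bound itself yields the full inequality for any $a$, hence (summing over the small-step decomposition) for any right-handle MPS. This establishes (i).

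Part (ii) is analogous. For a left-handle MPS that moves mass $b$ at $\min X$ down and mass $c$ at some interior level $x_m$ up, the small-step, single-level reduction yields $q_Y-q_X=-\delta_1\id_{(0,b]}+\delta_2\id_{(\gamma,\eta]}$ with $\eta=\p(X\le x_m)$ and $\gamma=\eta-c$, and the nonnegativity condition becomes
\[
\frac{h(\eta)-h(\gamma)}{\eta-\gamma}\ge\frac{h(b)}{b}\qquad\text{for all }0<b\le\gamma<\eta\le 1.
\]
Setting $\gamma=b$ gives $h(\eta)/\eta\ge h(b)/b$, i.e., star-shape of $h$ at $0$; for the converse one chains $\frac{h(\eta)-h(\gamma)}{\eta-\gamma}\ge h(\gamma)/\gamma$ (a direct consequence of star-shape at $0$ applied at $\gamma$ and $\eta$) with $h(\gamma)/\gamma\ge h(b)/b$. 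Part (iii) then follows at once: by definition, dual-handle risk aversion is the conjunction of right- and left-handle risk aversion, so it is characterized by $h$ being star-shaped at both $0$ and $1$.

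The main obstacle is the quantile bookkeeping together with the verification that any single MPS can be safely reduced to a sequence of simple-regime, single-level ones --- in particular, after each small step one must check that $A_2$ remains a subset of the argmax of the updated variable, so that the next step is again a valid right-handle MPS, even when coalescing of atoms occurs along the way. A minor subtlety in the necessity direction is that $X$ must be taken with at least three distinct values: a two-valued $X$ only recovers the (weaker) common implication $h(p)\le p$, whereas a third interior level is exactly what allows $\alpha$ and $1-a$ (resp.\ $\gamma$ and $b$) to be chosen independently and thereby extracts the full star-shape inequality.
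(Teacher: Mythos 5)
Your proof is correct and follows essentially the same route as the paper's: both reduce right-handle risk aversion to the slope inequality $\frac{h(y)-h(x)}{y-x}\le\frac{1-h(z)}{1-z}$ for $0\le x<y\le z<1$ (necessity via a three-to-four-atom example realizing arbitrary $(\alpha,\beta,1-a)$, sufficiency via an integral computation of $U_h(-X)-U_h(-Y)$), show that inequality is equivalent to star-shapedness at $1$ by the same algebra, and obtain (ii) by symmetry and (iii) trivially. The only textural difference is in the sufficiency step: you use the quantile representation and a decomposition into small single-level steps (correctly flagging, though not fully executing, the need to choose breakpoints so the descending mass never strictly crosses another atom), whereas the paper's survival-function bound $\sup_x\bigl(h(\p(Y<x))-h(\p(X<x))\bigr)\le\frac{\p(A_1)}{\p(A_2)}\bigl(1-h(1-\p(A_2))\bigr)$ handles an arbitrary displacement $\delta_1$ in one shot and sidesteps that bookkeeping.
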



Convexity is a stronger property than star-shapedness at $0$ or $1$: it corresponds to star-shapedness at every point. In the dual utility model, an agent is strongly risk averse if and only if their weighting function is convex (\citealp[Theorem~2]{Y87}). Similarly, an agent is weakly risk averse if and only if their weighting function is dominated by the identity function (see \citealp[Corollary 1]{CC94}), and this property of the weighting function is weaker than star-shapedness at $0$ or $1$. We provide an illustration of a weighting function that is both star-shaped at $0$ and $1$ in Figure~\ref{fig:star-shaped}.

\begin{figure}
    \centering
\begin{tikzpicture}
    \begin{axis}[hide axis, xmin=-0.3, xmax = 1.2, ymin = -0.2, ymax = 1.3, xshift = 5.5 cm, yshift = -2.2 cm, scale =1]
\addplot[domain=0:1, samples = 20, color = BrickRed, thick]{x^5};
\draw[Black!50] (-0.025,1)--(0.025,1);
\draw[Black!50] (1,-0.025)--(1,0.025);
\draw[Black!60, dotted] (0,0)--(1,1);
\draw[ thick] (0,0)node[below]{0} --  (1,0) node[below] {1};
\draw[->, thick] (1,0) -- (1.1,0) node[below] {$x$};
\draw[ thick] (0,0) --  (0,1) node[left] {1};
\draw[->, thick] (0,0) -- (0,1.2) node[left]{$h(x)$};
\draw[thick, Goldenrod!85!Black] (0,0) -- (0.44,0.26) -- (0.70, 0.30)-- (0.80,0.65) -- (1,1);
\draw[thick, RoyalPurple] (0,0)
--(0.8,0.05) -- (0.85,0.15)--(0.95,0.2)--(1,1); 
\end{axis}
\end{tikzpicture}
    \caption{ A weighting function that is both star-shaped at 0 and at 1 but not convex (in dark blue), a convex weighting function (in red), and a weighting function that is dominated by the identity, but not convex nor star-shaped at 0 or 1 (in yellow). }
    \label{fig:star-shaped}
\end{figure}
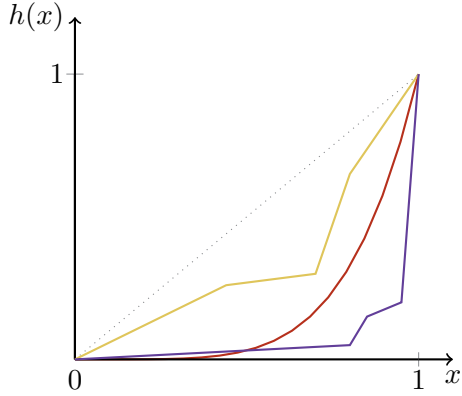

It is straightforward to check that star-shapedness at both $0$ and $1$ implies 
\emph{dual superadditivity}, that is, $h(x)+h(y)\le h(x+y)$  
and $ \tilde h(x)+ \tilde h(y)\ge \tilde h(x+y)$ for $x,y$ with $x+y\in [0,1]$, where $\tilde h(t)=1-h(1-t)$ for $t\in [0,1]$.
Dual superadditivity  is connected to a specific form of diversification. 
A preference relation  $\succsim$ exhibits diversification (\citealp{D89}) if 
\begin{equation}
    \label{eq:convexity}
   X\laweq Y \implies 
    \lambda X+(1-\lambda) Y\succsim X  \mbox{~for~}\lambda \in [0,1], 
\end{equation}
meaning that combining equally preferable payoffs is desirable.
\citet[Theorem 2]{GRW25} showed that dual subadditivity is equivalent to a concept called pseudo-convexity introduced by \cite{ACV21}, which is in turn equivalent to diversification restricted to counter-monotonic random variables (\citealp[Proposition~8]{PWW25}). Therefore, we have the following implications for the dual utility model. 
\begin{enumerate}[label=(\alph*)]
\item Weak risk aversion implies that \eqref{eq:convexity} holds
when $\lambda X+(1-\lambda) Y$ is a constant.
\item  Dual-handle risk aversion  implies that  
\eqref{eq:convexity} holds when $X$ and $Y$ are counter-monotonic. 
\item Strong risk aversion  implies that  
\eqref{eq:convexity} holds for all $X,Y.$
\end{enumerate}
Since the condition that $\lambda X+(1-\lambda)Y$ is a constant is stronger than counter-monotonicity, the above three implications illustrate the relative strengths of the three notions of risk aversion in terms of their diversification incentives in the dual utility model: Weakly risk-averse agents diversify when risks fully vanish after hedging; 
dual-handle risk-averse agents diversify when risks are perfectly negatively dependent; 
strongly risk-averse agents  diversify in all cases.




\section{Additional technical discussions}
\label{sect:concluding-remarks}

This section collects some additional discussions on  results from the previous sections.

\subsection{On the definition of insurance propensity}
\label{sect:def-inspropensity}
 
\cite{MMWW25} defined insurance propensity in a form slightly different from ours. 
\begin{definition}
    A risk preference exhibits \textit{insurance propensity} to a contract set $\mathcal{C}$ in the sense of \cite{MMWW25} when 
    for any $X,Z,Z^{\prime}\in L^{\infty}$ where $Z\laweq Z^{\prime}$, it holds that 
    \begin{equation*}
        Z = C(X)~~\text{for some }C\in\mathcal{C} \implies X+Z \succsim X+Z^{\prime}.
    \end{equation*}
\end{definition}

Note that if $X\laweq Y$, then $C(X)\laweq C(Y)$ for any contract $C$. Therefore, the notion of insurance propensity in the sense of \cite{MMWW25} is stronger than the one defined in Definition~\ref{def:inspropensity}. In fact, the converse implication can also be established with the aid of the following lemma.


\begin{lemma}
    For any $Z,Z^{\prime}\in L^{\infty}$ such that $Z \laweq Z^{\prime}$ and any function $f:\mathbb{R}\to\mathbb{R}$, if for some $X\in L^{\infty}$ it holds that $Z = f(X)$, then there exists $Y\in L^{\infty}$ such that $Z^{\prime} = f(Y)$ and $Y\laweq X$. 
\end{lemma}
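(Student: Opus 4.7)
The plan is a disintegration-and-coupling argument exploiting the atomlessness of $(\Omega,\mathcal F,\p)$.

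First, consider the joint law $\nu$ of $(X,Z)=(X,f(X))$ on $\R^2$. Since $Z=f(X)$ almost surely, $\nu$ is concentrated on the graph $\{(x,z):z=f(x)\}$, and its second marginal is the law of $Z$. Disintegrating $\nu$ along its second coordinate (valid because $\R$ is Polish) yields a regular conditional kernel $\kappa(z,\d x)$ with the property that, for almost every $z$ under the law of $Z$, the measure $\kappa(z,\cdot)$ is concentrated on $f^{-1}(\{z\})$.

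Next, since $Z\laweq Z'$, the laws of $Z$ and $Z'$ coincide. The goal is to construct on $(\Omega,\mathcal F,\p)$ a random variable $Y$ whose conditional law given $Z'$ is $\kappa(Z',\cdot)$. Granted such a $Y$, the joint law of $(Y,Z')$ equals $\nu$: the marginal identity then gives $Y\laweq X$, and the fact that $\nu$ charges only the graph of $f$ forces $Z'=f(Y)$ almost surely, which is the desired conclusion.

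To realize the construction, I appeal to the atomlessness of $(\Omega,\mathcal F,\p)$ to select $U\sim\mathrm{Unif}[0,1]$ independent of $Z'$, which is a standard feature of the atomless probability spaces used throughout the paper. Choosing a jointly measurable quantile-type representation $\Phi:\R\times[0,1]\to\R$ with $\Phi(z,U)\sim\kappa(z,\cdot)$ whenever $U$ is uniform, I set $Y=\Phi(Z',U)$; by construction $(Y,Z')$ has joint law $\nu$. The main technical point is producing the independent uniform $U$; beyond that, the argument is pure bookkeeping along the graph of $f$. For the simple case $X\in\mathcal S$---the one actually needed to bridge Definition~\ref{def:inspropensity} to the formulation of \cite{MMWW25}---the construction further simplifies to a finite partition, using atomlessness, of each event $\{Z'=z_i\}$ into sub-events of probabilities $\p(X=x_j)$ for $x_j\in f^{-1}(\{z_i\})$, setting $Y=x_j$ on the corresponding piece.
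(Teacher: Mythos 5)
Your route is genuinely different from the paper's. The paper's proof is a one\nobreakdash-line composition argument: it asserts the existence of a measure-preserving bijection $\xi:\Omega\to\Omega$ with $Z'(\omega)=Z(\xi(\omega))$ for all $\omega$ and sets $Y=X\circ\xi$, so that $f(Y)=(f\circ X)\circ\xi=Z\circ\xi=Z'$ pointwise, with no disintegration and no measurability demand on $f$. You instead transport the conditional law of $X$ given $Z$ onto $Z'$. Your finite-partition construction in the simple case is correct, uses nothing beyond atomlessness, and covers the regime (random variables in $\mathcal S$, combined with the approximation machinery of Proposition~\ref{prop:L0-Linfty}) in which the lemma is actually deployed; there your write-up is more explicit than the paper's.

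For general $X\in L^\infty$, however, the step you yourself flag as the main technical point is a genuine gap rather than a standard fact: atomlessness of $(\Omega,\mathcal F,\p)$ yields \emph{a} uniform random variable, but not one independent of a prescribed $Z'$. On $\Omega=[0,1]$ with Lebesgue measure and $Z'(\omega)=\omega$ we have $\sigma(Z')=\mathcal F$, so every random variable independent of $Z'$ is a.s.\ constant, and $Y=\Phi(Z',U)$ cannot be built whenever $\kappa(z,\cdot)$ is non-degenerate with positive probability. The defect is not repairable by a cleverer coupling: taking $X(\omega)=2\omega$ and $f(x)=x-\lfloor x\rfloor$ on that space, any $Y$ with $f(Y)=Z'$ a.s.\ and $Y\laweq X$ must satisfy $Y=\omega+\id_A(\omega)$ a.s.\ for some measurable $A$, and matching the uniform law on $[0,2]$ forces $\p([0,t]\cap A)=t/2$ for all $t\in[0,1]$, which contradicts the Lebesgue density theorem. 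So the lemma itself fails for general $L^\infty$ inputs on this atomless space --- and, to be fair, the paper's proof has the mirror-image defect, since the asserted measure-preserving bijection does not exist in the same example. The statement really needs either a richness (saturation) hypothesis on the probability space or a restriction to the discrete case that your partition argument handles. A secondary point: since the lemma allows arbitrary $f$, the graph of $f$ and the fibres $f^{-1}(\{z\})$ need not be Borel, so your disintegration step tacitly assumes $f$ measurable; this is harmless in the applications, where $f$ is built from increasing indemnities, but the paper's pointwise composition avoids the issue entirely.
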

\begin{proof}
Because $Z \laweq Z^{\prime}$,  there exists a measure-preserving bijection $\xi: \Omega \to \Omega$ so that $Z^{\prime}(\omega) = Z(\xi(\omega))$ for all $\omega\in\Omega$. Take $Y$ given by $Y(\omega) = X(\xi(\omega))$, $\omega\in\Omega$. 
\end{proof}

For this work, we preferred to express insurance propensity as per Definition~\ref{def:inspropensity} simply because it added clarity in our statements and proofs.






\subsection{Monotone risk aversion and monotone risk--insurance parity}
\label{sect:monotone}

Monotone risk aversion was introduced by \cite{Q92}; see \cite{C95} for a review on the concept. Similar to left- and right-handle risk aversion, this form of risk aversion also falls strictly in-between weak and strong risk aversion.  

\begin{definition}[Monotone risk aversion, monotone risk--insurance parity]
A risk preference $\succsim$ exhibits \emph{monotone risk aversion} if for every $X,Y,\Theta \in \mathcal{S}$ such that $Y = X + \Theta$, where $X,\Theta$ are comonotonic and $\mathbb{E}[\Theta]=0$, it holds that $X\succsim Y$. 
    A set of contracts $\mathcal{C}$ satisfies \emph{monotone risk--insurance parity} when, for all continuous risk preferences, insurance propensity to $\mathcal{C}$ is equivalent to monotone risk aversion.  
\end{definition}


We introduced monotone risk--insurance parity to show that no LC contract set satisfies it. 

\begin{proposition}
\label{th:monotone-r-i-p}
    No LC contract set satisfies monotone risk--insurance parity. 
\end{proposition}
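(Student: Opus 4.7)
The plan is to argue by contradiction: suppose an LC contract set $\mathcal{C}$ satisfies monotone risk--insurance parity, and produce a continuous monotone risk averse preference $\succsim^*$ that fails insurance propensity to $\mathcal{C}$, contradicting the forward implication. The whole strategy hinges on a structural mismatch between the comonotonic mean-zero perturbations that drive monotone risk aversion and the dependence structure generated by LC contracts.

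The key structural observation I would establish first: for any standard LC contract $C(X)=\pi-I(X)$ and $X,Y\in\mathcal{S}$ with $X\laweq Y$, setting $Z=X+C(X)$ and $\Theta=I(X)-I(Y)$, we have $X+C(Y)=Z+\Theta$ and $\E[\Theta]=0$. Since $I$ is LC, the map $x\mapsto x-I(x)$ is nondecreasing, so $Z$ is comonotonic with $X$. Comonotonicity of $\Theta$ and $Z$ — which is precisely what one would need to invoke monotone risk aversion in the form $Z\succsim Z+\Theta$ — is therefore equivalent to $I(Y)$ being counter-monotonic with $X$, which (since $I$ is nonconstant and increasing on a nondegenerate range) essentially forces $Y$ itself to be counter-monotonic with $X$. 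Thus, for any pair $(X,Y)$ with $X\laweq Y$ that is not counter-monotonic, monotone risk aversion provides no direct route to the insurance-propensity relation.

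Next I would construct $\succsim^*$ as the smallest transitive, law-invariant, bounded-convergence-continuous preference containing all monotone RA comparisons, and show that for a non-constant LC indemnity $I$ appearing in $\mathcal{C}$, and for an i.i.d.\ pair $X\laweq Y$ (not counter-monotonic), the relation $X+C(X)\succsim^* X+C(Y)$ fails. The cleanest way to verify this failure is to exhibit a separating functional: a law-invariant, continuous $U:\mathcal{S}\to\R$ that is nondecreasing along every comonotonic mean-preserving addition (so $U$ extends $\succsim^*$ to a complete monotone risk averse preference) yet satisfies $U(X+C(X))<U(X+C(Y))$ on the chosen pair. A candidate is a suitably chosen spectral/Choquet functional, or an RDU functional with concave $u$ and a non-convex weighting function $h$ calibrated so monotone RA holds but strong RA fails; such a preference is known to distinguish counter-monotonic pairs from independent ones in exactly the way needed. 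A fallback, if constructing $\succsim^*$ directly is awkward, is to handle the degenerate case $\mathcal{C}\subseteq\{$full-indemnity contracts$\}$ separately by Theorem~\ref{th:1st-r-i-p-standard} — then insurance propensity reduces to weak risk aversion, which is strictly weaker than monotone risk aversion, yielding a violation of the backward implication of monotone RIP.

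The main obstacle will be Step 3: verifying rigorously that the separating functional (or the RDU preference) truly respects all transitive and closure-limit chains of monotone RA comparisons while reversing direction on the independent pair. The subtlety is that $\succsim^*$ may allow non-trivial chains $X+C(X)=W_0\succsim^* W_1\succsim^*\cdots\succsim^* W_k=X+C(Y)$ even when no single step works; ruling this out requires a conservation-type argument showing that every functional consistent with monotone RA must satisfy a specific invariance that is broken by swapping the comonotonic dependence between $X$ and $C(X)$ for a non-comonotonic one. This is the asymmetry which, at bottom, explains why the comonotonic mean-preserving-addition mechanism of monotone RA cannot be reproduced by swapping identically distributed inputs into Lipschitz-continuous indemnities.
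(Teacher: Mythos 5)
There is a genuine gap, and it sits exactly where you flagged it. Your Step 2 analyzes the \emph{particular} coupling $\Theta = I(X)-I(Y)$, but monotone risk aversion is a law-invariant, existentially quantified condition: $X^*\succsim Y^*$ holds whenever \emph{some} mean-zero $\Theta$ comonotonic with $X^*$ satisfies $Y^*\laweq X^*+\Theta$, not merely when the specific decomposition $X+C(Y)=Z+(I(X)-I(Y))$ happens to be comonotonic. So showing that this one decomposition fails does not show that ``monotone risk aversion provides no direct route''; you must rule out every admissible $\Theta$, which amounts to checking comonotonicity of $Y^{*\rm co}-X^{*\rm co}$ with $X^{*\rm co}$ for the comonotonic coupling of the two laws. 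The second and larger gap is your Step 3: you never construct the separating functional, and the calibration you gesture at (an RDU functional that is monotone risk averse but reverses on the chosen pair) would have to work uniformly for every non-full LC contract in $\mathcal{C}$ while provably respecting all transitive chains and bounded-convergence limits of monotone-RA comparisons. You correctly identify this as ``the main obstacle'' but leave it unresolved, so the argument is a plan rather than a proof.

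The paper's proof dissolves both difficulties at once by a different device: it defines $\succsim$ \emph{directly} as the relation ``$X\succsim Y$ iff $Y\laweq X+\Theta$ for some mean-zero $\Theta$ comonotonic with $X$'' and verifies that this relation is already transitive, law invariant, and continuous --- so it \emph{is} the minimal monotone risk averse preference, and no transitive/closed hull (and hence no chain or conservation argument) is needed. Failure of insurance propensity then reduces to checking that $Y^{*\rm co}-X^{*\rm co}$ is not comonotonic with $X^{*\rm co}$, which the paper does via two explicit three-point joint distributions of $(Z,W)$ exploiting Lipschitz continuity of $I$; the first forces $I$ to be constant, the identity, or a limit indemnity, the second forces it to be constant, the identity, or a deductible indemnity, and intersecting leaves only the identity --- contradicting $C\notin\mathcal{C}^{\rm full}$. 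Your fallback for the full-indemnity case via Theorem~\ref{th:1st-r-i-p-standard} matches the paper's opening observation and is fine. If you want to salvage your route, the essential missing lemma is precisely that the single-step monotone-RA relation is closed under transitivity and bounded convergence; once you have that, you can discard the separating functional entirely and argue on couplings as the paper does.
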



While right-handle, left-handle and monotone risk aversion all belong to the middle grounds between weak and strong risk aversion, they are strictly different concepts of risk aversion. This is evident when juxtaposing Propositions~\ref{prop:right-handle-r-i-p-double-equiv}, \ref{th:left-handle-r-i-p-double-equiv} and~\ref{th:monotone-r-i-p}.




\subsection{Extending the space of random variables}
\label{sect:L1}

Most of our previous results were stated for risk preferences defined on $\mathcal{S}$, the set of simple random variables. In this section, we discuss how our results translate to larger sets of random variables.  Let $L^{\infty}$ be the set of all essentially bounded random variables on $(\Omega, \mathcal{F}, \mathbb{P})$ and $L^{p}$ be the set of all random variables with finite absolute $p$th moment, $p\in [1,\infty)$.
In this section,
for a risk preference defined on $L^p$ for $p\in [1,\infty)$, continuity is with respect to $L^p$-norm.
For a risk preference defined on $L^\infty$, continuity is with respect to bounded convergence, as in the main part of our paper.

The following proposition indicates that our definition of insurance propensity could have equivalently been formulated on $L^p$, for any $p\in [1,\infty]$. 
Extending the space to $L^p$ is useful for applications where potential losses may be unbounded.   To deal with potentially unbounded random variables,  in this section 
we will make the assumption that 
any contract's indemnity function is bounded or Lipschitz continuous. This assumption is made to ensure proper integrability, 
and it is consistent with the similar assumption in Section~\ref{sect:right-handle}.


\begin{proposition}
\label{prop:L0-Linfty}
For a contract $C$ and  a continuous risk preference defined on $L^p$, $p\in[1,\infty]$, the following are equivalent:
    \begin{enumerate}[label={\rm (\roman*)}, ref={\rm (\roman*)}]
        \item\label{item:S} $X+C(X) \succsim X + C(Y)$ for all $X,Y \in \mathcal{S}$ where $X\laweq Y$; 
        \item\label{item:L1} $X+C(X) \succsim X + C(Y)$ for all $X,Y \in L^p$ where $X\laweq Y$.
    \end{enumerate}
\end{proposition}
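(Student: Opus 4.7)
The implication (ii) $\Rightarrow$ (i) is immediate, since $\mathcal{S} \subseteq L^p$.

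For (i) $\Rightarrow$ (ii), my plan is to approximate arbitrary $X, Y \in L^p$ by simple random variables in a way that preserves distributional equality, and then invoke continuity of $\succsim$. Given $X, Y \in L^p$ with $X \laweq Y$, I first use atomlessness of $(\Omega, \mathcal{F}, \mathbb{P})$ to produce a measure-preserving bijection $\tau : \Omega \to \Omega$ with $Y = X \circ \tau$. Next, I apply a fixed Borel step function $\phi_n : \mathbb{R} \to \mathbb{R}$---for instance the truncated dyadic map $\phi_n(x) = (-n) \vee (2^{-n}\lfloor 2^n x\rfloor) \wedge n$---to both $X$ and $Y$, obtaining $X_n := \phi_n(X)$ and $Y_n := \phi_n(Y)$ in $\mathcal{S}$. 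Since $Y_n = X_n \circ \tau$, we have $X_n \laweq Y_n$ for every $n$. Because $\phi_n(x) \to x$ pointwise with $|\phi_n(x)| \leq |x|$, dominated convergence gives $X_n \to X$ and $Y_n \to Y$ in $L^p$-norm (and in the bounded-convergence sense when $p = \infty$).

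By hypothesis (i), $X_n + C(X_n) \succsim X_n + C(Y_n)$ for every $n$. The assumption that the indemnity $I$ is bounded or Lipschitz ensures $I(X_n) \to I(X)$ and $I(Y_n) \to I(Y)$ in $L^p$. For standard contracts $C(X) = \pi - I(X)$, this immediately yields $X_n + C(X_n) \to X + C(X)$ and $X_n + C(Y_n) \to X + C(Y)$ in $L^p$, and continuity of $\succsim$ produces $X + C(X) \succsim X + C(Y)$, as required.

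The main obstacle I expect is the $\rho$-priced case, since one additionally needs $\rho(I(X_n)) \to \rho(I(X))$ despite $\rho$ being assumed only law-invariant. A clean workaround is to refine the discretization so that $I(X_n) \laweq I(X)$ for every $n$, which is feasible on the atomless space by quantile-based matching of the distribution of $I(X)$; this makes $\rho(I(X_n))$ constant in $n$ and removes any need for continuity of $\rho$, after which the $L^p$-convergence argument goes through exactly as in the standard-contract case.
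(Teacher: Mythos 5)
Your argument for (i) $\Rightarrow$ (ii) is, at its core, the same as the paper's: the paper discretizes via $X_n = 2^{-n}\lfloor 2^n \max(\min(X,n),-n)\rfloor$ and $Y_n$ analogously, notes $X_n \laweq Y_n$, applies (i), and passes to the limit using continuity of $\succsim$ (in $L^p$-norm for $p<\infty$, in the bounded-convergence sense for $p=\infty$). Two small remarks on your version: the measure-preserving bijection $\tau$ is unnecessary --- since $\phi_n$ is a fixed Borel map, $X\laweq Y$ already yields $\phi_n(X)\laweq \phi_n(Y)$ --- and atomlessness alone does not in general furnish such a pointwise bijection, so it is better omitted. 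Also, $|\phi_n(x)|\le |x|$ fails for negative $x$ (the floor pushes down by up to $2^{-n}$), but domination by $|x|+1$ serves the same purpose.

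The genuine problem is your workaround for $\rho$-priced contracts. You propose to refine the discretization so that $I(X_n)\laweq I(X)$ for every $n$. But $X_n$ must lie in $\mathcal{S}$ for hypothesis (i) to be applicable, hence $I(X_n)$ is a simple random variable; if $I(X)$ is not simple (for instance, $X$ with a continuous distribution and $I$ strictly increasing), no simple random variable can share its distribution, so the proposed quantile matching is impossible. Consequently the convergence $\rho(I(X_n))\to\rho(I(X))$ remains unjustified for a merely law-invariant $\rho$, and your limit argument does not go through in the $\rho$-priced case. To be fair, you have correctly isolated a subtlety that the paper's own proof glosses over --- it simply asserts $X_n+C(X_n)\stackrel{L^p}{\to}X+C(X)$ for a general contract $C$ without discussing the constant term $\rho(I(X_n))$ --- but your proposed repair does not close the gap.
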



Our main results on first and second risk--insurance parity extend to $L^p$ spaces, as shown in the next result.  
\begin{proposition}
\label{prop:extension-to-Linfty}
 The statements of Theorems~\ref{th:1st-r-i-p-standard}--\ref{th:right-handle-r-i-p-if} and \ref{th:left-handle-r-i-p-if}, Propositions~\ref{th:2nd-r-i-p-standard-contracts}--\ref{th:2nd-r-i-p-priced-contracts-prop}, and Lemma~\ref{lem:technical-conditions} hold for continuous risk preferences defined on $L^p$, for $p\in[1,\infty]$. In the case of Theorem~\ref{th:2nd-r-i-p}, let $\succsim_*^{\mathcal{C}}$ be defined using the corresponding notion of continuity. 
\end{proposition}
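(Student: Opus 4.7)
The plan is to reduce each cited result to its $\mathcal{S}$-version by a two-step argument: (1) use Proposition~\ref{prop:L0-Linfty} to identify insurance propensity on $L^p$ with insurance propensity on $\mathcal{S}$; (2) show that each notion of risk aversion appearing in the cited results (weak, strong, right-handle, left-handle) on $L^p$ is equivalent to the same notion of its restriction to $\mathcal{S}$, for a continuous risk preference.

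For step (1), I would first observe that if $\succsim$ is a continuous risk preference on $L^p$ (in the $L^p$-sense when $p<\infty$, in bounded convergence when $p=\infty$), then $\succsim|_{\mathcal S}$ is a continuous risk preference on $\mathcal S$ in the bounded-convergence sense of Section~\ref{sect:prelim}: bounded convergence on $\mathcal S$ implies $L^p$-convergence by dominated convergence when $p<\infty$, and coincides with the $p=\infty$ case otherwise. Applying Proposition~\ref{prop:L0-Linfty} contract-by-contract to a set $\mathcal C$ then shows that insurance propensity of $\succsim$ to $\mathcal C$ on $L^p$ is equivalent to insurance propensity of $\succsim|_{\mathcal S}$ to $\mathcal C$ on $\mathcal S$.

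For step (2), the ``restriction'' direction is immediate: each notion of risk aversion on $L^p$ trivially restricts to the same notion on $\mathcal S$ since $\mathcal S\subseteq L^p$. For the ``extension'' direction, I would approximate: given $X,Y\in L^p$ with $X\leq_\star Y$, where $\leq_\star$ is any of $\leq_{\rm cx}$, $\leq_{\rm lhcx}$, $\leq_{\rm rhcx}$, or the degenerate-domination used for weak risk aversion, I would produce simple $X_n\to X$ and $Y_n\to Y$ in $L^p$ satisfying $X_n\leq_\star Y_n$ for all $n$, and then invoke continuity of $\succsim$. For weak risk aversion this follows from standard simple approximations $X_n\to X$ in $L^p$ with $\mathbb E[X_n]\to\mathbb E[X]$. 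For the convex order I would use a quantile-discretization: for a common partition $\{0=t_0<\cdots<t_N=1\}$ of $[0,1]$, let $X_N$ and $Y_N$ be the step random variables whose value on each $[t_{k-1},t_k)$ is the average of $F_X^{-1}$ and $F_Y^{-1}$ respectively over that interval; this construction preserves $\leq_{\rm cx}$ and converges in $L^p$. For $\leq_{\rm lhcx}$ and $\leq_{\rm rhcx}$, Proposition~\ref{prop:handlecontinuity} already delivers closedness under bounded convergence, and the analogous $L^p$-version is proved by a very similar truncation-and-discretization argument, tailored to leave the left (resp.~right) handle untouched.

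Combining the two steps, each biconditional in the cited results transfers: for continuous $\succsim$ on $L^p$, insurance propensity to $\mathcal C$ on $L^p$ is equivalent (by step~1) to insurance propensity of $\succsim|_{\mathcal S}$ on $\mathcal S$, which by the already-proved $\mathcal S$-version of the result is equivalent to the relevant risk aversion of $\succsim|_{\mathcal S}$ on $\mathcal S$, which by step~2 is equivalent to the same risk aversion of $\succsim$ on $L^p$. For Theorem~\ref{th:2nd-r-i-p}, the abstract condition involving $\succsim^{\mathcal C}_*$ is simply reinterpreted with closure taken in $L^p$-topology, as the proposition's statement prescribes, and the same scheme goes through. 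The main obstacle is the order-preserving simple approximation in step~2: constructing $X_n,Y_n\in\mathcal S$ that converge in $L^p$ while maintaining the handle or convex order uniformly in $n$ requires some care, especially for $p=1$ where controlling tail mass at infinity is delicate; quantile-based constructions combined with truncation at mutually chosen levels handle this, but this technical approximation is where the bulk of the work lies.
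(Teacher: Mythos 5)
Your proposal is correct and follows essentially the same route as the paper: reduce insurance propensity across spaces via Proposition~\ref{prop:L0-Linfty}, and transfer the risk-aversion notions by an order-preserving simple approximation (your quantile-averaging over a partition of $[0,1]$ is exactly the paper's conditional expectation $\E[F_X^{-1}(U)\mid\mathcal F_n]$ on dyadic $\sigma$-fields, with convex-order preservation cited from \citet[Theorem~3]{MMWW25} and convergence from the martingale convergence theorem), followed by continuity of $\succsim$. The technical caveat you flag about tailoring the approximation to the handle orders is left at a comparable level of detail in the paper itself.
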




It is straightforward to check that the result of Theorem~\ref{prop:yaari-right-handle} also extends to risk preferences defined on $L^p$ for $p\in [1,\infty]$, under the additional assumption that   the distortion function $h$ yields a finite value of $U_h$ on $L^p$. 

\begin{remark}In Definitions~\ref{def:right-handle-r-a} and~\ref{def:left-handle-r-a}, right-handle and left-handle risk aversions are defined in terms of   subsets of mean-preserving spreads.  
Although the definitions only directly impose preference conditions for discrete and mixed random variables, this has nontrivial implications on continuous random variables because of  continuity (see e.g., Theorem \ref{prop:yaari-right-handle}, which has restrictions for comparing general distributions). 
This is the same for strong risk aversion, which can equivalently be stated in terms of mean-preserving mass transfers and continuity (\citealp{RS70}).
\end{remark}

\begin{remark}
\label{rem:L1}
 Continuity with respect to the $L^p$ norm complicates the arguments that were used in the proof of Theorems~\ref{th:right-handle-r-i-p-shape} and~\ref{th:left-handle-r-i-p-shape}, since Proposition~\ref{prop:handlecontinuity} no longer holds with respect to such continuity; see also Remark~\ref{rem:onlyif-lemma1} above. We leave it as an open question whether the these results continue to hold when extending the space of random variables to $L^p$.
\end{remark}

\subsection{Risk propensity and insurance aversion}

    All results of this paper also hold when trading ``risk aversion" for ``risk propensity (also called risk seeking)" and ``insurance propensity" for ``insurance aversion" in their statements. This is obvious as it simply amounts to switching ``$\succsim$" to ``$\precsim$" everywhere. In particular, this means that every type of risk--insurance parity also describes the equivalence between insurance aversion to a contract set satisfying it and corresponding risk propensity.
 
\section{Conclusion}
\label{sec:concl}

We introduced the notion of risk--insurance parity to examine the equivalence relation between risk aversion and insurance propensity to a set of contracts. Different concepts of risk aversion envisaged obviously tie to different sets of contracts for insurance propensity. \cite{MMWW25} first investigated such equivalence relations; we provided a full characterization of them. 
First risk--insurance parity treats of weak risk aversion, and we proved that it is satisfied only by full-indemnity contract sets. Second risk--insurance parity treats of strong risk aversion; we also characterized it fully and provided several examples of contract sets satisfying it.

We expanded the insurance-propensity notion by also considering $\rho$-priced contract sets and not just standard contract sets.
The use of premium principles allowed for a sensible reduction in the size of contract sets satisfying first risk--insurance parity, even reducing it to only one contract. The impact was more limited in the case of second risk--insurance parity, though.   

Risk--insurance parity moreover offers a way to define new notions of risk aversion, letting them be equivalent to insurance propensity to a contract set. 
From this approach, we designed two new notions of risk aversion: right-handle risk aversion and left-handle risk aversion. 
Right-handle risk aversion is
equivalent to insurance propensity to deductible-only contracts. Left-handle risk aversion is
equivalent to insurance propensity to limit-only contracts. 
Both of them also have an interpretation in terms of mean-preserving spreads. In the dual utility model, they correspond to star-shapedness of the weighting function at 1 and 0, respectively.
 Finally, we offered some remarks on our results and, in particular, discussed some limitations arising when extending the space of random variables under analysis regarding continuity of risk preferences.  
We also showed that there are no LC contract set for which insurance propensity corresponds to monotone risk aversion.

\subsection*{Acknowledgments}
 Ruodu Wang is supported by the Natural Sciences and Engineering Research Council of Canada (RGPIN-2018-03823, RGPAS-2018-522590) and Canada Research Chairs (CRC-2022-00141).

{
\small

}

\appendix

\section{Proofs}
\subsection{Proof of Theorem~\ref{th:1st-r-i-p-standard}}



\label{proof:1st-r-i-p-standard}

For the ``if" statement: 
Let $\mathcal{C}=\{C_\pi : X\mapsto \pi-X\mid X\in\mathcal{S}, \pi \in \Pi \}$ where $\Pi$ is dense in $\mathbb{R}$.
The statement that first risk–insurance parity holds for $\mathcal C$ was proved in \citet[Theorem 1]{MMWW25} for the case $\Pi = \mathbb{R}$, under either of the following settings: the space of all bounded random variables, or the space of random variables with finite moments of all orders.  Our proof is a simple modulation of theirs, accounting for other dense sets by leveraging continuity of risk preferences, and the space $\mathcal{S}$ of simple random variables. First, we show that weak risk aversion implies insurance propensity to $\mathcal{C}$.
For any $X,Y\in \mathcal{S}$ such that $X\laweq Y$, and any $C_\pi\in\mathcal{C}$, we have
\begin{equation*}
    X+C_{\pi}(X) = \pi = \mathbb{E}\left[X + C_{\pi}(Y)\right] \succsim X + C_{\pi}(Y), 
\end{equation*}
where the preference relation follows from weak risk aversion. 

Next, we show that insurance propensity to $\mathcal{C}$ implies weak risk aversion. Let $X\in \mathcal{S}$. There exists a bounded sequence of random variables $\{X_n:n\in\mathbb{N}\}\subseteq \mathcal{S}$, each piecewise constant on finitely many events with equal probability, such that $X_n \stackrel{\rm B}{\to} X$; see \citet[Lemma 5]{MMWW25}. Furthermore, we can assume that $\E[X_n]\in \Pi$ for all $n\in\N$ as $\Pi$ is dense in $\R$.
By \citet[Lemma 1]{MMWW25}, for each $n\in\mathbb{N}$, there exist $Z_n,W_n\in \mathcal{S}$ such that $X_n\laweq Z_n - W_n + \mathbb{E}[X_n]$
and $Z_n \laweq W_n$. Then, 
\begin{equation*}
\mathbb{E}[X_n] = Z_n + C_{\mathbb{E}[X_n]}(Z_n) \succsim Z_n + C_{\mathbb{E}[X_n]}(W_n) = Z_n - W_n + \mathbb{E}[X_n] \laweq X_n.
\end{equation*}
The law invariance of $\succsim$ implies $\E[X_n]\succsim X_n$ for all $n\in\N$.
Hence, we conclude that $\mathbb{E}[X]\succsim X$ by continuity of the risk preference.

For the ``only if" statement: We show that,  whenever $\mathcal{C}$ does not satisfy the conditions in the statement, meaning $\mathcal{C}$ comprises other types of contract or $\Pi$ is not dense in $\mathbb{R}$, there exist a continuous risk preference such that weak risk aversion and insurance propensity to $\mathcal{C}$ are not equivalent. First, suppose $\mathcal{C}$ includes  a standard contract, say $C^{*}: X\mapsto \pi^* - I^*(X)$, that is not a full indemnity contract. Define the risk preference $\succsim_a$ as
\begin{equation*}
    X \succsim_a Y \iff  X = \mathbb{E}[Y] \text{ or }X\laweq Y.
\end{equation*}
The relation $\succsim_a$ is law-invariant, transitive and weakly risk averse. It is also continuous because bounded convergence implies convergence of expectations. 
There exists $A\subseteq \R$ such that $I^{*}(x) \neq x$ on $A$. 
Consider a random variable $Z\in \mathcal{S}$ with $\mathbb{P}(Z\in A)\in(0,1)$, yielding $Z+C^*(Z) \neq \pi^*$. Thus, for any $W\in\mathcal{S}$, $W\laweq Z$, we have $Z + C^*(Z) \neq \mathbb{E}[Z+C^*(W)]$, so $Z+C^*(Z)\not\succsim Z+C^*(W)$. Hence, $Z^*$ is not insurance propense to $\mathcal{C}$; and thus, $\mathcal{C}$ does not satisfy first risk--insurance parity. 
Next, suppose that $\mathcal{C}$ only includes full-indemnity standard contracts $C_\pi : X\mapsto \pi - X$, $\pi\in \Pi$, but $\Pi$ is not dense in $\mathbb{R}$. We will show that there exists a continuous risk preference that is insurance propense to $\mathcal{C}$ but not weakly risk averse. Define $\succsim$ by
\begin{equation*}
    X\succsim Y \iff \substack{X = \mathbb{E}[Y], \text{ and }\mathbb{E}[Y]\text{ can be approached by a sequence in }\Pi, \\ \text{or }X\laweq Y.}
\end{equation*}
The relation $\succsim$ is transitive, law invariant and continuous, because bounded convergence implies convergence of expectations. 
Note that for any $Z\in\mathcal{S}$ and $C_{\pi}\in\mathcal{C}$, $\pi\in\Pi$, we have $Z + C_{\pi}(Z) = \mathbb{E}[Z + C_{\pi}(W)] = \pi$. Hence, $\succsim$ is insurance propense to $\mathcal{C}$; however, by definition, it is not weakly risk averse.  
 \hfill$\square$

\subsection{Proof of Theorem~\ref{th:1st-r-i-p-priced}} 
\label{proof:1st-r-i-p-priced}



Suppose that $\mathcal C$ comprises solely a full-indemnity contract and the premium principle $\rho$ satisfies the range property \eqref{eq:propertyA}, i.e., $\mathcal C=\{C_{\rho}: X\mapsto \rho(X)-X\}$. Weak risk aversion implying insurance propensity to $\mathcal{C}$ follows directly from Theorem~\ref{th:1st-r-i-p-standard} and law invariance of $\rho$. Conversely, assume that insurance propensity to $\mathcal{C}$ holds, and we will argue that it implies weak risk aversion. For any $X\in\mathcal S$, there exists a bounded sequence of random variables $\{X_n:n\in\mathbb{N}\}\subseteq \mathcal{S}$, each piecewise constant on finitely many events with equal probability, such that $X_n \stackrel{\rm B}{\to} X$; see \citet[Lemma 5]{MMWW25}. 
By \citet[Lemma 1]{MMWW25}, there exist $Z_n,W_n\in\mathcal S$ such that $X_n\laweq Z_n-W_n+\E[X_n]$ and $Z_n\laweq W_n$. Furthermore, because $\rho$ satisfies \eqref{eq:propertyA}, there exists $\{b_n: n\in\mathbb{N}\}\subseteq \mathbb{R}$ such that $\rho(Z_n+b_n)=\mathbb{E}[X_n]$. 
The law invariance of $\rho$ yields $\rho(W_n+b_n)=\rho(Z_n+b_n)=\E[X_n]$ because $Z_n+b_n\laweq W_n+b_n$.
Therefore,
\begin{equation*}
    \mathbb{E}[X_n] = (Z_n+b_n) + C_{\rho}(Z_n+b_n) \succsim (Z_n+b_n) + C_{\rho}(W_n+b_n) = Z_n - W_n + \mathbb{E}[X_n]  \laweq X_n,
\end{equation*}
where the preference relation follows from insurance propensity to $\mathcal{C}$ and $Z_n+b_n\laweq W_n+b_n$.
It follows from law invariance of $\succsim$ that $\E[X_n]\succsim X_n$ for all $n\in\N$.
Hence, $\mathbb{E}[X] \succsim X$ by continuity of risk preferences. 
\hfill$\square$

\subsection{Proof of Theorem~\ref{th:2nd-r-i-p}}
\label{proof:2nd-r-i-p}
For the ``if" statement: Assume $\mathcal{C}$ satisfies the conditions in the theorem's statement. 
Let us first show that, for any continuous risk preference $\succsim$, strong risk aversion implies insurance propensity to $\mathcal{C}$. Choose any $Z\in \mathcal{S}$ and $C\in\mathcal{C}$. 
The increasingness of indemnity functions imply that $Z$ and $C(Z)$ are counter-monotonic. Hence, for any $W\in\mathcal S$ with $W\laweq Z$, it holds that
$C(W) \laweq C(Z)$, and we have $Z+C(Z) \leq_{\rm cx} Z+C(W)$, see e.g., \citet[Corollary 3.28]{R13}. Therefore, $Z+C(Z) \succsim Z+C(W)$ since $\succsim$ is strongly risk averse.  

Let us next prove that, for any continuous risk preference $\succsim$, insurance propensity to $\mathcal{C}$ implies strong risk aversion. 
Because $\succsim_*^{\mathcal C}$ is the smallest continuous risk preference that is insurance propense to $\mathcal{C}$, if for some $X,Y\in\mathcal{S}$, $X\succsim_*^{\mathcal C} Y$ holds true, then $X\succsim Y$ holds also true for every other risk preference that is insurance propense to $\mathcal{C}$. Since $X \le_{\rm cx} Y$ implies $X\succsim_*^{\mathcal{C}} Y$, then $\succsim_*^{\mathcal C}$ is strong risk averse, and so is all continuous risk preferences that is insurance propense to $\mathcal{C}$.

 For the ``only if" statement, suppose that $X \le_{\rm cx} Y$ does not imply $X \succsim_*^{\mathcal{C}}Y$. Then, $\succsim_*^{\mathcal{C}}$ itself is an example of a continuous risk preference that is insurance propense to $\mathcal{C}$, by definition, but not strong risk averse. Thus, $\mathcal{C}$ does not satisfy second risk--insurance parity.  \hfill$\square$

\subsection{Proof of Lemma~\ref{lem:technical-conditions}}
\label{proof:technical-conditions}

Suppose that $\mathcal{C}$ satisfies the conditions in the lemma's statement. 
The argument supporting that strong risk aversion implies insurance propensity to $\mathcal{C}$ is the same as in the proof of Theorem~\ref{th:2nd-r-i-p} and we will not repeat it. We next prove that insurance propensity to $\mathcal{C}$ implies strong risk aversion.

 For $X,Y\in \mathcal{S}$, we have $X\leq_{\rm cx} Y$ if and only if $Y$ is the result of finitely many simple mean-preserving spreads of $X$ (\citealp[Lemma~1]{RS70}).
It suffices to prove $X\succsim Y$ for any $X,Y\in \mathcal{S}$ differing by only one simple mean-preserving spread, and the general case will follow by transitivity of $\succsim$. The case where $X\laweq Y$ (i.e., they differ by zero mean-preserving spreads) is trivial by law invariance and reflexivity of $\succsim$. Furthermore,
again evoking transitivity, we may assume $X$ is constant on ${A}_1$ and ${A}_2$: otherwise, re-express $Y$ as the result of multiple simple mean-preserving spreads, breaking down $A_1$ and $A_2$ in smaller equiprobable parts following the masses of $X$. Let $x_1 = X(A_1)$ and $x_2 = X(A_2)$.

We can assume $x_1 < x_2$. The case where $x_1 = x_2$ will follow by continuity of the risk preference:  Define sequences of random variables $\{X_n: n\in\mathbb{N}\}$ and $\{Y_n: n\in\mathbb{N}\}$ where $X_n = X - 2^{-n}\id_{A_1} + 2^{-n}\id_{A_2}$ and $Y_n = X_n - \delta\id_{A_1} + \delta\id_{A_2}$ for every $n\in\mathbb{N}$. Then, $X_n \stackrel{\rm B}{\to} X$, $Y_n \stackrel{\rm B}{\to} Y$,  $X_n\leq_{\rm cx} Y_n$ and $X_n(A_1) < X_n(A_2)$ for all $n\in\mathbb{N}$.  We can moreover assume $A_1 = \{X=x_1\}$ and $A_2 =\{X = x_2\}$, implying $\mathbb{P}(X = x_1) = \mathbb{P}(X=x_2)$, and the general case will follow by a similar argument, letting $X_n = X - 2^{-n}\id_{\{X = x_1\}\backslash A_1} + 2^{-n}\id_{\{X = x_2\}\backslash A_2}$ and evoking continuity.

For the sake of clarity, let us restate what our problem has been reduced to from the arguments above: 
For any $X\in \mathcal{S}$, $\delta > 0$ and $x_1,x_2\in\mathrm{Supp}(X)$ with $\mathbb{P}(X=x_1) = \mathbb{P}(X = x_2)$, let
$Y = X - \delta \id_{\{X=x_1\}} + \delta \id_{\{X=x_2\}}$. We must establish $X\succsim Y$ by relying on the insurance propensity of $\succsim$ to contracts in $\mathcal{C}$. Let $\{C_i: i\in\mathbb{N}\}$, $\{Z_{i,n}: i\in\mathbb{N},  n\in\mathbb{N}\}$ and $\{\delta_i:i\in\mathbb{N}\}$ satisfy the conditions given in the lemma's statement for that choice of $X$, $x_1$, $x_2$ and $\delta$.

Note that \eqref{eq:2nd-COND1} implies that, for every $i,n\in\mathbb{N}$, there is a measure-preserving bijection $\xi_{i,n} : \{X=x_1\}\to \{X=x_2\}$ such that 
\begin{equation}
C_i(Z_{i,n})(\omega) - C_i(Z_{i,n})(\xi_{i,n}(\omega)) = \delta_i \quad \text{for every }\omega\in\{X=x_1\}
    \label{eq:2nd-COND1-bij}
\end{equation}
and 
\begin{equation}
C_i(Z_{i,n})(\xi_{i,n}^{-1}(\omega)) - C_i(Z_{i,n})(\omega) = \delta_i \quad \text{for every }\omega\in\{X=x_2\}.
    \label{eq:2nd-COND2-bij}
\end{equation}
Define $W_{i,n}$ such that
\begin{equation*}
    W_{i,n}(\omega) = \begin{cases}
        Z_{i,n}(\omega), &\omega\in\Omega\backslash\{X \in \{x_1,x_2\}\},\\
        Z_{i,n}(\xi_{i,n}(\omega)), &\omega\in\{X_1=x_1\},\\
        Z_{i,n}(\xi_{i,n}^{-1}(\omega)), &\omega\in\{X_2=x_2\},
    \end{cases}\quad \text{for every }i,n\in\mathbb{N}.
\end{equation*}
It is clear that $W_{i,n}\laweq Z_{i,n}$. Also, from \eqref{eq:2nd-COND1-bij} and \eqref{eq:2nd-COND2-bij}, we have 
\begin{equation}
C_i(W_{i,n}) = C_i(Z_{i,n}) - \delta_i \id_{\{X=x_1\}} + \delta_i \id_{\{X=x_2\}}, \quad \quad \text{for every }i,n\in\mathbb{N}. 
\label{eq:proof-2nd-r-i-p-CZ-CW}
\end{equation}
Combining all this, we have
\begin{align*}
Z_{i,n} + C_i(Z_{i,n}) \succsim Z_{i,n} + C_i(W_{i,n}) &= Z_{i,n} + C_i(Z_{i,n}) - \delta_i \id_{\{X=x_1\}} + \delta_i \id_{\{X=x_2\}}, \quad \text{for every }i,n \in\mathbb{N}, 
\end{align*}
where the risk-preference relation comes from the insurance propensity to $\mathcal{C}$.  Letting $n\to\infty$, we obtain, evoking \eqref{eq:2nd-COND2} and continuity of risk preferences, 
\begin{equation*}
    X - \sum_{j=1}^{i-1} \delta_j \id_{\{X=x_1\}} + \sum_{j=1}^{i-1} \delta_j \id_{\{X=x_2\}} \succsim  X - \sum_{j=1}^{i} \delta_j \id_{\{X=x_1\}} + \sum_{j=1}^{i} \delta_j \id_{\{X=x_2\}},  \quad \text{for every }i\in\mathbb{N}.
\end{equation*}
Since this is true for every $i\in\mathbb{N}$, transitivity of risk preferences allows to conclude 
\begin{equation*}
    X \succsim  X - \sum_{j=1}^{{\infty}} \delta_j \id_{\{X=x_1\}} + \sum_{j=1}^{\infty} \delta_j \id_{\{X=x_2\}} =  X - \delta \id_{\{X=x_1\}} + \delta \id_{\{X=x_2\}}  = Y.
\end{equation*}
Hence, we complete the proof.
\hfill$\square$

\subsection{Proof of Proposition~\ref{th:2nd-r-i-p-standard-contracts}}
\label{proof:2nd-r-i-p-standard-contracts}
 First, each set indeed comprises valid standard contracts, i.e., they are decreasing and of the form $x\mapsto \pi - I(x)$. Consider any $X\in \mathcal{S}$, $\delta>0$ and $x_1,x_2 \in \mathrm{Supp}(X)$ with $x_1<x_2$ and $\mathbb{P}(X=x_1) = \mathbb{P}(X=x_2)$. We prove, for each item, the existence of the sequences indicated in Lemma~\ref{lem:technical-conditions}, thus yielding second risk--insurance parity.

    Proof of \ref{item:Cra-proportional}: 
    Define the sequence $\{\delta_i: i\in\mathbb{N}\}$ such that $\sum_{j=1}^\infty \delta_j = \delta$ and $\delta_i/(\delta_i + x_2-x_1 + 2\sum_{j=1}^{i-1}\delta_j) \leq \alpha_0$ for all $i\in\N$. Note that this sequence will exist no matter how small the difference between $x_2$ and $x_1$. Set $\alpha_i = \delta_i/(\delta_i+ x_2-x_1 + 2\sum_{j=1}^{i-1}\delta_j)$, $i\in\mathbb{N}$. 
    For any $Y\in \mathcal{S}$, letting $f_{\alpha}(Y)=(Y-\pi_0)/(1-\alpha)$, we note 
\begin{equation*}
    f_{\alpha}(Y) + C_{\alpha}(f_{\alpha}(Y)) = \frac{(Y-\pi_0)}{(1-\alpha)} + \pi_0 - \alpha \frac{(Y-\pi_0)}{(1-\alpha)} = Y,  \quad \alpha\in(0,\alpha_0).  
\end{equation*}
      Let $C_i: X \mapsto \pi_0 -\alpha_i X$ and $Z_{i,n} = f_{\alpha_i}(X - \sum_{j=1}^{i-1}\delta_j\id_{\{X=x_1\}} + \sum_{j=1}^{i-1}\delta_j\id_{\{X=x_2\}} )$ for all $i,n\in\mathbb{N}$ to show that there exist sequences $\{C_i:i\in\mathbb{N}\}$ and $\{Z_{i,n}: n\in\mathbb{N}\}$ satisfying \eqref{eq:2nd-COND1} and \eqref{eq:2nd-COND2}.  


    Proof of \ref{item:Cra-dl-d}: Choose $\{\delta_i:i\in\mathbb{N}\}$ that satisfies $\delta_i \leq \lambda_0$ for all $i\in\mathbb{N}$ and $\sum_{i=1}^{\infty}\delta_i = \delta$. Take $\pi^*\in\Pi$ such that $x_1 - d_0 <  \pi^* < x_2 - d_0$, and define the sequence of contracts $\{C_i:i\in\mathbb{N}\}\subseteq \mathcal{C}$ where $C_i:X  \mapsto \pi^* - (X-d_0)_+\wedge\delta_i$. 
   Define the function 
   \begin{equation*}
   g_{\pi,\lambda,d}(Y) = Y - \pi + \lambda\id_{\{Y-\pi \geq d\}},~~~\mbox{ for any }\pi,\lambda,d\in\mathbb{R} \mbox{ and $Y\in \mathcal{S}$},
   \end{equation*}
    which will be used throughout the remaining proof items. We note 
    \begin{align*}
    g_{\pi^*, \delta_i, d_0}(Y) + C_i(g_{\pi^*, \delta_i, d_0}(Y)) &= Y - \pi^* + \delta_i\id_{\{Y-\pi^*\geq d_0\}} + \pi^* -  \left(Y - \pi^* + \delta_i\id_{\{Y-\pi^*\geq d_0\}} - d_0 \right)_+\wedge\delta_i   \\
    &= Y - \pi^* + \delta_i\id_{\{Y-\pi^*\geq d_0\}} + \pi^* - \delta_i\id_{\{Y-\pi^*\geq d_0\}}\\
    &= Y, \quad\quad \text{for any }Y\in \mathcal{S}.
    \end{align*}
    From there, it is standard to show that sequences $\{C_i:i\in\mathbb{N}\}$ and $\{Z_{i,n}:i,n\in \mathbb{N}\}$ with $Z_{i,n} =  g_{\pi^*, \delta_i, d_0}(X - \sum_{j=1}^{i-1}\delta_j\id_{\{X=x_1\}} + \sum_{j=1}^{i-1}\delta_j\id_{\{X=x_2\}})$, for every $i,n\in\mathbb{N}$, satisfy \eqref{eq:2nd-COND1} and \eqref{eq:2nd-COND2}.

    Proof of \ref{item:Cra-dl-pi}: Choose $\{\delta_i:i\in\mathbb{N}\}$ such that $\delta_i \leq \lambda_0$ for all $i\in\mathbb{N}$, $\sum_{i=1}^{\infty}\delta_i$, and $d^*\in\Pi$ such that $x_1-\pi_0 < d^* < x_2-\pi_0$. The sequences $\{C_i:i\in\mathbb{N}\}$, with $C_i: X\mapsto \pi_0 - (X-d^*)_+\wedge \delta_i$, and $\{Z_{i,n}:i,n\in\mathbb{N}\}$, where $Z_{i,n} = g_{\pi_0, \delta_i, d^*}(X - \sum_{j=1}^{i-1}\delta_j\id_{\{X=x_1\}} + \sum_{j=1}^{i-1}\delta_j\id_{\{X=x_2\}})$, 
    satisfy \eqref{eq:2nd-COND1} and \eqref{eq:2nd-COND2}.
    
    Proof of \ref{item:Cra-fix-trigg}: Take $\{\delta_i : i\in\mathbb{N}\}$ to be such that $\delta_i\leq \kappa_0$ for all $i\in\mathbb{N}$ and $\sum_{i=1}^{\infty}\delta_i = \delta$. Choose $\pi^*\in\Pi$ satisfying $x_1-\tau_0 <\pi^*<x_2+\tau_0$. 
    It is straightforward to verify that the sequences $\{C_i:i\in\mathbb{N}\}$, where $C_i : X\mapsto \pi^* - \delta_i \id_{\{X\geq \tau_0\}}$ for all $i\in\mathbb{N}$, and $\{Z_{i,n}: i,n\in\mathbb{N}\}$, where $Z_{i,n} = g_{\pi^*,\delta_i, \tau_0}(X - \sum_{j=1}^{i-1}\delta_j \id_{\{X=x_1\}} + \sum_{j=1}^{i-1}\delta_j \id_{\{X=x_2\}})$, satisfy \eqref{eq:2nd-COND1} and \eqref{eq:2nd-COND2}.

    Proof of \ref{item:Cra-fix-pi}: Again, take $\{\delta_i : i\in\mathbb{N}\}$ to be such that $\delta_i\leq \kappa_0$ for all $i\in\mathbb{N}$ and $\sum_{i=1}^{\infty}\delta_i = \delta$. Choose $\tau_*\in\Pi$ satisfying $x_1-\pi_0 < \tau^* < x_2-\pi_0$. Letting $C_i : X \mapsto \pi_0 - \delta_i \id_{\{X\geq \tau^*\}}$ and $Z_{i,n} = g_{\pi_0,\delta_i, \tau^*}(X - \sum_{j=1}^{i-1}\delta_j \id_{\{X=x_1\}} + \sum_{j=1}^{i-1}\delta_j \id_{\{X=x_2\}})$, for every $i,n\in\mathbb{N}$,
shows that the criteria in \eqref{eq:2nd-COND1} and \eqref{eq:2nd-COND2} are met.
\hfill$\square$

\subsection{Proof of Proposition~\ref{th:2nd-r-i-p-priced-contracts-dl}}
\label{proof:2nd-r-i-p-priced-contracts}

    Consider any $X\in \mathcal{S}$, $\delta>0$ and $x_1,x_2\in\mathrm{Supp}(X)$ with $\mathbb{P}(X = x_1) = \mathbb{P}(X = x_2)$ and $x_1<x_2$. For each item, we show the existence of sequences $\{C_i:i\in\mathbb{N}\}\subseteq \mathcal{C}$, $\{Z_{i,n}: i,n\in\mathbb{N}\}$ and $\{\delta_i:i\in\mathbb{N}\}\subseteq \mathbb{R}_+$, with $\sum_{i=1}^\infty \delta_i = \delta$, satisfying \eqref{eq:2nd-COND1} and \eqref{eq:2nd-COND2}. Hence, second risk--insurance parity follows by Lemma~\ref{lem:technical-conditions}. 

    For item~\ref{item:Dra-dl-pi}: Define $\{\delta_i:i\in\mathbb{N}\}$ so that it satisfies $\delta_i \leq \lambda_0$ for all $i\in\mathbb{N}$ and $\sum_{i=1}^{\infty} \delta_i = \delta$. Let $d^*\in\mathbb{R}$ be such that $x_1 < d^* < x_2$. 
    For the sake of clarity, we write $W_{i} = X - \sum_{j=1}^{i-1}\delta_j\id_{\{X = x_1\}} +  \sum_{j=1}^{i-1}\delta_j\id_{\{X = x_2\}}$ and $b_i = \rho(\delta_i \id_{\{W_i \geq d^*\}})$.
    
Define the function 
   \begin{equation*}
   h_{\pi,\lambda,d}(Y) = Y - \pi + \lambda\id_{\{Y-\pi \geq d\}},~~~\mbox{ for any }\lambda,d\in\mathbb{R} \mbox{ and $Y\in \mathcal{S}$},
   \end{equation*}
    which will be used also for item \ref{item:Dra-fix-pi}. 
   Let $d_i = d^* - b_i$. 
 For every $i,n\in\mathbb{N}$, choose $C_i : X \mapsto \rho((X - d_i)_+\wedge{\delta_i}) - (X - d_i)_+\wedge{\delta_i}$ and $Z_{i,n} = h_{\delta_i, d_i}(W_i - b_i)$.  
  We note that
\begin{align}
Z_{i,n} + C_i(Z_{i,n}) &= W_i -b_i + \delta_i \id_{\{W_i - b_i\geq d_i\}} + \rho((W_i -b_i + \delta_i \id_{\{W_i - b_i\geq d_i\}} - d_i)_+\wedge \delta_i) \notag\\
&\quad \quad \quad - (W_i -b_i + \delta_i \id_{\{W_i - b_i\geq d_i\}} - d_i)_+\wedge \delta_i \notag\\
&= W_i - b_i + \delta_i \id_{\{W_i \geq d^*\}} + \rho(\delta_i \id_{\{W_i \geq d^*\}}) -  \delta_i \id_{\{W_i \geq d^*\}} \notag\\
&= W_i - b_i + \delta_i \id_{\{W_i \geq d^*\}} + b_i -  \delta_i \id_{\{W_i \geq d^*\}} \notag\\
&= W_i \quad\quad \quad\quad \quad\quad \quad\quad \quad\quad  \quad\quad \quad\quad \quad\quad \quad\quad\quad\quad\quad\text{for any }i,n\in \mathbb{N}.\label{eq:proof-Dra-dl}
\end{align}
    The remaining steps to verify that $\{C_i:i\in\mathbb{N}\}$ and $\{Z_{i,n}:i,n\in\mathbb{N}\}$ satisfy \eqref{eq:2nd-COND1} and \eqref{eq:2nd-COND2} are straightforward once \eqref{eq:proof-Dra-dl} is given.

    For item~\ref{item:Dra-fix-pi}: Similarly to item~\ref{item:Dra-dl-pi}, define $\{\delta_i:i\in\mathbb{N}\}$ so that it satisfies $\delta_i \leq \kappa_0$ for all $i\in\mathbb{N}$ and $\sum_{i=1}^{\infty} \delta_i = \delta$. Take $\tau^*\in\mathbb{R}$ so that $x_1 < \tau_0 < x_2$. Set $b_i = \rho(\delta_i \id_{\{W_i \geq d^*\}})$ and let $\tau_i = \tau^* - b_i$. The sequences $\{C_i : i\in\mathbb{N}\}$,  where $C_i : X \mapsto \rho(\delta_i \id_{\{X \geq \tau_i\}}) - \delta_i \id_{\{X \geq \tau_i\}}$, and $\{Z_{i,n} : i,n\in\mathbb{N}\}$, where $Z_{i,n} = h_{\delta_i, \tau_i}(W_i - b_i)$, satisfy \eqref{eq:2nd-COND1} and \eqref{eq:2nd-COND2}. \hfill$\square$

\subsection{Proof of Proposition~\ref{th:2nd-r-i-p-priced-contracts-prop}}

Consider any $X\in \mathcal{S}$, $\delta>0$ and $x_1,x_2\in\mathrm{Supp}(X)$ with $\mathbb{P}(X = x_1) = \mathbb{P}(X = x_2)$ and $x_1<x_2$. We show the existence of sequences $\{C_i:i\in\mathbb{N}\}\subseteq \mathcal{C}$, $\{Z_{i,n}: i,n\in\mathbb{N}\}$ and $\{\delta_i:i\in\mathbb{N}\}\subseteq \mathbb{R}_+$, with $\sum_{i=1}^\infty \delta_i = \delta$, satisfying \eqref{eq:2nd-COND1} and \eqref{eq:2nd-COND2}. Second risk--insurance parity thus follows by Lemma~\ref{lem:technical-conditions}. 
 
 Let $\{\delta_i : i\in\mathbb{N}\}$ be such that $\delta_i/(\delta_i + x_2 - x_1 + 2\sum_{j=1}^{i-1} \delta_j) \leq \alpha_0$ and $\sum_{i=1}^\infty \delta_i = \delta$. Note that this sequence will always exist no matter how close $x_1$ and $x_2$. Set $\alpha_i = \delta_i/(\delta_i + x_2 - x_1 + 2\sum_{j=1}^{i-1} \delta_j)$, $i\in\mathbb{N}$. For any $Y\in \mathcal{S}$ and $\alpha\in(0,1)$, choose $b_{Y,\alpha}$ such that $\rho(Y + \alpha b_{Y,\alpha}) = -(1-\alpha) b_{Y,\alpha}$; the existence of $b_{Y,\alpha}$ is guaranteed by $\rho$ satisfying \eqref{eq:propertyB}. Let $f_{\alpha}(Y) = Y/(1-\alpha) + b_{\alpha Y/(1-\alpha),\alpha}$, and note that for any $Y\in\mathcal S$,
\begin{align}
    f_{\alpha}(Y) &+ C_\alpha(f_{\alpha}(Y)) \notag\\
    &=  \frac{Y}{(1-\alpha)} + b_{\alpha Y/(1-\alpha),\alpha} + \rho\left(\alpha\left(\frac{Y}{(1-\alpha)} + b_{\alpha Y/(1-\alpha),\alpha}\right)\right) - \alpha\left(\frac{Y}{(1-\alpha)} + b_{\alpha Y/(1-\alpha),\alpha} \right) \notag\\
    &= Y + (1-\alpha)b_{\alpha Y/(1-\alpha),\alpha} + \rho\left(\frac{\alpha Y}{(1-\alpha)} + \alpha b_{\alpha Y/(1-\alpha),\alpha}\right) =Y.
    \label{eq:proof-Dra-proportional}
\end{align}
Choose $C_i: X \mapsto \rho(\alpha_i X) - \alpha_i X$ and $Z_{i,n} = f_{\alpha_i}(X - \sum_{j=1}^{i-1}\delta_j\id_{\{X=x_1\}} + \sum_{j=1}^{i-1}\delta_j\id_{\{X=x_2\}} )$, for every $i,n\in\mathbb{N}$.
With \eqref{eq:proof-Dra-proportional} in mind, it is straightforward to show that $\{C_i:i\in\mathbb{N}\}$ and $\{Z_{i,n}:i,n\in\mathbb{N}\}$ satisfy \eqref{eq:2nd-COND1} and \eqref{eq:2nd-COND2}.\hfill$\square$

\subsection{Proof of Proposition~\ref{prop:charac-handles}}
\label{proof:charac-handles}

For \ref{item:charac-left-handle}:
We first consider necessity. Assume that $X\le_{\rm lhcx} Y$ and $X\not\laweq Y$. The condition that $\E[X]=\E[Y]$ and $\min X > \min Y$ follow directly from the definition of left-handle mean-preserving spreads. Let $\{A_i : i\in [n]\}$ be a partition of $\Omega$ such that $X^{\rm co}$ and $Y^{\rm co}$ are both constant on each $A_i$, choosing $n\in\mathbb{N}$ accordingly. Denote by $x_i=X^{\rm co}(A_i)$ and $y_i=Y^{\rm co}(A_i)$ for all $i\in [n]$. Because $X^{\rm co}$ and $Y^{\rm co}$ are comonotonic, we can assume that $x_1\le x_2\le \dots \le x_n$ and $y_1\le y_2\le \dots \le y_n$. Choose $k^{*}$ such that $x_{k^*}=\min X$ and $x_{k^*+1}>\min X$. If the property of $\p(X\le t| X>\min X)\ge \p(Y^{\rm co}\le t|X^{\rm co}>\min X)$ for all $t\in\R$
does not hold, then $x_t>y_t$ for some $t>k^*$. This leads to a contradiction, as the left-handle mean-preserving spread does not strictly reduce the value for any point other than the minimal one. Next, we address sufficiency. Suppose that $\E[X]=\E[Y]$, $\min X> \min Y$, and $\p(X\le t| X>\min X)\ge \p(Y^{\rm co}\le t|X^{\rm co}>\min X)$ for all $t\in\R$. 
Apply the same notation in the proof of the necessity part: we have $x_1=x_2=\dots=x_{k^*}> y_1$ and $x_i\le y_i$ for all $i>k^*$. Let $r^*$ be the index such that $y_{r*}< x_1$ and $y_{r^*+1}\ge x_1$. Because $\mathbb{E}[X]=\mathbb{E}[Y]$, it holds that $\sum_{i=1}^{r^*} (y_i-x_i) \mathbb{P}(A_i) = \sum_{j=r^*}^{n} (x_j-y_j)\mathbb{P}(A_j)$; let these quantities be represented by $K$. For all $z>0$,  define
\begin{align*}
    \zeta(z) &= \inf\left\{k\in\mathbb{N}: \sum_{i=1}^{\min(k,r^*)} (y_i-x_i)\mathbb{P}(A_i) > z\right\} ;\\
    \xi(z) &= \inf\left\{k\in\mathbb{N}: \sum_{j=r^*}^{\min(k,n)} (y_j-x_j) \mathbb{P}(A_j)> z\right\}.
\end{align*}
Let $z_1 < z_2 < \ldots < z_{\ell}$ denote the $\ell$ points where either $\zeta$ or $\xi$ are discontinuous, and $z_0 = 0$. 
Consider the following sequence of random variables: $X_0=X$ and 
\begin{equation}
\label{eq:mps-charac}
    X_i = X_{i-1} - \frac{z_i-z_{i-1}}{\mathbb{P}(A_{\zeta(z_i)})} \id_{A_{\zeta(z_i)}} + \frac{z_i-z_{i-1}}{\mathbb{P}(A_{\xi(z_i)})} \id_{A_{\xi(z_i)}},\quad \text{for every }i\in[\ell].
\end{equation}
Then, we have $X_{\ell} \laweq Y$, and \eqref{eq:mps-charac} is a left-handle mean-preserving spread for every $i\in [\ell]$. Hence, $X\le_{\rm lhcx} Y$. 
The proof for item \ref{item:charac-right-handle} is similar and thus omitted. 
\hfill$\square$

\subsection{Proof of Proposition~\ref{prop:handlecontinuity}}
\label{proof:handlecontinuity}

We prove the claim for $\le_{\rm lhcx}$; the case of $\le_{\rm rhcx}$ then follows directly by Corollary~\ref{co:r&lhcx}. 
Consider two sequences of random variables $\{X_n : n\in\mathbb{N}\}\subseteq \mathcal{S}$ and $\{Y_n : n\in\mathbb{N}\}\subseteq \mathcal{S}$ such that $X_n \stackrel{\rm B}{\to} X$ and $Y_n \stackrel{\rm B}{\to} Y$ for some $X,Y\in\mathcal{S}$, and $X_n\le_{\rm lhcx}Y_n$ for every $n\in\mathbb{N}$. We aim to prove that $X\le_{\rm lhcx} Y$. Applying Proposition~\ref{prop:charac-handles}\ref{item:charac-left-handle},
it is sufficient to verify that $\min X\ge \min Y$ and $X^{\rm co}\le Y^{\rm co}$ on $\{X^{\rm co}>\min X\}$, where $(X^{\rm co},Y^{\rm co})$ are comonotonic with $X^{\rm co}\laweq X$ and $Y^{\rm co}\laweq Y$.

First, it is well-know that convex order is closed under bounded convergence. Note that left-handle convex order is stronger than convex order, and thus, we have $X\le_{\rm cx}Y$. This implies $\min X\ge \min Y$. Next, we prove $X^{\rm co}\le Y^{\rm co}$ on $\{X^{\rm co}>\min X\}$.
Let $V$ be a uniform random variable on $[0,1]$; see \citet[Proposition A.31]{FS16} for its existence on an atomless space. Denote by $q_Z$ the quantile function of $Z\in\mathcal S$, i.e., $q_Z(s)=\inf\{t: \p(Z\le t)\ge s\}$ for $s\in(0,1)$. It holds that $(q_X(V),q_Y(V))$ are comonotonic. Further, denote by $\alpha_0=\p(X=\min X)$.
It suffices to verify that for any $\alpha,\beta\in(0,1)$ with $\beta>\alpha>\alpha_0$, 
\begin{align}\label{eq-2025-1209-1}
\int_\alpha^\beta q_{X}(s)\d s\le \int_\alpha^\beta q_{Y}(s)\d s.
\end{align}
To see this, because 
$X_n \stackrel{\rm B}{\to} X$ and $Y_n \stackrel{\rm B}{\to} Y$, we have $q_{X_n}\to q_{X}$ and $q_{Y_n}\to q_{Y}$ bounded almost surely. Hence, it is straightforward to verify that for large enough $n$, $q_{X_n}(s)>\min X_n$ for all $s\in[\alpha,\beta]$. It follows from $X_n\le_{\rm lhcx} Y_n$ and Proposition~\ref{prop:charac-handles}\ref{item:charac-left-handle}  that $q_{X_n}(s)\le q_{Y_n}(s)$ for all $s\in[\alpha,\beta]$. Therefore, 
\begin{align*}
\int_\alpha^\beta q_{X}(s)\d s
=\lim_{n\to\infty} \int_\alpha^\beta q_{X_n}(s)\d s\le \lim_{n\to\infty}\int_\alpha^\beta q_{Y_n}(s)\d s=\int_\alpha^\beta q_{Y}(s)\d s.
\end{align*}
Hence, \eqref{eq-2025-1209-1} have been verified, and this completes the proof.\hfill$\square$

\subsection{Proof of Theorem~\ref{th:right-handle-r-i-p-if}}
\label{proof:right-handle-r-i-p}

Suppose that $\mathcal{C}$ does not contain full-indemnity contracts; their possible inclusion follows directly from Theorem~\ref{th:1st-r-i-p-standard} and the fact that weak risk aversion is weaker than right-handle risk aversion.  

\textbf{For standard contract sets:} Suppose that 
    \begin{align*}
\mathcal C=\{C_{\pi,d}: X\mapsto \pi - (X-d)_+\mid(\pi,d)\in\Theta\},
    \end{align*}
and $\{\pi + d : (\pi,d)\in\Theta \}$ is dense in $\mathbb{R}$.
We first argue that, for any continuous risk preference $\succsim$, right-handle risk aversion implies insurance propensity to $\mathcal{C}$.
Let $X,Y\in\mathcal S$ with $X\laweq Y$. For $C_{\pi,d}\in\mathcal C$, we aim to verify that $X+C_{\pi,d}(X)\succsim X+C_{\pi,d}(Y)$. The case that $X\le d$ is trivial. We assume that $\p(X>d)>0$. 
Define
\begin{align*}
&Z_1=X\wedge d-X+Y-d,~Z_2=X-X\wedge d~{\rm and}~Z_3=X-Y;\\
&A_1=\{Y>d,X<Y\},~A_2=\{Y<d,X>d\}~{\rm and}~A_3=\{X>Y>d\}.
\end{align*}
It is straightforward to check that $Z_1$, $Z_2$ and $Z_3$ are nonnegative on $A_1$, $A_2$ and $A_3$, respectively. 
By standard calculation, we have
\begin{align}
X+C_{\pi,d}(Y)=\pi+X-(Y-d)_+
&=\pi+X\wedge d-Z_1\id_{A_1}+Z_2\id_{A_2}+Z_3\id_{A_3}\notag\\
&=X+C_{\pi,d}(X)-Z_1\id_{A_1}+Z_2\id_{A_2}+Z_3\id_{A_3}.\label{eq:pitorho}
\end{align}
Since $X(\omega)+C_{\pi,d}(X(\omega))=\max\{X+C_{\pi,d}(X)\}$ for all $\omega\in A_2\cup A_3$, it follows from \eqref{eq:pitorho} and Proposition~\ref{prop:charac-handles} that $X+C_{\pi,d}(X)\le_{\rm rhcx} X+C_{\pi,d}(Y)$. Therefore, right-handle risk aversion
implies $X+C_{\pi,d}(X)\succsim X+C_{\pi,d}(Y)$.

Next, we prove that, for any risk preference $\succsim$, insurance propensity to $\mathcal{C}$ implies right-handle risk aversion. Consider any $Y\in \mathcal{S}$ resulting of a right-handle mean-preserving spread of some $X\in \mathcal{S}$: for some $A_1,A_2\in\mathcal{F}$, $\delta_1,\delta_2>0$, we have $Y=X-\delta_1 \id_{A_1} + \delta_2 \id_{A_2}$ with $\delta_1 \p(A_1) = \delta_2 \p(A_2)$ and $X(\omega) = \max X$ for all $\omega\in A_2$. We will prove that $X \succsim Y$. The general case for $X\le_{\rm rhcx} Y$ follows by transitivity. 

 For now, let us assume that $\mathbb{P}(A_2) = k\mathbb{P}(A_1)$ for some $k\in\mathbb{N}$. This implies $k\delta_2 = \delta_1$.
Take $d^{*}, \pi^{*}\in\mathbb{R}$ such that $d^{*}+\pi^* = \max X$ and $C_{d^*,\pi^*}\in\mathcal{C}$; from the condition $\{d+\pi : (d,\pi)\in\Theta\}$ is dense in $\R$, these $d^*$ and $\pi^*$ exist. Partition $A_2$ into $k$ equiprobable events and denote these events $A_2^{(1)},\ldots, A_2^{(k)}$. Let $Z = X - \pi^* +  \delta_2 \sum_{i=1}^{k} i \id_{A_2^{(i)}}$.

Because $A_1$ and $A_2^{(1)}$ are equiprobable, there exists a bijection $\xi: A_2^{(1)} \to A_1$  such that $\mathbb{P}(B) = \mathbb{P}(\xi(B))$ for every $\mathcal{F}$-measurable $B\subseteq A_2^{(1)}$. 
Define $W\in\mathcal{S}$ such that
\begin{equation}
    W(\omega) = \begin{cases}
        X(\omega ) - \pi^*, & \omega\in\Omega\backslash (A_1\cup A_2),\\
        \max X - \pi^* + k\delta_2, &\omega \in A_1,\\
        X(\xi(\omega)) - \pi^*, & \omega\in A_2^{(1)},\\
        \max X - \pi^* + (i-1)\delta_2,  &\omega\in A_2^{(i)}, \, i\in \{2,\ldots,k\}, \text{if }k>1.
    \end{cases}
    \label{eq:megaW}
\end{equation}
Thus, $Z\laweq W$ and 
    $C_{\pi^*,d^*}(W) = C_{\pi^*,d^*}(Z) - \delta_1 \id_{A_1} + \delta_2 \id_{A_2}$; see Table~\ref{tab:deductible} for an illustration.
We have
\begin{align}\label{eq-20251126-1}
Z + C_{\pi^*,d^*}(Z)&=Z+\pi^*-(Z-d^*)_+\notag\\
&=(\pi^*+Z)\wedge (\pi^*+d^*)\notag\\
&=\left(X+\delta_2\sum_{i=1}^ki\id_{A_2^{(i)}}\right)\wedge \max X=X,
\end{align}
where the last equality holds because $X(\omega)=\max X$ for all $\omega\in\bigcup_{i=1}^k A_2^{(i)}$. Therefore,
\begin{equation*}
    X = Z + C_{\pi^*,d^*}(Z) \succsim Z + C_{\pi^*,d^*}(W) = Z + C_{\pi^{*}, d^*}(Z) - \delta \id_{A_1} + \delta \id_{A_2} =
    X- \delta \id_{A_1} + \delta \id_{A_2}=Y, 
\end{equation*}
where the risk preference is established from the insurance propensity assumption. 

\begin{table}
\centering
\scriptsize\resizebox{\textwidth}{!}{
\begin{tabular}{cccccccc}
\toprule
 &  $\Omega\backslash(A_1\cup A_2)$ & $A_1$ & $A_2^{(1)}$ & $A_2^{(2)}$ & $\cdots$ & $A_2^{(k)}$\\
\midrule
$Z$ &  $X(\omega)- \pi^*$ & $X(\omega)- \pi^*$ & $\max X - \pi^* + \delta_2$ & $\max X- \pi^* + 2\delta_2$ & $\cdots$ & $\max X- \pi^* + k\delta_2$\\
$W$ & $X(\omega)- \pi^*$ &  $\max X - \pi^* + k\delta_2$ & $X(\xi(\omega))- \pi^*$& $\max X - \pi^* + \delta_2$& $\cdots$& $\max X - \pi^* + (k-1)\delta_2$\\
$C_{\pi^*,d^*}(Z)$ & $\pi^*$ & $\pi^*$ & $\pi^*-\delta_2$& $\pi^*-2\delta_2$ & $\cdots$& $\pi^*-k\delta_2$\\
$C_{\pi^*,d^*}(W)$ & $\pi^*$ & $\pi^*-k \delta_2$& $\pi^*$&  $\pi^*-\delta_2$&  $\cdots$& $\pi^*-(k-1)\delta_2$ \\
$Z + C_{\pi^*,d^*}(Z)$ & $X(\omega)$ & $X(\omega)$ & $\max X$ & $\max X$ & $\cdots$& $\max X$\\
$Z + C_{\pi^*,d^*}(W)$  & $X(\omega)$ & $X(\omega)-k\delta_2$ & $\max X + \delta_2$ & $\max X + \delta_2$ & $\cdots$& $\max X + \delta_2$\\
\bottomrule
\end{tabular}}
\caption{Illustration of the contracts' construction for the proof of Theorem~\ref{th:right-handle-r-i-p-if}.   
} 
 \label{tab:deductible}   
\end{table}

Next, consider the case that $\p(A_1)/\p(A_2)$ is a rational number that is equal to $m/n$, where $m,n\in\N$. Then, we can partition $A_1$ into $m$ equally probable events, denoted by $\{A_1^{(i)}:i\in[m]\}$. Let $X_0=X$, and for $i\in[m]$, define
\begin{align*}
X_i=X_{i-1}- {\delta_1}\id_{A_1^{(i)}}+\frac{\delta_2}{m}\id_{A_2}.
\end{align*}
It is straightforward to see that $X_m=Y$, and $X_i$ is a right-handle mean preserving spread of $X_{i-1}$ for $i\in[m]$. Noting that $\p(A_2)=n\p(A_1^{(i)})$ for $i\in[m]$, and it follows from the previous arguments that $X_{i-1}\succsim X_i$. The transitivity of $\succsim$ further implies $X\succsim Y$. 

Let us now consider the general case that $\p(A_1)/\p(A_2)\in\R$. There exist sequences of rational numbers $\{p_i : i\in\N\}$ and  $\{q_i:i\in\N\}$ such that $p_i\uparrow \p(A_1)$ and $q_i\uparrow \p(A_2)$, respectively. Let $\{A_1^{(i)}: i\in\N\}\subseteq\mathcal F$ and $\{A_2^{(i)} : i\in\N\}\subseteq\mathcal F$ be two sequences satisfying $A_1^{(i)}\uparrow A_1$, $A_2^{(i)}\uparrow A_2$, and $\p(A_1^{(i)})=p_i$ and $\p(A_2^{(i)})=q_i$ for all $i\in\N$. Define
\begin{equation*}
Y_i=X-\delta_1^{(i)}\id_{A_1^{(i)}}+\delta_2^{(i)}\id_{A_2^{(i)}},
\end{equation*}
where the sequences $\{\delta_1^{(i)} : i\in\N\}$ and  $\{\delta_2^{(i)} : i\in\N\}$ satisfy 
$\delta_1^{(i)}\to \delta_1$, $\delta_2^{(i)}\to \delta_2$, and $\delta_1^{(i)}\p(A_1^{(i)})=\delta_2^{(i)}\p(A_2^{(i)})$ for all $i\in\N$. By the previous arguments for the case that $\p(A_1)/\p(A_2)$ is a rational number, we have $X\succsim Y_i$ for all $i\in\N$. The fact that $Y_i\stackrel{\rm B}{\to} Y$ and the continuity of $\succsim$ yield $X\succsim Y$.

\textbf{For $\rho$-priced contract sets.}
    For the ``if" statement: Take any premium principle $\rho$. Suppose that 
    \begin{equation*}
\mathcal C=\{C_{d}: X\mapsto \rho((X-d)_+) - (X-d)_+\mid d\in\Pi\},
    \end{equation*}
and $\Pi$ is dense in $\mathbb{R}$. The argument supporting that right-handle risk aversion implies insurance propensity to $\mathcal{C}$ is the same as for standard contract sets, but switching $\pi$ for $\rho((X-d)_+)$ in \eqref{eq:pitorho}. 
For the converse implication, consider any $Y\in\mathcal{S}$ resulting of a right-handle mean-preserving spread of some $X \in \mathcal{S}$: 
for some $A_1,A_2\in\mathcal{F}$, $\delta_1,\delta_2>0$, we have $Y=X-\delta_1 \id_{A_1} + \delta_2 \id_{A_2}$ with $\delta_1 \p(A_1) = \delta_2 \p(A_2)$ and $X(\omega) = \max X$ for all $\omega\in A_2$. We will prove that $X \succsim Y$. The general case for $X\le_{\rm rhcx} Y$ follows by transitivity. 
Assume $\mathbb{P}(A_2) = k\mathbb{P}(A_1)$, for some $k\in\mathbb{N}$. Partition $A_2$ into $k$ equiprobable events and denote them by $A_2^{(1)},\ldots,A_2^{(k)}$. 

Let $\pi^* = \rho( (X + \delta_2 \sum_{i=1}^k i \id_{A_2^{(i)}} - \max X)_+)$. Consider the sequence $\{d_i: i\in\mathbb{N}\}\subseteq \Pi$ converging to $d^* = \max X - \pi^*$; such a limit exist because $\Pi$ is dense in $\mathbb{R}$. For ease of exposition, suppose $d^*\in\Pi$; the general case follows by continuity. 

Write $Z = X - \pi^* + \delta_2 \sum_{i=1}^k i \id_{A_2^{(i)}}$, and define $W$ as in \eqref{eq:megaW}. Therefore, $Z\laweq W$ and $C_{d^*}(W) = C_{d^*}(Z) - \delta_1 \id_{A_1} + \delta_2 \id_{A_2}$; see above. 
Note that 
\begin{equation*}
    \rho((Z-d^*)_+) = \rho\left(\left(X + \delta_2 \sum_{i=1}^k i\id_{A_2^{(i)}} - \max X \right)_+\right) = \pi^*. 
\end{equation*}
Sharing the similar arguments of \eqref{eq-20251126-1} yields 
$
    Z + C_d^*(Z)=X.
$
It follows that
\begin{align*}
    X = Z + C_{d^*}(Z) \succsim Z + C_{d^*}(W) = Z + C_{d^*}(W) - \delta_1\id_{A_1} + \delta_2\id_{A_2} =
    X- \delta_1\id_{A_1} + \delta_2\id_{A_2}=Y
\end{align*}
with the risk preference established from the insurance propensity assumption, and noting that $\rho((Z-d^*)_+) = \rho((W-d^*)_+)$ by law invariance. 
The remaining arguments allowing to extend the results to general $A_1$ and $A_2$ are the same as for standard contract sets. \hfill$\square$

\subsection{Proof of Theorem~\ref{th:right-handle-r-i-p-shape}}

We argue that the implication right-handle risk aversion$\implies$insurance propensity to $\mathcal{C}$ only holds true if $\mathcal{C}$ contains deductible-only or full-indemnity contracts exclusively. Define the risk preference $\succsim$ as such
\begin{equation*}
    X \succsim Y \iff X \le_{\rm rhcx} Y.
\end{equation*}
By Proposition~\ref{prop:handlecontinuity}, $\succsim$ is continuous. Therefore, for $\mathcal{C}$ to satisfy right-handle risk--insurance parity, we must have
\begin{equation}
Z + C(Z) \le_{\rm rhcx} Z + C(W), \quad \text{for every }C\in\mathcal{C}, \text{ and }Z,W\in\mathcal{S} \text{ with }Z\laweq W.
\label{eq:czw-right-handle}
\end{equation}
 Otherwise, $\succsim$ is right-handle risk averse but not insurance propense to $\mathcal{C}$. 
Below we aim to verify that  if there is $C\in\mathcal C$ that is neither a deductible-only nor a full-indemnity contract, then we can construct $Z,W\in\mathcal S$ with $Z\laweq W$ such that $Z + C(Z) \not\le_{\rm rhcx} Z + C(W)$. It is straightforward to check that a contract $C$ is either a deductible-only or a full-indemnity contract if and only if, for any $y\in\R$, its indemnity function satisfies that $(I(y)-I(x))/(y-x)>0$ for some $x<y$ implies $(I(z)-I(y))/(z-y)=1$ for all $z>y$. Hence, we assume by contradiction that there is $C\in\mathcal C$ such that its indemnity function satisfies
\begin{align}\label{eq-20251201-1}
\frac{I(y_0)-I(x_0)}{y_0-x_0}\in(0,1]\quad{\rm and}\quad \frac{I(z_0)-I(y_0)}{z_0-y_0}\in[0,1)\quad\text{for some }x_0<y_0<z_0.
\end{align}
The indemnity function $I$ is thus LC. Define $Z,W\in\mathcal{S}$ with joint probability masses as follows: 
\begin{align*}
\p(Z=x_0,W=z_0)=\p(Z=y_0,W=x_0)=\p(Z=z_0,W=y_0)=\frac{1}{3}.
\end{align*}
Obviously, we have $Z\laweq W$.
Write $X=Z+C(X)$ and $Y=Z+C(W)$; it holds that for $i=1,2,3$,
\begin{align*}
\p(X=x_i)=\p(Y=y_i)=\frac{1}{3}.
\end{align*}
where 
\begin{align*}
&x_1=x_0 + \pi -I(x_0),~x_2=y_0 + \pi -I(y_0),~x_3=z_0 + \pi -I(z_0);\\
&y_1=x_0 + \pi -I(z_0),~y_2=y_0 + \pi -I(x_0),~y_3=z_0 + \pi -I(y_0).
\end{align*}
It follows from \eqref{eq-20251201-1} that
\begin{align*}
x_1\le x_2< x_3,~ y_2>x_2~{\rm and}~y_3>x_2.
\end{align*}
This implies $X^{\rm co}\not\ge Y^{\rm co}$ on the event $\{X^{\rm co}<\max X\}$, where $(X^{\rm co}, Y^{\rm co})$ are comonotonic with $X^{\rm co}\laweq X$ and $Y^{\rm co}\laweq Y$. Proposition \ref{prop:charac-handles}\ref{item:charac-right-handle} yields $X\not\le_{\rm rhcx} Y$, leading to a contradiction. Therefore, we have concluded that $\mathcal{C}$ only contains deductible-only or full-indemnity contracts. For a $\rho$-priced contract, replace $\pi$ by $\rho(I(Z))$ above and the argument remains the same.  
\hfill$\square$

\subsection{Proof of Proposition~\ref{prop:right-handle-r-i-p-double-equiv}}

From Theorems~\ref{th:right-handle-r-i-p-if} and~\ref{th:right-handle-r-i-p-shape}, 
it only remains to argue that the set of price parameters, denoted by $\Pi$, must be dense in $\mathbb{R}$. Suppose $\Pi$ is not dense in $\mathbb{R}$, and we will show that there exists a continuous risk preference for which insurance propensity to $\mathcal{C}$ does not imply right-handle risk aversion. Define the risk preference $\succsim$ by 
\begin{equation*}
    X\succsim Y \iff \substack{  X\le_{\rm rhcx} Y \text{ and }\max X \text{ can be approached by a sequence in } \Pi,\\ \text{or either }X=\mathbb{E}[Y] \text{ or } X\laweq Y.}
\end{equation*}
It is straightforward to verify that $\succsim$ is law invariant and transitive. In the proof of Proposition~\ref{prop:handlecontinuity}, we argued that for sequences $\{X_n : n\in\mathbb{N}\}\subseteq \mathcal{S}$ and $\{Y_n : n\in\mathbb{N}\}\subseteq\mathcal{S}$ such that $X_n\stackrel{\rm B}{\to}X$, $Y_n\stackrel{\rm B}{\to} Y$ and $X_n \le_{\rm rhcx} Y_n$ for every $n\in\mathbb{N}$, either $\max X_n \to \max X$ or $X\laweq Y$. Thus, $\succsim$ is continuous. Consider any $Z\in\mathcal{S}$ and $C_\theta\in\mathcal{C}$ that has the form $C_\theta(X)=\pi+(X-d)_+$ with $\pi+d=\theta$. If $\max Z < d$, then $Z + C_{\theta}(Z) = \mathbb{E}[Z + C_{\theta}(W)]$ for every $W\in\mathcal{S}$, $W\laweq Z$; and $Z+C_{\theta}(Z)\succsim Z+ C_{\theta}(W)$. If $\max Z \geq d$, we have $\theta = \max (Z+C_{\theta}(Z))$. We also previously argued, in the proof of the ``if" direction, that $Z+C_{\theta}(Z) \le_{\rm rhcx} Z+C_{\theta}(W)$ for every $W\in\mathcal{S}$ with $W\laweq Z$. Thus, $Z+C(Z)\succsim Z+C(W)$ also for this case. The continuous risk preference $\succsim$ is therefore insurance propense to $\mathcal{C}$. However, by definition, it is not right-handle risk averse.  
\hfill$\square$

\subsection{A counterexample to Proposition~\ref{prop:right-handle-r-i-p-double-equiv} in the absence of LC}
\label{sec:ex-non-LC}

\begin{example}
\label{ex:non-LC}
Consider the standard contract set $\mathcal{C}$ comprising all deductible-indemnity contracts, for every combination of premium $\pi\in\mathbb{R}$ and deductible $d\in\mathbb{R}$. By Theorem~\ref{th:right-handle-r-i-p-if}, $\mathcal{C}$ satisfies right-handle risk--insurance parity. 
Define $I_* : X + k \id_{\{Y>\tau\}}$ for some $k>0$ and $\tau\in\mathbb{R}$.
Note that $I_*$ is not LC. 
Define $C_* : X \mapsto \pi - I_*(X)$, where $\pi\in\mathbb{R}$, and let $\mathcal{C}^* = \mathcal{C}\cup\{C_*\}$. We will show that $\mathcal{C}^*$ also satisfies right-handle risk--insurance parity. Because $\mathcal{C}\subseteq \mathcal{C}^*$, the claim that, for every continuous risk preference, insurance propense to $\mathcal{C}^*$ implies right-handle risk aversion holds directly. We must show that right-handle risk aversion implies insurance propensity to $\{C_*\}$. We aim to verify that 
\begin{equation*}
    X + C_*(X) \succsim X + C_*(X)\quad \text{for every }X,Y\in\mathcal{S}\text{ with }X\laweq Y.  
\end{equation*}
The case that $Z \leq \tau$ follows directly from the ``if" direction of Theorem~\ref{th:1st-r-i-p-standard} and the fact that right-handle risk aversion is stronger than weak risk aversion. We next suppose that $\mathbb{P}(X>\tau)>0$. 
Define
\begin{align*}
    &Z_1 = Y + k - X - k\id_{\{X>\tau\}},~Z_2 = X+k -Y~\text{and}~Z_3 = X- Y;\\
    &A_1 = \{Y>\tau, X<Y\},~A_2 = \{Y<\tau, X>\tau\}~\text{and}~A_3 = \{X>Y>\tau\}.
\end{align*}
Note that $Z_1$, $Z_2$ and $Z_3$ are nonnegative respectively on $A_1$, $A_2$ and $A_3$. We have 
\begin{align}
    X + C_*(Y) = \pi + X - Y - k\id_{\{Y>d\}}
    &= \pi + X - X - k\id_{\{Y>d\}} - Z_1 \id_{A_1} + Z_2 \id_{A_2} + Z_3 \id_{A_3}\notag\\
    &=  X +C_*(X) - Z_1 \id_{A_1} + Z_2 \id_{A_2} + Z_3 \id_{A_3}.
\label{eq:ex5-fin}
\end{align}
Because $X +C_*(X)(\omega) = \max(X + C_*(X))$ for all $\omega\in A_2\cup A_3$, we have $X + C_*(X) \leq_{\rm rhcx} X + C_*(Y)$
from \eqref{eq:ex5-fin} and Proposition~\ref{prop:charac-handles}.  
Hence, right-handle risk aversion implies $X + C_*(X) \succsim X + C_*(Y)$.
\end{example}

\subsection{Proof of Theorem~\ref{th:left-handle-r-i-p-if}}
\label{proof:left-handle-r-i-p}
For any $X,Y\in \mathcal{S}$,  $X\le_{\rm lhcx}Y$ if and only if $-X \le_{\rm rhcx} -Y$; see Corollary~\ref{co:r&lhcx}. 
In the case of standard contract sets, note that, for any $Z\in\mathcal{S}$, 
$\pi,\lambda\in\mathbb{R}$, $$\pi - Z\wedge\lambda = - ( (\lambda - \pi) - ((-Z)-(-\lambda))_+,$$ that is: for $C$ a limit-only contract, $X\mapsto-C(-X)$ is a deductible-only contract. 
 For $\rho$-priced contract sets, we note that, for any $Z\in\mathcal{S}$, $\zeta\in\mathbb{R}$, it holds that 
\begin{equation*}
    \rho(Z\wedge \zeta - \zeta) - (Z \wedge \zeta - \zeta) = - \left( -\rho(-((-Z)-(-\zeta))_+) - ((-Z) - (-\zeta))_+ \right).
\end{equation*}
In other words, for a $\rho$-priced contract $C$ of the form specified in the theorem statement, $X\mapsto-C(-X)$ is a $\rho^*$-priced deductible-indemnity contract with deductible $-\zeta$, where $\rho^*(X) = -\rho(-X)$ for all $X\in\mathcal S$. 
In both cases, when re-expressing $-X$ in terms of an insurance contract, we have $-X = -Z + (-C(-Z))$, and thus everything follows from the proof of Theorem~\ref{th:right-handle-r-i-p-if} by symmetry.

\subsection{Proof of Theorem~\ref{th:left-handle-r-i-p-shape}}

The result follows from Theorem~\ref{th:right-handle-r-i-p-shape} once we evoke the same symmetry argument as in the proof of Theorem~\ref{th:left-handle-r-i-p-if}.

\subsection{Proof of Proposition~\ref{th:left-handle-r-i-p-double-equiv}}

The result follows directly from \ref{prop:right-handle-r-i-p-double-equiv} by symmetry. Note that for a limit-only contract $C_{\pi, \lambda}(X)=\pi-Z\wedge \lambda$ the contract $X\mapsto-C_{\pi, \lambda}(-X)$ is a deductible-only contract that has premium $\lambda - \pi$ and deductible $ -\lambda$; therefore, the condition for exhaustiveness of price parameters in Proposition~\ref{prop:right-handle-r-i-p-double-equiv}
 becomes simply one on premiums here, as the $\lambda$'s cancel out.

\subsection{Proof of Theorem~\ref{prop:yaari-right-handle}}
\label{sect:proof-yaari-right-handle}


For \ref{item:star-1}: We first show that a weighting function $h$ satisfies star-shapedness at $1$ if and only if the following condition holds:
\begin{align}\label{eq-iffDTrhra}
\frac{h(y)-h(x)}{y-x} \leq \frac{1 - h(z)}{1-z} \quad \text{for all }  0\le x< y\le z< 1.
\end{align}
Suppose that \eqref{eq-iffDTrhra} holds. Then, letting $y=x$, we have
\begin{align}\label{eq-stareq}
h(z) \leq \frac{z-x}{1-x} + \frac{1-z}{1-x} h(x) \quad \text{for all }  0\le x<z< 1.
\end{align}
Taking $\lambda = (z-x)/(1-x)$, \eqref{eq-stareq} becomes 
\begin{equation*}
    h(\lambda + (1-\lambda)x) \leq \lambda + (1-\lambda) h(x), \quad \text{for all }  x,\lambda\in(0,1),
\end{equation*}
which is exactly star-shapedness at $1$. Conversely, star-shapedness at $1$ implies \eqref{eq-stareq}. Thus, 
\begin{align*}
 \frac{h(y)-h(x)}{y-x} \leq \frac{1-h(y)}{1-y} \leq \frac{1-h(z)}{1-z}.
\end{align*}
where the inequalities follow directly from two distinct applications of \eqref{eq-stareq}.
We have conclude that satisfies star-shapedness at $1$ is equivalent to the condition in \eqref{eq-iffDTrhra}.

Next, we argue that the agent is right-handle risk averse if and only if their weighting function satisfies \eqref{eq-iffDTrhra}. First, suppose that the agent is right-handle risk averse. Let $B_0,\ldots, B_3$ be a measurable partition of $\Omega$, and $p_i=\mathbb{P}(B_i)$. Define $X$ and $Y$ as such:
\begin{align*}
X=0\id_{B_0}-\id_{B_1}-2\id_{B_2}-3\id_{B_3}\quad {\rm and}\quad Y=p_2\id_{B_0}-\id_{B_1}-(2+p_0)\id_{B_2}-3\id_{B_3}.
\end{align*}
It is straightforward to check that $X\le_{\rm rhcx} Y$, and therefore, 
\begin{align*}
&1+h(p_1+p_2+p_3) + h(p_1+p_2) + h(p_1)\\
&=U_h(-X)\ge U_h(-Y)\\
&=(1-p_2) + (1+p_2)h(p_1+p_2+p_3) + (1+p_0)h(p_1+p_2) + (1-p_0)h(p_1).
\end{align*}
This implies
\begin{align*}
\frac{1 - h(p_1+p_2+p_3)}{1-(p_1+p_2+p_3)} - \frac{h(p_2+p_1)}{p_2}\le 0\quad \text{for all } p_1,p_2,p_3\in[0,1]~{\rm with}~p_1+p_2+p_3\le 1.
\end{align*}
Let $x=p_1$, $y=p_1+p_2$ and $z=p_1+p_2+p_3$. The above equation reduces to \eqref{eq-iffDTrhra}.

Conversely, suppose that \eqref{eq-iffDTrhra} holds. Loss functions in \eqref{eq:yaari} are translation invariant; for convenience, let us suppose $X,Y$ are nonpositive (so that $-X,-Y$ are nonnegative).  Let $Y= X - \delta_1 \id_{A_1} + \delta_2\id_{A_2}$ such that $\delta_1 \p(A_1) = \delta_2\p(A_2)$ and $X(\omega) = \max X$ for all $\omega\in A_2$; the general case where $X \leq_{\rm rhcx} Y$ will follow from law invariance, transitivity and continuity of the induced risk preference. 

Also, assume $X$ is constant on $A_1$; otherwise, partition $A_1$ into $A_1^{(1)}, \ldots, A_1^{(n)}$, $n\in\mathbb{N}$, where $X$ is constant on each and consider successive right-handle mean-preserving spreads with $\delta_1^{(i)} = \delta_1$, $A_2^{(i)} = A_2$ and $\delta_2^{(i)} = \delta_2 \mathbb{P}(A_1^{(i)})/\mathbb{P}(A_1)$ for every $i\in[n]$.  

Then, for any weighting function $h$ that is star-shaped at $1$, we have
\begin{align*}
    U_h(-X) - U_h(-Y) &= \int_{0}^{\max( -Y)} \big(h(\mathbb{P}(-X>x)) - h(\mathbb{P}(-Y>x)) \big)\mathrm{d}x  \\
    &= \int_{\max X}^{\max Y} \big(1 - h(\mathbb{P}(Y<x))\big) \mathrm{d}x  -  \int_0^{\max X} \big(h(\mathbb{P}(Y<x)) - h(\mathbb{P}(X<x))\big) \mathrm{d}x\\
    &\geq \delta_2\big(1- h(1-\mathbb{P}(A_2))\big) - \delta_1 \sup_{x\in [0,\max X]}\left(h(\mathbb{P}(Y<x)) - h(\mathbb{P}(X<x)) \right)\\
    &\geq \delta_2\big(1- h(1-\mathbb{P}(A_2))\big) - \delta_1 \frac{\mathbb{P}(A_1)}{\mathbb{P}(A_2)}  \big(1- h(1-\mathbb{P}(A_2))\big) = 0,
\end{align*}
where the second inequality follows from \eqref{eq-iffDTrhra} and the observation that $\mathbb{P}(X < x) \le \mathbb{P}(Y < x) \le 1- \mathbb{P}(A_2)$ for every $x\in[0,\max X]$.

 For \ref{item:star-0}:  The proof follows a similar argument as for Theorem~\ref{prop:yaari-right-handle} and is thus omitted for brevity.
 The statement of item \ref{item:star-0-1} follows directly from \ref{item:star-1} and \ref{item:star-0}. \hfill$\square$


\subsection{Proof of Proposition~\ref{th:monotone-r-i-p}}
\label{proof:monotone-r-i-p}
   It is evident that the set of full-indemnity contracts (write $\mathcal{C}^{\rm full}$) does not satisfy monotone risk--insurance parity because monotone risk aversion is stronger in general than weak risk aversion. Next, suppose that for some contract set $\mathcal{C} \not\subseteq \mathcal{C}^{\rm full}$, monotone risk aversion implies insurance propensity to $\mathcal{C}$. We will prove by contradiction that this is absurd.

Define the relation $\succsim$ by 
\begin{equation}
    X\succsim Y \iff \substack{ \text{There exists }\Theta\in\mathcal{S} \text{ such that}\\ Y\laweq X + \Theta,~\mathbb{E}[\Theta] = 0, \text{ and }X\text{ and }\Theta \text{ are comonotonic}.}
    \label{eq:proof-monotone-succsim}
\end{equation}
For any random vector $(X,Y)$, let $(X^{\rm co}, Y^{\rm co})$ be its comonotonic version, meaning $X\laweq X^{\rm co}$, $Y\laweq Y^{\rm co}$, and $(X^{\rm co}, X^{\rm co})$ is comonotonic. 
Note that the right-hand side of \eqref{eq:proof-monotone-succsim} is equivalent to requiring that for $\Theta = Y^{\rm co} - X^{\rm co}$, we have that $\Theta$ and $X^{\rm co}$ are comonotonic.   
It is straightforward to verify that $\succsim$ is transitive, law invariant, and continuous. It is also monotone risk averse by definition. Suppose that $\succsim$ is insurance propense to $\mathcal{C}$; we will show that this creates a contradiction.

Consider any $C\in\mathcal{C}\backslash \mathcal{C}^{\rm full}$, with premium $\pi$ and indemnity function $I$. Let $x,y \in\mathbb{R}$ satisfy $x < y $. By increasingness and LC, $x - I(x) \leq y - I(y)$. By LC, there exists $\varepsilon>0$ such that 
\begin{equation}
I(y+\varepsilon) - I(x) < y - x.\label{eq:proof-monotone-4}
\end{equation}
Let $z = y + \varepsilon$, and note that $y - I(y) \leq z - I(z)$. 

Define $Z,W\in\mathcal{S}$ with joint probability masses as follows:
\begin{equation*}
    \mathbb{P}(Z = x, W= x) = \mathbb{P}(Z = y, W= z) = \mathbb{P}(Z = z, W= y) = \frac{1}{3}. 
\end{equation*}
Write $X^* = Z + C(Z)$, $Y^* = Z + C(W)$ and $\Theta^* = Y^{*{\rm co}} - X^{*{\rm co}}$. By insurance propensity, $X^* \succsim Y^*$, so $\Theta^*$ and $X^{*{\rm co}}$ must be comonotonic. 
Since $ x - I(x) < y - I(z) $ by \eqref{eq:proof-monotone-4}, we have
\begin{align*}
    \mathbb{P}(X^{* {\rm co}} = x + \pi - I(x), Y^{* {\rm co}}= x + \pi - I(x)) &= \mathbb{P}(X^{* {\rm co}} = y + \pi - I(y), Y^{* {\rm co}}= y + \pi - I(z))\\ &= \mathbb{P}(X^{* {\rm co}} = z + \pi - I(z), Y^{* {\rm co}}= z + \pi - I(y)) = \frac{1}{3}.
\end{align*}
As a result,
\begin{align*}
    \mathbb{P}(X^{* {\rm co}} = x + \pi - I(x), \Theta^* = 0) &= \mathbb{P}(X^{* {\rm co}} = y + \pi - I(y), \Theta^* = I(y) - I(z)) \\&= \mathbb{P}(X^{* {\rm co}} = z + \pi - I(z), \Theta^*=  I(z) - I(y)) = \frac{1}{3}.
\end{align*}
Hence, $\Theta^*$ and $X^{*\mathrm{co}}$ are comonotonic if and only if $I(y) = I(z)$. Therefore, for $X^*\succsim Y^*$ to possibly hold, we must have either
  $I(y) = I(y + \varepsilon)$ or $x - I(x) = y - I(y)$ (as we had arbitrarily  assumed $x - I(x) < y - I(y)$). Since the choices of $x$ and $y$ were arbitrary, this must hold for every $x,y\in\mathbb{R}$.
One can check that this implies that $I$ is one of the following cases: a constant function, the identity function or a limit-indemnity function.

Next, let $b,c \in\mathbb{R}$ satisfy $b < c $. By increasingness and LC, $b - I(b) \leq c - I(c)$. Assume that $b - I(b) < c - I(c)$. By LC, there exists $\varepsilon>0$ such that 
\begin{equation}
I(c) - I(b - \varepsilon) <  c - b.\label{eq:proof-monotone-5}
\end{equation}
Let $a = b - \varepsilon$, and note that $a - I(a) \leq b - I(b)$. 

Define $Z^{\dagger},W^{\dagger}\in\mathcal{S}$ with joint probability masses as follows:
\begin{equation*}
    \mathbb{P}(Z^{\dagger} = a, W^{\dagger}= b) = \mathbb{P}(Z^{\dagger} = b, W^{\dagger}= a) = \mathbb{P}(Z^{\dagger} = c, W^{\dagger}= c) = \frac{1}{3}. 
\end{equation*}
Write $X^{\dagger} = Z + C(Z)$, $Y^{\dagger} = Z + C(W)$ and $\Theta^{\dagger} = Y^{{\dagger}{\rm co}} - X^{{\dagger}{\rm co}}$. By insurance propensity, $X^{\dagger} \succsim Y^{\dagger}$, so $\Theta^{\dagger}$ and $X^{{\dagger}{\rm co}}$ must be comonotonic. 
Since $ b - I(a) < c - I(c) $ by \eqref{eq:proof-monotone-5}, we have
\begin{align*}
    \mathbb{P}(X^{{\dagger} {\rm co}} = a + \pi - I(a), Y^{{\dagger} {\rm co}}= a + \pi - I(b)) &= \mathbb{P}(X^{{\dagger} {\rm co}} = b + \pi - I(b), Y^{{\dagger} {\rm co}}= b + \pi - I(a))\\ &= \mathbb{P}(X^{{\dagger} {\rm co}} = c + \pi - I(c), Y^{{\dagger} {\rm co}}= c + \pi - I(c)) = \frac{1}{3}.
\end{align*}
As a result,
\begin{align*}
    \mathbb{P}(X^{{\dagger} {\rm co}} = a + \pi - I(a), \theta^{\dagger} = I(a) - I(b) &= \mathbb{P}(X^{{\dagger} {\rm co}} = b + \pi - I(b), \Theta^{\dagger} = I(b) - I(a)) \\&= \mathbb{P}(X^{{\dagger} {\rm co}} = c + \pi - I(c), \Theta^{\dagger}=  0) = \frac{1}{3}.
\end{align*}
Hence, $\Theta^{\dagger}$ and $X^{\dagger\mathrm{co}}$ are comonotonic if and only if $I(a) = I(b)$. Therefore, for $X^*\succsim Y^*$ to possibly hold, we must have either
  $I(b - \varepsilon) = I(b)$ or $b - I(b) = c - I(c)$ (as we had arbitrarily assumed $b - I(b) < c - I(c)$). Since the choices of $b$ and $c$ were arbitrary, this must hold for every $b,c\in\mathbb{R}$.
It is straightforward to see that this implies that $I$ is one of the following cases: a constant function, the identity function or a deductible-indemnity function.

Combining this with the previous information on the shape of $I$, we obtain that $I$ is either constant (not a valid indemnity function) or the identity function, which contradicts that $C\not\in\mathcal{C}^{\rm full}$. 
The risk preference $\succsim$ is thus not insurance propense to $\mathcal{C}$ as soon as $\mathcal{C}\not\subseteq \mathcal{C}^{\rm full}$, and, as we have already argued, the set of full-indemnity contracts does not satisfy monotone risk--insurance parity. The argument also holds for a $\rho$-priced contract by simply switching $\pi$ for $\rho(I(Z))$ above. 
\hfill$\square$

\subsection{Proof of Proposition~\ref{prop:L0-Linfty}}
\label{proof:L0-Linfty}
    The implications \ref{item:L1}$\implies$\ref{item:S} is evident. To prove the converse implication \ref{item:S}$\implies$\ref{item:L1},     
    take any $X,Y\in L^p$ such that $X\laweq Y$, and define
    \begin{equation*}
        X_n=2^{-n}\left\lfloor 2^n\max(\min(X, n), -n)\right\rfloor; \quad \quad Y_n=2^{-n}\left\lfloor 2^n\max(\min(Y,n), -n)\right\rfloor, \quad n\in\mathbb{N}.
    \end{equation*}
    We have $X_n,Y_n\in \mathcal{S}$, $X_n\laweq Y_n$, and so, 
    \begin{equation*}
        X_n + C(X_n) \succsim X_n + C(Y_n) \quad \quad  \text{for every }n\in\mathbb{N}.
    \end{equation*}
   If $p<\infty$, we have $X_n + C(X_n) \stackrel{L^p}{\to} X + C(X)$ and $X_n + C(Y_n) \stackrel{L^p}{\to} X + C(Y)$, where $\stackrel{L^p}{\to}$ means convergence with respect to the $L^p$ norm. If $p=\infty$,  we have $X_n + C(X_n) \stackrel{\rm B}{\to} X + C(X)$ and $X_n + C(Y_n) \stackrel{\rm B}{\to} X + C(Y)$. Therefore, $X+C(X) \succsim X+C(Y)$ by continuity of $\succsim$. 
\hfill$\square$

\subsection{Proof of Proposition~\ref{prop:extension-to-Linfty}}
\label{proof:extension-to-Linfty}

Take any $X,Y$ in $L^p$. For every $n\in\mathbb{N}$, let 
\begin{equation*}
\Psi_n = \left\{\left(\frac{0}{2^n}, \frac{1}{2^n} \right], \left( \frac{1}{2^n}, \frac{2}{2^n}\right], \ldots, \left(\frac{2^n-1}{2^n},\frac{2^n}{2^n}\right)\right\}
\end{equation*}
and $\Pi_n = U^{-1}(\Psi_n)$ where $U$ is a uniformly distributed random variable on $(\Omega, \mathcal{F}, \mathbb{P})$. Define $\mathcal{F}_n = \sigma(\Pi_n)$ and 
\begin{equation*}
    X_n = \mathbb{E}\left.\left[F_X^{-1}(U) \right| \mathcal{F}_n \right]; \quad \quad
    Y_n = \mathbb{E}\left.\left[F_Y^{-1}(U) \right| \mathcal{F}_n \right]. 
\end{equation*}
\citet[Theorem~3]{MMWW25} showed that if $X\leq_{\rm cx} Y$ on $(\Omega, \mathcal{F}, \mathbb{P})$, then $X_n\leq_{\rm cx} Y_n$ on $(\Omega, \mathcal{F}_n, \mathbb{P}_{|\mathcal{F}_n})$ for every $n\in\mathbb{N}$. If $p<\infty$, as $\lim_{n\to\infty}\mathcal{F}_n = \mathcal{F}$, we have $X_n \stackrel{L^p}{\to} X$ and $X_n \stackrel{L^p}{\to} X$ by martingale convergence in $L^p$. If $p=\infty$, as $\lim_{n\to\infty}\mathcal{F}_n = \mathcal{F}$, we have $X_n \stackrel{\rm B}{\to} X$ and $X_n \stackrel{\rm B}{\to} X$ by martingale convergence.


If $\succsim$ is insurance propense to some set of contracts $\mathcal{C}$ on $(\Omega, \mathcal{F}, \mathbb{P})$, it is as well for $(\Omega, \mathcal{F}_n, \mathbb{P}_{|\mathcal{F}_n})$, from Proposition~\ref{prop:L0-Linfty} and since $\mathcal{S}(\Omega, \mathcal{F}_n, \mathbb{P}_{|\mathcal{F}_n}) \subseteq \mathcal{S}(\Omega, \mathcal{F}, \mathbb{P})$. Using all those facts, and evoking continuity and law invariance of $\succsim$, it is sufficient, for most of the proofs above, to prove the claim on atomic probability spaces (emulated by $(\Omega, \mathcal{F}_n, \mathbb{P})$), and the general case will follow. This is equivalent to leaving our probability space as it is but proving the claim for risk preferences defined on $\mathcal{S}$.
Other specifics for each proofs are easily resolved by evoking continuity of $\succsim$. 
\hfill$\square$


\begin{thebibliography}{10}


\bibitem[\protect\citeauthoryear{Aouani, Chateauneuf and Ventura}{Aouani et al.}{2021}]{ACV21} Aouani, Z., Chateauneuf, A., and Ventura, C. (2021). Propensity for hedging and ambiguity aversion. \emph{Journal of Mathematical Economics}, \textbf{97}, 102543.


\bibitem[\protect\citeauthoryear{Arrow}{1963}]{A63}
Arrow, K. J. (1963). Uncertainty and the welfare economics of medical care.
\emph{American Economic Review}, \textbf{53}(5), 941–-973.



\bibitem[\protect\citeauthoryear{Borch}{1985}]{B85}
Borch, K. H. (1985). A theory of insurance premiums. \emph{Geneva Papers on Risk and Insurance}, 192--208.

\bibitem[\protect\citeauthoryear{B\"uhlmann}{1980}]{B80}
B\"uhlmann, H. (1980). An economic premium principle. \emph{ASTIN Bulletin}, \textbf{11}(1), 52--60.

\bibitem[\protect\citeauthoryear{Chateauneuf and Cohen}{1994}]{CC94}
Chateauneuf, A. and Cohen, M. (1994). Risk seeking with diminishing marginal utility in a non-expected utility model. 
\emph{Journal of Risk and Uncertainty}, \textbf{9}(1), 77--91.


\bibitem[\protect\citeauthoryear{Chew and Mao}{1995}]{CM95}
Chew, H. S. and Mao, H. M. (1995). A Schur concave characterization of risk aversion for non-expected utility preferences. \emph{Journal of Economic Theory}, \textbf{67}(2), 402--435.


\bibitem[\protect\citeauthoryear{Cohen}{1995}]{C95}
Cohen, M. D. (1995). Risk-aversion concepts in expected-and non-expected-utility models. \emph{Geneva Papers on Risk and Insurance Theory}, \textbf{20}, 73--91.


\bibitem[\protect\citeauthoryear{Cummins}{1988}]{C88} 
Cummins, J. D. (1988). Risk‐based premiums for insurance guaranty funds. \emph{Journal of Finance}, \textbf{43}(4), 823--839. 



\bibitem[\protect\citeauthoryear{Dekel}{1989}]{D89}
Dekel, E. (1989). Asset demand without the independence axiom. \emph{Econometrica}, \textbf{57}, 163--169.



\bibitem[\protect\citeauthoryear{Denneberg}{1990b}]{D90b}
{Denneberg, D.} (1990). Premium calculation: Why standard deviation should be replaced by absolute deviation. \textit{ASTIN Bulletin}, \textbf{20}, 181--190.

\bibitem[\protect\citeauthoryear{Denuit et al.}{Denuit et al.}{2005}]{DDGK05}
{Denuit, M., Dhaene, J., Goovaerts and M.J., Kaas, R. } (2005). \emph{Actuarial Theory for Dependent Risks}.  Wiley, Chichester.


\bibitem[\protect\citeauthoryear{Dickson}{Dickson}{2017}]{D17}
Dickson, D. (2017). \emph{Insurance Risk and Ruin}. Cambridge University Press.

\bibitem[\protect\citeauthoryear{Dubra et al.}{Dubra et al.}{2004}]{DMO04}
Dubra, J., Maccheroni, F. and Ok, E. A. (2004). Expected utility theory without the completeness axiom. \emph{Journal of Economic Theory}, \textbf{115}(1), 118--133.


\bibitem[\protect\citeauthoryear{F\"ollmer and Schied}{F\"ollmer and Schied}{2016}]{FS16}
F\"ollmer, H.~and Schied, A.~(2016). \emph{Stochastic Finance. An Introduction in Discrete Time}. (Fourth Edition.) {Walter de Gruyter, Berlin}.


\bibitem[\protect\citeauthoryear{Gaillardetz and Lakhmiri}{Gaillardetz and Lakhmiri}{2011}]{GL11}
Gaillardetz, P., and Lakhmiri, J. Y. (2011). A new premium principle for equity‐indexed annuities. \emph{Journal of Risk and Insurance}, \textbf{78}(1), 245--265.



\bibitem[\protect\citeauthoryear{Ghossoub et al.}{Ghossoub et al.}{2025}]{GRW25}
Ghossoub, M., Ren, Q. and Wang, R. (2025). Counter-monotonic risk allocations and distortion risk measures. \emph{Scandinavian Actuarial Journal}, forthcoming.



\bibitem[\protect\citeauthoryear{Harris and Townsend}{Harris and Townsend}{1981}]{HT81}
Harris, M. and Townsend, R. M. (1981). Resource allocation under asymmetric information. \textit{Econometrica}, \textbf{49}(1), 33--64.

\bibitem[\protect\citeauthoryear{Herden}{Herden}{1989}]{H89}
Herden, G. (1989). On the existence of utility functions. \textit{Mathematical Social Sciences}, \textbf{17}(3), 297-313.

\bibitem[\protect\citeauthoryear{Huberman et al.}{Huberman et al.}{1989}]{HMS83}
Huberman, G., Mayers, D. and Smith Jr, C. W. (1983). Optimal insurance policy indemnity schedules. \emph{The Bell Journal of Economics}, \textbf{14}(2) 415--426.

\bibitem[\protect\citeauthoryear{Maccheroni et al.}{Maccheroni et al.}{2025}]{MMWW25}
Maccheroni, F., Marinacci, M., Wang, R. and Wu, Q. (2025). Risk aversion and insurance propensity. \emph{American Economic Review}, \textbf{115}(5), 1597--1649.	

\bibitem[\protect\citeauthoryear{Machina and Schmeidler}{Machina and Schmeidler}{1992}]{MS92}
 Machina, M. J. and Schmeidler, D. (1992). A more robust definition of subjective probability. \emph{Econometrica}, \textbf{60}(4), 745-–780.


\bibitem[\protect\citeauthoryear{Markowitz}{Markowitz}{1952}]{M52}
Markowitz, Harry M. (1952). Portfolio selection. \emph{Journal of Finance}, \textbf{7}: 77--91.




\bibitem[\protect\citeauthoryear{M{\"u}ller and Stoyan}{M{\"u}ller and Stoyan}{2002}]{MS02}
M{\"u}ller, A. and Stoyan, D. (2002). \emph{Comparison Methods for Stochastic Models and Risks}. Wiley.


\bibitem[\protect\citeauthoryear{Myerson}{Myerson}{1979}]{M79}
Myerson, R. B. (1979). Incentive compatibility and the bargaining problem. \textit{Econometrica}, \textbf{47}(1), 61--73.
	

\bibitem[\protect\citeauthoryear{Pratt}{Pratt}{1964}]{P64}
Pratt, J. W. (1964). Risk aversion in the small and in the large. \emph{Econometrica}, \textbf{32}(1/2), 122-–136.


\bibitem[\protect\citeauthoryear{Principi et al.}{Principi et al.}{2025}]{PWW25}
Principi, G., Wakker, P. and Wang, R. (2025). Anticomonotonicity for preference axioms: The natural counterpart to comonotonicity. \emph{Theoretical Economics}, \textbf{20}(3), 831--855. 




\bibitem[\protect\citeauthoryear{Puccetti and Wang}{Puccetti and
  Wang}{2015}]{PW15}
Puccetti, G. and Wang R.  (2015).
Extremal dependence concepts.
 \emph{Statistical Science},  \textbf{30}(4), 485--517.


\bibitem[\protect\citeauthoryear{Quiggin}{Quiggin}{1992}]{Q92}
Quiggin, J. (1992). Increasing risk: another definition. In \emph{Progress in decision, utility and risk theory}, Kluwer, Dordrecht, 239--248.



\bibitem[\protect\citeauthoryear{Rothschild and Stiglitz}{Rothschild and Stiglitz}{1970}]{RS70}
Rothschild, M. and Stiglitz, J. E. (1970). Increasing risk: I. A definition. \textit{Journal of Economic Theory}, \textbf{2}(3), 225–-243.

 \bibitem[\protect\citeauthoryear{R{\"u}schendorf}{R{\"u}schendorf}{2013}]{R13}
R{\"u}schendorf, L. (2013).
  {\em Mathematical Risk Analysis. Dependence, Risk Bounds, Optimal
  Allocations and Portfolios}.
  Springer, Heidelberg.

    \bibitem[\protect\citeauthoryear{Shaked and Shanthikumar}{Shaked and Shanthikumar}{2007}]{SS07}
   Shaked, M. and Shanthikumar, J. G. (2007). \emph{Stochastic Orders}. Springer, New York.





\bibitem[\protect\citeauthoryear{Ward}{Ward}{1954}]{W54}
{Ward, L. E. (1954). Partially ordered topological spaces. \emph{Proceedings of the American Mathematical Society}, \textbf{5}(1), 144-161.}

\bibitem[\protect\citeauthoryear{Wang et al.}{Wang et al.}{1997}]{WYP97}
{Wang, S., Young, V. R. and Panjer, H. H.} (1997). Axiomatic characterization of insurance prices. \emph{Insurance: Mathematics and Economics}, \textbf{21}(2), 173--183.

\bibitem[\protect\citeauthoryear{Wang}{Wang}{2000}]{W00}
Wang, S. S. (2000). A class of distortion operators for pricing financial and insurance risks. \emph{Journal of Risk and Insurance}, \textbf{67}(1), 15--36.


\bibitem[\protect\citeauthoryear{Yaari}{Yaari}{1987}]{Y87}
{Yaari, M. E.} (1987). The dual theory of choice under risk. \emph{Econometrica}, \textbf{55}(1), 95--115.


\end{thebibliography}
\end{document}